\documentclass[11pt, a4paper]{article}

\usepackage[utf8]{inputenc}
\usepackage[T1]{fontenc}
\usepackage{lmodern}

\usepackage[a4paper, margin=1in]{geometry}
\usepackage[mathcal]{euscript}
\usepackage{amsmath}    
\usepackage{amssymb}    
\usepackage{amsthm}     
\usepackage{bbm}        
\usepackage{mathtools}  
\usepackage{physics}    
\usepackage{braket}     
\usepackage{dsfont}     
\usepackage{tikz}
\usepackage{tcolorbox}
\usepackage{relsize}	
\usepackage{xparse}  	
\usepackage{MnSymbol}	
\usepackage{authblk}    
\usepackage{graphicx}   
\usepackage{float}      
\usepackage{subcaption} 
\usepackage{xpatch}
\usepackage{setspace}
\usepackage{enumitem}
\usepackage{etoolbox}
\usepackage{mdframed} 	
\usepackage{xpatch}
\usepackage{multicol}
\usepackage[dvipsnames]{xcolor} 
\usepackage{hyperref}
\usepackage{doi}
\usepackage[backend=biber, sorting=none, style=numeric]{biblatex}

\hypersetup{
	colorlinks=true,
	linkcolor=MidnightBlue,
	citecolor=ForestGreen,
	urlcolor=RoyalBlue
}

\newtheoremstyle{plainroman} 
{}                          
{}                          
{\normalfont}               
{}                          
{\bfseries}                 
{.}                         
{.5em}                      
{}                          

\theoremstyle{plainroman}
\newtheorem{theorem}{Theorem}
\newtheorem{lemma}[theorem]{Lemma}     
\newtheorem{cor}[theorem]{Corollary} 
\newtheorem{prop}[theorem]{Proposition}

\newtheorem{definition}[theorem]{Definition}

\theoremstyle{definition}

\newcommand{\st}[1]{\star_{_{\mathcal{#1}}}}
\newcommand{\dt}[1]{\cdot_{_{\mathcal{#1}}}}

\DeclareMathOperator*{\argmax}{argmax}
\DeclareMathOperator*{\argmin}{argmin}

\newtcolorbox{dbox}[1][]{colback=green!5!white, colframe=white, boxrule=0pt, 
	fonttitle=\bfseries, #1}

\newtcolorbox{tbox}[1][]{colback=blue!5!white, colframe=white, boxrule=0pt, 
	fonttitle=\bfseries, #1}

\renewcommand{\eqref}[1]{Eq.~(\ref{#1})}
\renewcommand{\tr}[1]{\mathrm{Tr}[{#1}]}

\newcommand{\sumin}{\sum_{i=1}^n}
\newcommand{\sumid}{\sum_{i=1}^d}
\newcommand{\tiy}{\Theta^\dagger (I_{\mathcal Y})}
\newcommand{\cpxy}{\mathrm{CP} (\mathcal X, \mathcal Y)}
\newcommand{\pxy}{\mathbb P_{\mathcal X \otimes \mathcal Y}}
\newcommand{\xoy}{\mathcal X \otimes \mathcal Y}
\newcommand{\cpt}{\mathrm{CPT} (\mathcal X, \mathcal Y)}
\newcommand{\cpty}{\mathrm{CPT} (\mathcal Y, \mathcal X)}
\newcommand{\cphi}{\mathfrak{C}(\Phi)}
\newcommand{\cth}{\mathfrak{C}(\Theta)}

\newcommand{\fc}{\mathfrak{C}}
\newcommand{\fj}{\mathfrak{J}}

\newcommand{\h}{\mathcal{H}}
\newcommand{\x}{\mathcal{X}}
\newcommand{\y}{\mathcal{Y}}
\newcommand{\z}{\mathcal{Z}}
\newcommand{\inn}{{i \in [n]}}
\newcommand{\ot}{\otimes}
\newcommand{\ix}{I_\x}
\newcommand{\iy}{I_\y}

\newcommand{\ic}{\intercal}
\newcommand{\iv}{^{-1}}
\newcommand{\ihf}{^{-\frac12}}
\newcommand{\hf}{^{\frac12}}
\newcommand{\ra}{\rangle}
\newcommand{\la}{\langle}
\newcommand{\qaq}{\quad \text{and} \quad}
\newcommand{\mfr}[1]{\mathfrak{#1}}

\newcommand{\trz}[1]{\mathrm{Tr}_\z[{#1}]}
\newcommand{\kb}[1]{\ketbra{#1}{#1}}
\newcommand{\vc}{\mathrm{vec}}
\newcommand{\mm}{\mathbb{M}}
\newcommand{\mx}{\mathbb{M}_{\x}}
\newcommand{\mxy}{\mathbb{M}_{\x, \y}}
\newcommand{\dx}{\mathbb{D}_{\x}}
\newcommand{\px}{\mathbb{P}_{\x}}

\newcommand{\te}{\Theta}
\newcommand{\kp}{\mathrm{K}}
\newcommand{\kpr}{\mathrm{K}_{\rho}}

\newcommand{\ols}{\ostar_{\mathrm{LS}}}
\newcommand{\plc}{\Pi_{\Lambda, C}}
\newcommand{\dst}{\mathrm{Dist}}
\newcommand{\dsta}{\mathrm{Dist}_{\mathsf{A}}}
\newcommand{\dstb}{\mathrm{Dist}_{\mathsf{B}}}
\newcommand{\dstab}{\mathrm{Dist}_{\mathsf{AB}}}

\newcommand{\pab}{p_{\mathsf{AB}}}
\newcommand{\pa}{p_{\mathsf{A}}}

\newcommand{\mrm}[1]{\mathrm{#1}}
\newcommand{\msf}[1]{\mathsf{#1}}

\newcommand{\lic}{\Lambda^{-1}[C]}

\newcommand{\bb}[1]{\mathbb{#1}}

\newcommand{\cn}{\mathbb{C}^n}

\newcommand{\bbp}{\mathbb{P}}

\newcommand{\cl}[1]{\mathcal{#1}}

\newcommand{\bpq}{\mathrm{B}(P,Q)}
\newcommand{\fpq}{\mathrm{F}(P,Q)}

\let\oldtableofcontents\tableofcontents
\renewcommand{\tableofcontents}{{\hypersetup{linkcolor=black}
\oldtableofcontents}}

\begin{document}
	
	\title{\vspace*{-2cm}Projections with Respect to Bures Distance and Fidelity: Closed-Forms and Applications
}
	\date{}
	
	\author[1,2]{A. Afham\thanks{Email: \texttt{afham@nus.edu.sg}}}
	\author[1]{Marco Tomamichel}
	
	\affil[1]{\small Centre for Quantum Technologies, National University of Singapore, Singapore 117543, Singapore}
	\affil[2]{\small Centre for Quantum Software and Information, University of Technology Sydney, NSW 2007, Australia}

	\maketitle
    \vspace*{-0.5cm}
	\begin{abstract}
        We derive simple and unified closed-form expressions for projections with respect to fidelity (equivalently, the Bures and purified distances) onto several sets of interest. These include projections of bipartite positive semidefinite (PSD) matrices onto the set of PSD matrices with a given marginal, and projections of ensembles of PSD matrices onto the set of PSD decompositions of a given matrix, with important special cases corresponding to projections onto the set of quantum channels (via the Choi isomorphism) and onto the set of measurements.
        We introduce \emph{prior-channel decompositions} of completely positive (CP) maps, which uniquely decompose any CP map into a prior PSD matrix and a quantum channel. This decomposition generalizes the Choi--Jamiołkowski isomorphism by establishing a bijective correspondence between arbitrary bipartite PSD matrices and channel--state pairs, and we show that it arises naturally from the fidelity projections developed here.  
        
        As applications, we show that the \emph{pretty good measurement}—associated with a weighted ensemble—is the fidelity projection of the ensemble onto the set of measurements, and that the Petz recovery map—associated with a reference state and forward channel—is the projection of a CP map (constructed from the channel-state pair) onto the set of reverse quantum channels, thereby recasting the well-known identification of the Petz map with quantum Bayes’ rule in information-geometric terms. 
        Our results also provide an information-geometric underpinning of the Leifer--Spekkens quantum state over time formalism [\href{https://doi.org/10.1103/PhysRevA.88.052130}{Leifer and Spekkens, Phys. Rev. A 88, 052130 (2013)}].
        
	\end{abstract}
	
	
    \tableofcontents
	
	\section{Introduction}
	The \textit{projection problem}---the task of finding the closest point in a set to a given point~\cite{Boyd2004}---is of great importance in applied sciences.
	As quantum information involves various (sub)sets such as the sets of density matrices, quantum channels, separable states, and measurements, among others, various problems of interest in quantum information can be cast as projection problems or, more generally, constrained optimization problems.
	Some examples include quantum state discrimination~\cite{Watrous2018Theory, bae2015quantum}, separability measures~\cite{Vedral1997Quantifying, Vedral1998Entanglement, Streltsov2010Linking}, coherence measures~\cite{Baumgratz2014Quantifying}, and state and process tomography~\cite{smolin2012Efficient, guta2020fast, surawy2022projected}.
	
	Perhaps the earliest and most well-known projection problem is to project a point to a line with respect to the Euclidean distance.
	The solution, as the Pythagorean theorem would tell us, is obtained by dropping the perpendicular from the point onto the line. 
	
	Formally, we are given a set $\mathcal V$ which has a squared distance (or a divergence) $\mathrm{D}$ defined on it. 
	Then the \textit{projection of} $x \in \mathcal V$ to a (non-empty and compact) subset $ \mathcal S \subset \mathcal V$ is the (a) solution to the optimization problem
	\begin{equation}
		\Pi_{\mathcal S}^\mathrm{D} [x] := \argmin_{y \in \mathcal S} \mathrm{D}(y; x),
	\end{equation}
	
	The projection problem depends on the choice of the distance measure $\rm{D}$ and the \textit{feasible set} $\mathcal{S}$. 
	Typical choices for the distance measure include trace distance, Euclidean distance, KL divergence, and the distance induced by the spectral norm. 
	For feasible sets that are \textit{simple enough}, one can derive closed-forms for the projection~\cite{Boyd2004}. 
	However, for most feasible sets, even if they are convex, one typically has to resort to numerical methods~\cite{Liu2009Efficient, duchi2008efficient, Dhillon2008Matrix, Krichene2015Efficient}.
	
	By its relation to \textit{fidelity}, a distance which is crucial to quantum information is the Bures distance. 
	The squared Bures distance between positive semidefinite (PSD) matrices $P$ and $Q$ is defined as
	\begin{equation}
		\mathrm{B}(P,Q) := \mathrm{Tr}[P+Q] - 2 \mathrm{F}(P,Q),
	\end{equation} 
	where $\operatorname{F}(P, Q) := \left\| \sqrt Q \sqrt P \right\|_1 =\operatorname{Tr}\left[\sqrt{Q ^{\frac12} P Q ^{\frac12} }\right] $ is the (square root) \textit{fidelity} between $P$ and $Q$~\cite{Uhlmann2011Transition}.
	When restricted to density matrices, the squared Bures distance takes the form $\mrm{B}(\rho, \sigma) = 2[1 - \mrm{F}(\rho, \sigma)]$. 
	
	Attractive features of fidelity include monotonicity under quantum channels\footnote{Note that the Hilbert--Schmidt distance is \textit{not} monotonic under quantum channels~\cite{ozawa2000entanglement}.}~\cite{Watrous2018Theory, Uhlmann2010}, strong concavity~\cite{bhatia2018strong}, and operational interpretations. 
	The Bures distance is the natural distance of the (Riemannian) Bures manifold of positive definite matrices~\cite{bhatia2019bures, bengtsson2017geometry, Afham2025Riemanniangeometric}.  
	Fidelity also features in the definition of the \textit{purified distance}, a metric---on the set of subnormalized states---with applications in quantum cryptography~\cite{tomamichel2010duality, tomamichel2013thesis}.
	The celebrated Fuchs-van de Graaf inequalities~\cite{fuchs2002cryptographic} show that the squared Bures distance and purified distance lower- and upper-bound the trace distance.
	By its relation to the Wasserstein distance between Gaussian distributions, the \textit{Bures--Wasserstein} distance~\cite{bhatia2019bures} has also garnered considerable interest in machine learning (see~\cite{chewi2020gradient, altschuler2021averaging} and references therein). 
	Motivated by these features, in this article, we consider projections with respect to the Bures distance.

	\subsection{Problem Statement}
	Let $P \in \bbp_\h$, a PSD matrix acting on a finite-dimensional Hilbert space $\h$, be the point to project. 
	All the feasible sets we consider are specified by a pair $(\Lambda, C) \in \mrm{CPT}(\h,\mathcal K) \times \bbp_\mathcal K^+$, where $\Lambda$ is a quantum channel and $C$ is a positive definite matrix.\footnote{We refer to Section~\ref{Sec:Preliminaries} for notations.} 
	Formally, the feasible set is defined as
	\begin{equation}
		\Lambda^{-1}[C] := \{Q \in \bbp_\h : \Lambda(Q)=C\},
	\end{equation}
	that is, the PSD preimage of a fixed \emph{output matrix} $C$ under $\Lambda$.
	Such sets are called \textit{spectrahedra}~\cite{vinzant2014spectrahedron, chiribella2024extreme}. 
	Examples from quantum information include the sets of quantum states and quantum channels, and the collection of all POVMs.  
	
	\begin{table}[t]
		\centering
		\begin{tabular}{|c|c|p{90mm}|} 
			\hline
			\textbf{$(\Lambda, C)$} & Projection & Remarks \\
			\hline
			$(\operatorname{Tr}, c)$ & $ c \dfrac{P}{\operatorname{Tr}[P]}$ & $c \in \mathbb R_+$. $c=1$ yields trace normalization. \\
			$(\operatorname{Tr_\mathcal Y}, C)$ & $ P \st{X} (P_\mathcal X ^{-1} \# C)^2 $ & $P \in \mathbb P_{\mathcal X \otimes \mathcal Y}$. $C \in \mathbb P_\mathcal X$. $C = I_\mathcal X$ gives projection to Choi matrices of $\mathrm{CPT} (\mathcal X, \mathcal Y)$. \\
			$(\operatorname{Tr_\mathcal X}, C)$ & $ P \st{Y} (P_\mathcal Y ^{-1} \# C)^2 $ & $P \in \mathbb P_{\mathcal X \otimes \mathcal Y}$. $C \in \mathbb P_\mathcal Y$. $C = I_\mathcal Y$ gives projection to Choi matrices of $\mathrm{CPU} (\mathcal X, \mathcal Y)$. \\
			$(\operatorname{\Delta}, C)$ & $ \displaystyle \sum_{i,j=1}^{d} \sqrt{\frac{C_i C_j}{P_{ii} P_{jj}}} P_{ij} \ketbra{i}{j}  $& $\Delta$ is the completely dephasing map. $C = \sumid C_i \ketbra{i}{i}$ is a diagonal positive definite matrix. \\
			$(\operatorname{\Lambda}, C)$ & $ P \star \left(\displaystyle \bigoplus_{i=1}^{n} [P_i] ^{-1} \# C_i \right)^2 $ & $\Lambda$ is the pinching channel. $C := \oplus_{i=1}^n C_i $ is a block diagonal matrix compatible with $\Lambda$. $P_i$ are the corresponding blocks of $P$ (the form described on the left assumes block-diagonality in computational basis). \\
			$(\operatorname{M}, c)$ & $ P \star \left(\displaystyle \sum_{i=1}^{n} \sqrt{\dfrac{c_i}{\langle E_i, P\rangle} } E_i \right)^2 $ & $\mathrm{M}$ is a projective measurement with orthogonal projectors $\{E_i\}$. $c \in \mathbb R^n_+$ is a positive vector. \\ 
			$(\operatorname{Tr}_{\mathcal Y}, C)$ & $ P_i \star \left(P ^{-1} \# C\right)^2 $ & Corresponds to the projection of the ensemble $\{P_i\}_{i=1}^n$ to the set of decompositions of $C$. Here $P = \sum_{i=1}^{n} P_i$. $C = I_\mathcal X$ gives the projection to the set of $n$-outcome POVMs.
			\\
			\hline
		\end{tabular}
		\caption{Projection closed-form for specific channels. 
			Here $P$ is the point to be projected (with respect to fidelity / Bures distance) to the feasible set $\Lambda^{-1}[C] := \{Q \in \mathbb{P}_\mathcal H: \Lambda(Q) = C\}$, where $\Lambda \in \mrm{CPT}(\mathcal{H, K})$ and $C \in \mathbb P^+_\mathcal K$.
			We define $A \star B := B \hf A B \hf$ for $B \geq 0$ and $Z \st X X := Z \star (X \ot \iy)$ for $X \in \px$ and $Z \in \pxy$. 
			See Section~\ref{Sec:Preliminaries} for notations. 
		}
		
		\label{Tab:ClosedForms}
	\end{table}

	This framework encompasses several important examples.
	If $\Lambda=\Tr$ and $C\equiv c\in\mathbb R_+$, then $\mrm{Tr} \iv [c]$ is the set of PSD matrices with trace $c$, recovering the set of density matrices $\mathbb D_\h$ when $c=1$.
	If $\h \equiv \xoy$, $\mathcal K\equiv \x$, $\Lambda=\mrm{Tr}_\y$, and $C\in\bbp_{\x}$, then
	\begin{equation}
		\Lambda^{-1}[C] = \mrm{Tr}_\y \iv [C] =\{Q\in\pxy:Q_\x=C\}, 
	\end{equation}
	is the set of bipartite matrices with $\x$-marginal equal to $C$.
	When $C\equiv I_\x$, this set is Choi-isomorphic to the set of quantum channels $\cpt$.
	
	Thus, a \emph{Bures projection problem} is specified by a triple $(\Lambda,P,C)\in\mrm{CPT}(\h,\mathcal K)\times\bbp_\h\times\bbp_\mathcal K^+$
	and is defined as
	\begin{equation} \label{Eq:ProblemStatement}
		\Pi_{\Lambda, C} [P]
		:=
		\argmin_{Q \in \Lambda^{-1}[C]} \mathrm{B}(P,Q)
		=
		\argmax_{Q \in \Lambda^{-1}[C]} \operatorname F(P,Q),
	\end{equation}
	where $\Pi_{\Lambda,C}[P]$ denotes the set of \emph{Bures projections} of $P$ onto $\Lambda^{-1}[C]$.
	The equivalence between minimizing the squared Bures distance and maximizing fidelity follows from the expansion of $\mathrm B(P,Q) = \mrm{Tr}[P+Q] - 2 \fpq$ together with the fact that all feasible points satisfy
	$\Tr[Q]=\Tr[\Lambda(Q)]=\Tr[C]$.
	This equivalence may fail for other feasible sets.
	
	Moreover, if  the PSD matrices involved are restricted to subnormalized states $(\{S \in \bbp_d : \tr{S} \leq 1\})$, then our results also hold for projections with respect to the \textit{purified distance}~\cite{tomamichel2013thesis} whose squared version is defined as
	\begin{equation}
		\mrm{d}^2_{\mrm{Pur}}(P, Q) := 1 - \left(\fpq + \sqrt{(1-\tr{P}) (1-\tr{Q}))}\right)^2.   
	\end{equation}
	The fact that our results hold with respect to the purified distance follows from the observation that the trace terms are fixed for a given projection problem.
	
    We emphasize that, for all problems considered here, projections with respect to fidelity and Bures distance are equivalent. Accordingly, we refer to them interchangeably as fidelity (Bures) projections
	If the projection is unique, we slightly abuse notation and identify $\Pi_{\Lambda,C}[P]$ with this unique element.

	We now discuss the (non-)uniqueness of the projection.
	If $P$ is full-rank, the projection is unique by the strict concavity of fidelity~\cite{bhatia2018strong}.
	If $\Lambda(P)\perp C$, all feasible points are equidistant from $P$ and hence are (non-unique) projections.
	Nevertheless, uniqueness holds on the support of $P$: for any pair of projections $Q,Q'\in\Pi_{\Lambda,C}[P]$, we have $P^0 Q P^0 = P^0 Q' P^0$, where $P^0$ denotes the projector onto the support of $P$
	(see Theorem~\ref{Thm:ProjectionUniquenessCombined}).
	For simplicity, we henceforth assume that $P$ is full-rank.

	For any feasible point $Q\in\Lambda^{-1}[C]$, the data-processing inequality (DPI) for fidelity~\cite{Watrous2018Theory} implies
	\begin{equation}\label{Eq:DPIFid}
		\operatorname F(P,Q)\leq \operatorname F(\Lambda(P),\Lambda(Q))  = \operatorname F(\Lambda(P),C).
	\end{equation}
	We call the problem \emph{DPI-saturating} if the above inequality is saturated for some feasible $Q \in \lic$.
	Our main result shows that for such problems, the Bures projection admits a closed form via the \emph{Gamma map}
	\begin{equation}
		\Gamma_{\Lambda,C}[P] :=
		\bigl[\Lambda^\dagger(\Lambda(P)^{-1}\# C)\bigr]
		P
		\bigl[\Lambda^\dagger(\Lambda(P)^{-1}\# C)\bigr],
	\end{equation}
	where $A \# B := A \hf \sqrt{A \ihf B A \hf } A \ihf $ is the \textit{matrix geometric mean}~\cite{BhatiaPD} between positive definite matrices $A, B$. 
	Moreover, if $\Gamma_{\Lambda,C}[P]$ is feasible, then it is necessarily the Bures projection: $    \Gamma_{\Lambda,C}[P]\in\Lambda^{-1}[C] \Longrightarrow \Gamma_{\Lambda,C}[P]=\Pi_{\Lambda,C}[P].$
	The projection problems induced by partial trace, pinching channels, and projective measurements with mutually orthogonal projectors are all DPI-saturating and therefore admit closed-form solutions.

	\subsection{Examples}
	Through concrete examples, we now illustrate how Bures projections arise naturally in quantum information theory and how they have already appeared--often implicitly--in the literature.
	
	\medskip
	
	\paragraph{Normalization to density matrices.}
	We begin with the simplest case. 
	Let $\mathbb{D}_\mathcal{H}$, denoting the set of density matrices on
	$\mathcal{H}\equiv\mathbb C^d$, be the feasible set. 
	This corresponds to the choice $(\Lambda, C) = (\mrm{Tr}, 1)$. 
	Our results show that the Bures projection of an arbitrary PSD matrix
	$P\in\bbp_\mathcal H$ onto $\mathbb{D}_\mathcal{H}$ is simply the trace-normalization:
	\begin{equation}
		\Pi_{\mathbb{D}_\mathcal{H}}[P] \equiv \Pi_{\mrm{Tr}, 1}[P] = \frac{P}{\tr{P}}.
	\end{equation}
	While natural, it is important to emphasize that this operation generally does \emph{not} coincide with projections induced by other distances, such as the
	Hilbert--Schmidt distance.
	In the case of trace distance, the trace-normalization serves as a \textit{non-unique} projection.
	The coincidence between Bures projections and such canonical normalizations is a recurring theme.

	\paragraph{Fixed-marginal projections.}
	Next, consider the set $\mathcal S := \{ Q \in \bbp_{\mathcal{X}\otimes\mathcal{Y}} : Q_\mathcal X = C\}$, 
	where $Q_\mathcal X := \Tr_\mathcal Y[Q]$.
	This corresponds to the choice $(\Lambda, C) = (\mrm{Tr}_\y, C) \in \mrm{CPT}(\xoy, \x) \times \bbp_\x^+$.
	The Bures projection of $P\in\bbp_{\mathcal{X}\otimes\mathcal{Y}}^+$ onto $\mathcal S$ is
	\begin{equation}\label{Eq:MarginalProjIntro}
		\Pi_{\mathcal S}[P]  \equiv \Pi_{\mrm{Tr}_\y, C} = \bigl([P_\mathcal X]^{-1} \# C \otimes I_\mathcal Y\bigr)P \bigl([P_\mathcal X]^{-1} \# C \otimes I_\mathcal Y\bigr).
	\end{equation}
	When $C=I_\mathcal X$, this reduces to the common \textit{partial normalization} used to obtain a CPT map from a CP map under the Choi isomorphism.
	Our results show that this partial normalization is a bona fide projection to the set of CPT maps. 
	
	\paragraph{Ensemble projections and the PGM.}
	Finally, we consider the problem of \textit{ensemble projections}. 
	Let $\mathcal P := (P_i)_{i=1}^n\subset\bbp_\mathcal H$ be the ensemble we wish to project. 
	For $Q \in \bbp_\h^+$, let  
	\begin{equation}
		\mathrm{Dec}_n(Q) : \left\{(Q_i)_{i=1}^n \subset \bbp_\mathcal H : \sumin Q_i = Q\right\},
	\end{equation}
	be the set of all $n$-length decompositions of $Q$. 
	The Bures projection of $\mathcal P$ onto $\mathrm{Dec}_n(Q)$ is given by $\mathcal{Q}:= (Q_i)_{\inn}$ where
	\begin{equation}
		Q_i = (P^{-1}\#Q)\,P_i\,(P^{-1}\#Q),
		\qquad P := \sumin P_i .
	\end{equation}
	The squared Bures distance between two PSD ensembles $\cl P := (P_i)_\inn, \cl Q := (Q_i)_\inn$ of equal length is defined as $\mrm{B}(\mathcal{P,Q}) := \sumin \mrm{B}(P_i,Q_i)$. See Section~\ref{Sec:EnsProj} for further details. 
	For $Q=I_\mathcal H$, this yields the \emph{pretty good measurement}~\cite{hausladen1994pretty},
	providing a geometric interpretation of the PGM as a Bures projection.
	
	\medskip
	
	\subsection{Article Organization}
	We summarize the structure and the central results of this article.
	
	Section~\ref{Sec:Preliminaries} introduces notation and mathematical background.
	In Section~\ref{Sec:ClosedFormProjection} we present our main result: a unified closed-form expression for Bures projections (equivalently, fidelity maximization) of PSD matrices and ensembles onto convex, compact feasible sets defined as PSD preimages of a fixed \emph{output} matrix under a quantum channel, referred to as a \emph{constraint pair}.
	We derive explicit closed forms for several important channels—most notably partial trace—and summarize these formulas in Table~\ref{Tab:ClosedForms}.
	
	In Section~\ref{Sec:CPDecomp} we introduce \emph{prior-channel decompositions}, which uniquely decompose any completely positive map into an unnormalized \emph{prior} PSD matrix and a normalized quantum channel.
	This generalizes the fact that (non-zero) PSD matrices admit a unique decomposition into pairs of a scalar and a density matrix. 
	Moreover, the decomposition admits a geometric interpretation in terms of Bures projections at the level of Choi matrices (and Frobenius projections at the level of Stinespring representations).
	This formalism is used extensively in later sections.
	
	Section~\ref{Sec:ApplicationsAndManifestations} discusses applications of our closed-forms for fidelity projections.
	These include closed-form projections to various sets of interest, geometric interpretations of the Petz recovery map and the pretty good measurement, information geometry of the Leifer--Spekkens state-over-time formalism, a simplified proof and geometric understanding of the quantum minimal change principle, and geometric interpretations of several channel distance measures.
	
	Finally, Section~\ref{Sec:Conclusion} concludes with open problems and directions for future research.

	\subsection{Related Works}
	\textcite{brahmachari2025fixed} study Bures projections onto the set of \textit{invariant PSD matrices} with respect to a group symmetry. 
	Note that they find a fixed-point iteration algorithm (which can be seen as Riemannian gradient descent on Bures manifold~\cite{Afham2025Thesis}) whereas we are interested in closed forms. 
	\textcite{junyi2024convergence} find closed form for projections onto \textit{Bures-(Wasserstein) balls}, where the feasible sets are of the form $\mathcal{S} \equiv \mathcal{S}(Q, r):= \{Q': \mrm{B}(Q,Q') \leq r^2\}$ for a PSD $Q$ and a \textit{radius} $r>0$.  
	\textcite{Afham2022Quantum} study the problem of average fidelity maximization, which can be seen as projected gradient descent on the Bures manifold~\cite{Afham2025Thesis}.
	
	\section{Mathematical Preliminaries} \label{Sec:Preliminaries}
	We introduce the notations and some helpful mathematical preliminaries.
	
	\subsection{Notations}
	For a positive integer $n$, we define the index set $[n] := \{1, \ldots, n\}$.
	For a complex number $z$, $\Re(z)$ denotes its real part. 
	For a complex Hilbert space $\mathcal X = \mathbb{C}^d$, we will use $\mathbb M_\mathcal X, \mathbb H_\mathcal X, \mathbb P_\mathcal X, \mathbb P_\mathcal X^+, \mathbb U_\mathcal X$, and $\mathbb D_\mathcal X $ to denote the set of all square, Hermitian, positive semidefinite (PSD), positive definite (PD), unitary, and density matrices acting on $\mathcal X$.
	The above objects would be denoted by $\mathbb M_d, \mathbb H_d, \mathbb P_d, \mathbb P_d^+, \mathbb U_d$, and $\mathbb D_d$ respectively if we intend to highlight the dimension $d$ over the label $\mathcal X$.
	The identity matrix on $\x$ is denoted as $I_\x$.
	The set of matrices corresponding to linear maps from $\x$ to $\y$ is denoted by $\mathbb{M}_{\x, \y}$ and $\mathbb{U}_{\x, \y}$ denotes the set of isometries from $\x$ to $\y$: $\mathbb{U}_{\x, \y} := \{V \in \mxy : V^\dagger V = \ix \}$.
	The kernel (null space) of an operator $A \in \mathbb M_{\x, \y}$ is defined as $\mathrm{ker}(A) := \{|u \ra  \in \x : A |u\ra  = 0\}$.
	The \textit{support} of an operator is the orthogonal complement of its kernel, and is denoted as $\mathrm{supp}(A)$. 
	For operators $A,B \in \mathbb{H}_\h$, we use $A \ll B$ to denote $\mathrm{supp}(A) \subseteq \mathrm{supp}(B)$, and $A^0$ to denote the orthogonal projector onto $\mrm{supp}(A)$.  
	
	We use the notation $A \geq 0~ (A > 0)$ to convey that $A$ is PSD (PD). 
	For any $A \geq 0$, we use the equivalent notations $A^{\frac12}$ and $\sqrt{A}$ to denote the (PSD) square root of $A$. 
	The Hilbert--Schmidt (Frobenius) inner product on $\mx$ is defined as $\la A, B \ra := \tr{A^\dagger B}$, and the induced norm is denoted as $\|A\|_\mrm{F} := \sqrt{\la A,A \ra}$.
	The Euclidean norm (of vectors) is denoted as $\|u\|_2 := \sqrt{\la u|u \ra} \equiv \sqrt{\la u,u \ra}$, where $\la \cdot, \cdot \ra $ is the standard inner product over $\mathbb C^d$.
	For a function $f :  \mathcal{A} \to \cal B$, and any $\cal C \subseteq \mathcal{A}$, we denote $f[\mathcal C] \equiv \{f(c): c \in \mathcal C\} \subseteq \mathcal{B}$.

	We denote the set of all linear, Hermitian-preserving, completely positive (CP), completely positive and trace-preserving (CPT, a.k.a. quantum channels), and completely positive unital maps from $\mathbb M_\mathcal H$ to $\mathbb M_\mathcal K$ as $\mathrm{LM}(\mathcal H, \mathcal K),$ $\mathrm{HP}(\mathcal H, \mathcal K),$ $\mathrm{CP}(\mathcal H, \mathcal K),$ $\mathrm{CPT}(\mathcal H, \mathcal K),$ and $\mathrm{CPU}(\mathcal H, \mathcal K)$ respectively.
	The (Hilbert--Schmidt) adjoint of a linear map $\te \in \mrm{LM}(\cal H, \cal K)$ is the unique linear map $\te^\dagger  \in \mrm{LM}(\cal K, \cal H)$ satisfying $\la K, \Theta(H) \ra = \la \te^\dagger (K), H \ra $ for all $H \in \mathbb M_{\cal H}$ and $K \in \mathbb M_{\cal K}$.
	The identity map on $\mathbb M_\mathcal{H}$ is denoted as $\mrm{Id}_\mathcal{H} \in \mrm{LM}(\mathcal{H,H})$. 
	
	The \textit{Choi isomorphism} bijectively associates linear maps $\mathrm{LM}(\mathcal H, \mathcal K)$ and bipartite matrices $\mathbb M_{\mathcal H \otimes \mathcal K}$. 
	For a linear map $\Phi \in \mathrm{LM}(\mathcal H, \mathcal K)$ and a bipartite matrix $M \in \mathbb M_{\mathcal H \otimes \mathcal K}$, the Choi isomorphism $\mathfrak{C} : \mathrm{LM} (\mathcal H, \mathcal K) \to \mathbb M_{\mathcal H \otimes \mathcal K} $ and its inverse are defined as
	\begin{equation} \label{Eq:ChoiIsoDef}
		\mathfrak{C}(\Phi) := (\mathrm{Id}_\mathcal H \otimes \Phi)(\Omega) = \sum_{i,j=1}^{\mathrm{dim}(\mathcal H)} \ketbra{i}{j} \otimes \Phi(\ketbra{i}{j}) \quad \text{and} \quad \mathfrak{C}^{-1}(M)(H) = [M (H^\intercal \otimes I_\mathcal K)]_\mathcal K,
	\end{equation}
	for any $H \in \mathbb M_\mathcal{H}$, where $\Omega = \kb{\omega}$ and $|\omega\ra = \sum_{i=1}^{\mrm{dim}(\x)} |i,i\rangle$.
	
	For a Hermitian-preserving map $\Theta \in \mrm{HP}(\h, \mathcal K)$, a simple relation holds between its Choi matrix and that of its adjoint: $\fc(\Theta^\dagger ) = \fc (\Theta)^\ic$. 
	See Appendix~\ref{App:ChoiMatrixAdjoint} for a proof. 
	The relation has been noted previously for CPT~\cite{Johnston2011Quantum} and CP maps~\cite{Stormer2011Mapping}.
	It follows that $\Theta^\dagger (Y) = [\fc(\Theta)^\ic (I_\x \otimes Y)]_\x$.

	We now recall the \textit{star product}---a non-commutative, non-associative product over matrices.
	This notation is closely related to the \textit{Leifer-Spekkens state over time} representation used in~\cite{Leifer2013towards, leifer2006quantum}. 
	For arbitrary $A$ and $B \geq 0$, we define
	\begin{equation}
		A \star B :=  B ^{\frac12} A B ^{\frac12}. 
	\end{equation}
	Two useful identities regarding the star product are \textit{positivity}: $ A \star B \geq 0$ for $A, B \geq 0$ and the \textit{transpose property}: $(A \star B)^\intercal = A^\intercal \star B^\intercal$.
	
	The star product of more than two PSD matrices, when denoted without parentheses, is evaluated from left to right: $A \star B \star C := (A \star B) \star C = C ^{\frac12} B ^{\frac12} A B ^{\frac12} C ^{\frac12}$.
	Frequently, we take the star product with respect to a single subsystem of a multipartite system.
	Let $\mathcal Z:= \mathcal X_1 \otimes \cdots \otimes \mathcal X_n$, $A \in \mathbb M_{\mathcal Z}$, and $K \in \mathbb P_{{\mathcal X}_k}$ for some $k \in [n]$.
	We denote
	\begin{equation}
		A \star_{\mathcal X_k} K := A \star ({I}_{{\mathcal X}_1 \otimes \cdots \otimes {\mathcal X}_{k-1}} \otimes K \otimes {I}_{{\mathcal X}_{k+1} \otimes \cdots \otimes {\mathcal X}_{n}}).
	\end{equation}
	Essentially we \textit{star-multiply} $K$ to the $\mathcal X_k$-component of $A$.
	For bipartite matrices, this reads:
	\begin{equation}
		A \st{X} X := A \star (X \otimes I_\mathcal Y ) \quad \text{and} \quad 
		A \st{Y} Y := A \star (I_\mathcal X \otimes Y),
	\end{equation}
	for $A \in \mathbb M_{\mathcal X \otimes \mathcal Y }, X \in \mathbb P_\mathcal X,$ and $Y \in \mathbb P_\mathcal Y$. 
	We extend the above notation to the standard matrix product: 
	\begin{equation}
		A \dt{X} X := A \cdot (X \otimes I_\mathcal Y ) \quad \text{and} \quad 
		A \dt{Y} Y := A \cdot (I_\mathcal X \otimes Y).
	\end{equation}
	
	For $P \in \mathbb M_{\mathcal X_1 \otimes \cdots \otimes \mathcal X_n}$, we use $P_{\mathcal X_k}$ to denote the marginal of $P$ on the space $\mathcal X_k$, which is obtained by tracing out every other subsystem.
	We will also encounter expressions like $[P]_{\mathcal{X}_k}^{-1}$, which denote the inverse of a marginal.
	Often we omit the brackets and write this as $P_{\mathcal{X}_k}^{-1}$.
	In such cases, the marginalization is always performed before the inversion. 
	Observe that
	\begin{equation} \label{Eq:StarProdMarginalMulti}
		[A \star_{\mathcal X_k} K]_{\mathcal X_k} = A_{\mathcal X_k} \star K,
	\end{equation}
	for any $K \in \mathbb P_{\mathcal X _k}$.
	For bipartite systems, this simplifies to
	\begin{equation} \label{Eq:StarProdIden}
		\begin{aligned}
			[A \st{X} X]_\mathcal X := \mathrm{Tr}_\mathcal Y[A \st{X} X] = A_\mathcal X \star X  \quad \text{and} \quad  
			[A \st{Y} Y]_\mathcal Y  := \mathrm{Tr}_\mathcal X[A \st{Y} Y ] = A_\mathcal Y \star Y.
		\end{aligned}
	\end{equation}
	We define the symmetrized division as $\frac{A}{B} := A \star B ^{-1} =  B ^{-\frac12} A B ^{-\frac12}$, where negative powers are taken on the support.
	
	{The majority of the projections in this article are with respect to fidelity (Bures distance), and thus we reserve $\Pi$ (without any superscript) to denote such projections. 
		For a set $\cl S$, we use $\Pi_{\cl S}$ to denote the (fidelity/Bures) projection to $\mathcal{S}$, and for a constraint pair $(\Lambda, C)$, we use $\plc$ to denote the projection to the set $\lic$. 
		We also consider projections with respect to the \textit{Choi--Bures} distance (see Definition~\ref{Def:CBDistance}), which we denote by $\Pi^{\mrm{CB}}$. 
		Rarely, we consider Euclidean projections of vectors and Frobenius projections of arbitrary matrices, which are denoted by $\Pi^\mrm{Eu}$ and $\Pi^\mrm{F}$ respectively.} 
	
	\subsection{The Vec Isomorphism, Purifications, and Stinespring Representations}
	The vec mapping is a (basis-dependent) linear bijection between linear operators and bipartite vectors. 
	We offer a brief discussion here and refer to~\cite[Section 1.1.2]{Watrous2018Theory} and~\cite{Mele2024Introduction} for more detailed treatments.
	Consider the spaces $\mathbb{M}_{\x, \y}$ and $\x \ot \y$ and fix bases $\{|i \ra \}_{i=1}^{\mrm{dim}(\x)}$ and $\{|j \ra \}_{j=1}^{\mrm{dim}(\y)}$ of $\x$ and $\y$ respectively.
	The $\mrm{vec}$ mapping offers a bijection between these spaces, which is defined on standard basis vectors as
	\begin{equation}
		\vc(| j \ra  \la i|) := |i \ra  \ot |j \ra ,
	\end{equation}
	which is extended linearly to arbitrary operators and bipartite vectors. In particular, for arbitrary vectors $|x \ra  \in \x$ and $|y \ra  \in \y$, we have $\vc(|y \ra  \la x| ) = |\bar x \ra   \ot |y \ra $,
	where $|\bar x \ra := \sumin \bar x_i |i \ra $ is the complex conjugate of $|x \ra$. 
	
	This is slightly different from the definitions in~\cite{Watrous2018Theory, Mele2024Introduction} where the mapping is defined as $\vc( |y \ra  \la x| ) :=  |y \ra  \otimes |\bar x \ra $, and thus sending matrices from $\mathbb{M}_{\x, \y}$ to vectors in $\y \ot \x$.\footnote{In terms of tensor network diagrams, the convention followed by~\cite{Watrous2018Theory, Mele2024Introduction} can be thought of as the \textit{wire bending downwards} (see \cite[Eq. 140]{Mele2024Introduction}), whereas we \textit{bend the wire upwards}.}
	
	The vec mapping is an isometry in the following respect:
	\begin{equation}
		\la A, B \ra := \tr{A^\dagger  B} = \la \vc(A), \vc(B) \ra,  
	\end{equation}
	which further implies $\|A - B\|_{\mrm{F}} = \|\vc(A) - \vc(B) \|_2$.
	For any triple $A \in \mathbb{M}_{\x, \y}, X \in \mathbb{M}_{\x}$ and $Y \in \mm_{\y}$, it holds that
	\begin{equation}
		\vc(YAX) = (X^{\ic} \ot Y) \vc(A), 
	\end{equation}
	
	We now discuss \textit{purifications} of PSD matrices and the \textit{Stinespring representations} of CP maps.
	\begin{dbox}
		\begin{definition}[Purifications]
			Let $P \in \mathbb P_\mathcal X$. 
			A vector $|u \ra  \in \mathcal Z \otimes \mathcal X$ is a \emph{purification} of $P$ if $\operatorname{Tr}_{\mathcal Z}[|u \ra  \la u|] = P$.
			The set of all purifications of $P$, with the \emph{purification space} $\mathcal Z$, is
			\begin{equation}
				\mathrm{Pur}_{\mathcal Z}(P) := \{|u \ra  \in \mathcal Z \otimes \mathcal X : \mathrm{Tr}_\mathcal Z[\kb{u}] = P\}. 
			\end{equation}
		\end{definition}
	\end{dbox}	
	For a vector $|u \ra  \in \mathcal Z \otimes \mathcal X$ to be a purification of $P$, it must hold that $\mathrm{dim}(\mathcal Z) \geq \mathrm{rank}(P)$~\cite[Theorem 2.10]{Watrous2018Theory}.
	Moreover, the \textit{unitary equivalence of purifications}~\cite[Theorem 2.12]{Watrous2018Theory} states that $|v \ra , |v' \ra  \in \mathrm{Pur}_\mathcal Z(P)$ if and only if $|v \ra  = (U \otimes I_\mathcal X)|v' \ra $ for some unitary $U \in \mathbb U_\mathcal Z$.    
	The \textit{canonical} (standard) purification of $P \in \px$ is defined as
	\begin{equation}
		\vc(P \hf) = (I_\z \ot P \hf ) |\omega \ra \equiv |\omega_P \ra \in \z \ot \x, 
	\end{equation}
	where $\omega =  \sum_i |i,i \ra \in \z \ot \x$ is the unnormalized (canonical) maximally entangled state and $\z \cong \x$. 
	By unitary equivalence of purifications, every purification of $P$ can be written $(U \ot I_\x) |\omega_P \ra  $. 
	We will also use the notations $\Omega \equiv \kb{\omega}$ and $\Omega_P \equiv \kb{\omega_P}$ for any $P \in \px$.
	
	The analogue of purifications for linear maps (such as the elements of $\mrm{LM}(\x, \y)$) are \textit{Stinespring representations}. 
	We will be interested in the Stinespring representation of CP maps~\cite{Stinespring1955Positive, Watrous2018Theory}. 
	\begin{dbox}
		\begin{definition}
			Let $\Theta \in \mathrm{CP} (\mathcal X, \mathcal Y)$. 
			An operator $K \in \mathbb M_{\mathcal X, \mathcal Y \otimes \mathcal Z}$ is called a \emph{Stinespring operator}~\cite{Watrous2018Theory} of $\Theta$ if 
			\begin{equation}
				\Theta(X) = \mathrm{Tr}_\mathcal Z[KXK^\dagger]
			\end{equation}
			for all $X \in \mathbb M_\mathcal X$. 
			We denote the set of all Stinespring operators of $\Theta$ with the auxiliary space $\mathcal Z$ as 
			\begin{equation}
				\mathrm{SR}_\mathcal Z(\Theta) := \{K \in \mathbb M_{\mathcal X, \mathcal Y \otimes \mathcal Z} : \mathrm{Tr}_\mathcal Z[K \cdot K^\dagger] = \Theta(\cdot)\}.
			\end{equation}
		\end{definition}
	\end{dbox}
	For an operator $K \in \mathbb M_{\mathcal X, \mathcal Y \otimes \mathcal Z}$ to be a Stinespring representation of $\Theta$, it must hold that $\mathrm{dim}(\mathcal Z) \geq \text{rank}(\mathfrak{C}(\Theta))$~\cite[Theorem 2.22]{Watrous2018Theory}.
	As with purifications, Stinespring representations are not unique, and are equivalent up to a unitary in the auxiliary space $\z$.
	Formally, $K, K' \in \mathrm{SR}_\mathcal Z(\Theta)$ if and only if $K = (I_{\mathcal Y} \otimes U) K'$ for some unitary $U \in \mathbb U_\mathcal Z$~\cite[Corollary 2.24]{Watrous2018Theory}.   
	A CP map $\Theta$ is trace-preserving if and only if its Stinespring representations are isometries: $K^\dagger K = I_\x$ for all $K \in \mrm{SR}_\z(\Theta)$.
	If $\Theta$ is trace-preserving only on a subspace $\x ' \subseteq \x$, then $K$ will be a partial isometry with $K^\dagger K$ being a projector onto $\x'$.
	
	\subsection{The Choi--Bures and Stinespring--Frobenius Distances between Completely Positive Maps}
	We collect several known results relating the Bures distance between positive semidefinite (PSD) matrices to Euclidean distances between their purifications, together with their extensions to completely positive (CP) maps via the Choi--Jamiołkowski isomorphism and Stinespring representations.  
	These results allow one to interpret Bures-type distances as minimal Euclidean or Frobenius (Hilbert--Schmidt) distances over appropriate purification spaces.
	
	\medskip
	
	We begin with the matrix-valued setting.  
	The following result is a direct consequence of Uhlmann's theorem~\cite{uhlmann1976transition, Watrous2018Theory} and characterizes the Bures distance between two PSD matrices as the minimal Euclidean distance between their purifications.
	
	\begin{tbox}
		\begin{prop}\label{Prop:PurificationStates}
			Let $P, Q \in \mathbb P_\mathcal X$ and let $\mathcal Z \equiv \mathbb C^r$ with
			$r := \max\{\mathrm{rank}(P), \mathrm{rank}(Q)\}$.
			Then
			\begin{equation}
				\mathrm{B}(P,Q)
				= \min_{U \in \mathbb U_\mathcal Z}
				\| \ket p - (U \otimes I_\mathcal X )  |q \ra  \|_2^2	
				= \min_{\substack{|p' \ra  \in \mathrm{Pur}_\mathcal Z(P)\\ |q'\ra  \in \mathrm{Pur}_\mathcal Z(Q)}}
				\||p'\ra  - |q'\ra \|_2^2,
			\end{equation}
			for any fixed purifications $|p\ra  \in \mathrm{Pur}_\mathcal Z(P)$ and
			$|q\ra  \in \mathrm{Pur}_\mathcal Z(Q)$.
		\end{prop}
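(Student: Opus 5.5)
The plan is to expand the squared Euclidean norm and reduce the minimisation to Uhlmann's theorem. For any purifications $|p\ra\in\mathrm{Pur}_\z(P)$ and $|q\ra\in\mathrm{Pur}_\z(Q)$ and any unitary $U\in\mathbb U_\z$, we have
\begin{equation*}
\||p\ra-(U\ot\ix)|q\ra\|_2^2 = \la p|p\ra + \la q|q\ra - 2\Re\la p|(U\ot\ix)|q\ra = \tr{P}+\tr{Q}-2\Re\la p|(U\ot\ix)|q\ra,
\end{equation*}
because the norm of a purification equals the trace of the purified matrix. Comparing with $\bpq=\tr{P+Q}-2\fpq$, the first equality reduces to the identity
\begin{equation*}
\max_{U\in\mathbb U_\z}\Re\la p|(U\ot\ix)|q\ra=\fpq .
\end{equation*}

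For the upper bound, note that $\Re z\le |z|$ for every term. Uhlmann's theorem in the form of \cite{Watrous2018Theory} then gives $|\la p|(U\ot\ix)|q\ra|\le \fpq$ for every $U$. Equality is attained by some $U$, and the attaining unitary can be multiplied by a phase to make the overlap real and nonnegative. This phase step is what turns $\max|\cdot|$ into $\max\Re(\cdot)$.

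The only delicate point is the dimension of $\z$. Uhlmann's theorem requires purifications of both $P$ and $Q$ to exist in the same space $\z$, which holds because $\dim\z=r\ge\max\{\mathrm{rank}(P),\mathrm{rank}(Q)\}$. I would verify attainment directly to avoid any hidden assumption $\dim\z\ge\dim\x$. Write $|p\ra=\vc(A)$ and $|q\ra=\vc(B)$ for operators $A,B$ between $\z$ and $\x$. The partial-trace conditions then become $AA^\dagger=P$ and $BB^\dagger=Q$, up to the transpose and conjugation conventions of the vec map fixed above. The overlap $\la p|(U\ot\ix)|q\ra$ becomes $\tr{A^\dagger B U^\ic}$. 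Maximising its real part over unitaries gives $\|A^\dagger B\|_1$, by the polar decomposition of the square $r\times r$ matrix $A^\dagger B$, and this polar unitary exists in $\mathbb U_\z$ precisely because $A^\dagger B$ is square of size $r$. Finally, $\|A^\dagger B\|_1=\|\sqrt P\sqrt Q\|_1$, since $A=\sqrt P V$ and $B=\sqrt Q W$ for partial isometries $V,W$ whose initial projections cover the relevant supports. I expect checking these conventions to be the main (mild) obstacle.

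The second equality follows from unitary equivalence of purifications. Every $|p'\ra\in\mathrm{Pur}_\z(P)$ has the form $(U_1\ot\ix)|p\ra$, and every $|q'\ra\in\mathrm{Pur}_\z(Q)$ has the form $(U_2\ot\ix)|q\ra$. By unitary invariance of the norm,
\begin{equation*}
\||p'\ra-|q'\ra\|_2=\||p\ra-(U_1^\dagger U_2\ot\ix)|q\ra\|_2 .
\end{equation*}
As $U_1,U_2$ range over $\mathbb U_\z$, the product $U_1^\dagger U_2$ ranges over all of $\mathbb U_\z$, so the two minima coincide. Both minima are attained because $\mathbb U_\z$ is compact.
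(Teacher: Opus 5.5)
Your proposal is correct and follows essentially the same route as the paper: you expand the squared norm and reduce the first equality to $\max_{U}\Re\langle p,(U\otimes I_{\mathcal X})q\rangle=\mathrm{F}(P,Q)$ via Uhlmann's theorem, then obtain the second equality from unitary equivalence of purifications and unitary invariance of the norm. Your additional vec/polar-decomposition check, showing the maximum is attained with $\dim\mathcal Z=r$ and giving $\|A^\dagger B\|_1=\|\sqrt{P}\sqrt{Q}\|_1$, is valid and makes the argument slightly more self-contained than the paper's direct citation of Uhlmann's theorem.
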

	\end{tbox}
	
	\begin{proof}
		The second equality follows from the unitary invariance of Euclidean norm and the unitary equivalence of purifications~\cite[Theorem~2.12]{Watrous2018Theory}.
		For the first equality, we compute
		\begin{equation}
			\begin{aligned}
				\min_{U \in \mathbb U_\mathcal Z}
				\| |p\ra  - (U \otimes I_\mathcal X )|q\ra  \|_2^2
				&=
				\|p\|_2^2 + \|q\|_2^2
				- 2 \max_{U \in \mathbb U_\mathcal Z}
				\Re \langle p, (U \otimes I_\mathcal X ) q \rangle \\
				&=
				\operatorname{Tr}[P+Q] - 2 \mathrm{F}(P,Q) =: \bpq,
			\end{aligned}
		\end{equation}
		where the maximization of the inner product is given by Uhlmann's theorem~\cite{uhlmann1976transition, Watrous2018Theory}.
	\end{proof}
	
	\medskip
	
	We now extend this viewpoint to CP maps.
	Since the Choi isomorphism maps CP maps to PSD matrices, we can naturally combine it with the definition of Bures distance to define the \textit{Choi--Bures distance}.  
	\begin{dbox}
		\begin{definition}[Choi--Bures distance] \label{Def:CBDistance}
			Let $\Theta, \Psi \in \cpxy$ be arbitrary CP maps. The \emph{Choi--Bures} distance is defined as 
			\begin{equation}
				\mathrm{d}_\mathrm{CB}(\Theta,\Xi)
				:=
				\sqrt{\mathrm{B}(\mathfrak C(\Theta), \mathfrak C(\Xi))}.
			\end{equation}
		\end{definition}
	\end{dbox}
	For quantum channels, related notions have appeared as operational fidelity~\cite{Belavkin2005Operational, Kretschmann2008Informationdisturbance}
	and entanglement fidelity~\cite{beny2010general}.
	
	As the Stinespring representation generalizes the notion of purifications of PSD matrices to CP maps, one can define a distance between CP maps as the minimal Hilbert--Schmidt (Frobenius) distance over their Stinespring representations as follows.
	\begin{dbox}
		\begin{definition}[Stinespring--Frobenius distance]
			Let $\Theta, \Xi \in \cpxy$ be arbitrary CP maps and let $\z \equiv \mathbb{C}^r$ with $r := \max \{\mrm{rank}(\fc(\Theta), \mrm{rank (\fc(\Xi))}\}$.
			We define the \emph{Stinespring--Frobenius} distance between $\Theta$ and $\Xi$ as
			\begin{equation}
				\mrm{d}_{\mrm{SF}}(\Theta, \Xi) := \min_{\substack{K \in \mathrm{SR}_\mathcal Z(\Theta)\\ L \in \mathrm{SR}_\mathcal Z(\Xi)}} \|K - L\|_\mrm{F} = 
				\min_{U \in \mathbb U_\mathcal Z}
				\| K' - (I_\mathcal Y \otimes U) L' \|_{\mrm{F}},
			\end{equation}
			where $K'$ and $L'$ are arbitrarily fixed Stinespring representations of $\Theta$ and $\Xi$ respectively.
		\end{definition}
	\end{dbox}
	Note that the second equality follows from the unitary invariance of the Frobenius norm and the unitary equivalence of Stinespring representations. 
	The following proposition, which can be seen as the generalization of Proposition~\ref{Prop:PurificationStates} to CP maps, shows that the Choi--Bures and Stinespring--Frobenius distances coincide.
	
	\begin{tbox}
		\begin{prop}\label{Prop:PurificationChannels}
			Let $\Theta, \Xi \in \mathrm{CP}(\mathcal X,\mathcal Y)$, $r := \max \{\mathrm{rank}(\mathfrak C(\Theta)), \mathrm{rank}(\mathfrak C(\Xi))\}$, and
			$\mathcal Z \equiv \mathbb C^r$.
			It holds that
			\begin{equation}
				\mrm{d}^2_{\mrm{CB}} (\Theta, \Xi) := \mathrm{B}(\mathfrak C(\Theta), \mathfrak C(\Xi))
				=
				\min_{\substack{K \in \mathrm{SR}_\mathcal Z(\Theta)\\ L \in \mathrm{SR}_\mathcal Z(\Xi)}}
				\|K - L\|_{\mrm{F}}^2 =: \mrm{d}^2_{\mrm{SF}}  (\Theta, \Xi).
			\end{equation}
		\end{prop}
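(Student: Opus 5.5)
The plan is to reduce Proposition~\ref{Prop:PurificationChannels} to Proposition~\ref{Prop:PurificationStates} via the vec isomorphism. The key observation is that a Stinespring operator of $\Theta$, once vectorized, is a purification of the Choi matrix $\fc(\Theta)$. Moreover, the unitary freedom in the two objects matches, and the Frobenius norm on operators equals the Euclidean norm on vectors.

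The first step is to show that $K \mapsto \vc(K)$, suitably reshaped, maps $\mathrm{SR}_\z(\Theta)$ bijectively onto $\mathrm{Pur}_\z(\fc(\Theta))$, after reordering tensor factors so that $\z$ is the purifying system and $\x\ot\y$ carries the Choi matrix.
\begin{itemize}
\item Regard $K \in \mathbb M_{\x,\y\ot\z}$ and write $|k\ra := \vc(K) \in \x \ot \y \ot \z$.
\item Since $\vc(|j\ra\la i|) = |i\ra\ot|j\ra$, we have $\vc(K) = (\ix \ot K)|\omega\ra$, with $|\omega\ra = \sum_i |i,i\ra \in \x\ot\x$.
\item Hence
\begin{equation}
\mathrm{Tr}_\z[\kb{k}] = \sum_{i,j} \ketbra{i}{j} \ot \mathrm{Tr}_\z[K\ketbra{i}{j}K^\dagger] = \sum_{i,j}\ketbra{i}{j}\ot \Theta(\ketbra{i}{j}) = \fc(\Theta).
\end{equation}
\item Let $S$ be the fixed swap unitary moving $\z$ to the front. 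Then $S|k\ra \in \z\ot(\x\ot\y)$ is a purification of $\fc(\Theta)$ in the convention of Definition~1.
\item Conversely, every purification of $\fc(\Theta)$ on $\z\ot\x\ot\y$ arises this way, because vec is a bijection and the computation above reverses: the partial trace identity on the basis elements $\ketbra{i}{j}$ determines $\mathrm{Tr}_\z[K\cdot K^\dagger]=\Theta$ by linearity.
\end{itemize}

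The second step transports the norm and the unitary actions.
\begin{itemize}
\item By the isometry property, $\|K-L\|_\mrm{F} = \|\vc(K)-\vc(L)\|_2$, and $S$ preserves this norm.
\item Using $\vc(YAX) = (X^\ic\ot Y)\vc(A)$ with $X=\ix$ and $Y = \iy\ot U$, the action $K\mapsto(\iy\ot U)K$ corresponds to $|k\ra \mapsto (\ix\ot\iy\ot U)|k\ra$. After the swap, this is exactly $U\ot I_{\x\ot\y}$ acting on the purification.
\item Therefore
\begin{equation}
\min_{K\in\mathrm{SR}_\z(\Theta),\,L\in\mathrm{SR}_\z(\Xi)}\|K-L\|_\mrm{F}^2 = \min_{|p'\ra\in\mathrm{Pur}_\z(\fc(\Theta)),\,|q'\ra\in\mathrm{Pur}_\z(\fc(\Xi))}\||p'\ra-|q'\ra\|_2^2 .
\end{equation}
\end{itemize}

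The final step is to apply Proposition~\ref{Prop:PurificationStates} to $P=\fc(\Theta)$ and $Q=\fc(\Xi)$. The required dimension is $r=\max\{\mathrm{rank}\,\fc(\Theta),\mathrm{rank}\,\fc(\Xi)\}$, which is precisely the choice of $\z$ in the statement. This identifies the right-hand side with $\mathrm B(\fc(\Theta),\fc(\Xi)) = \mrm{d}^2_{\mrm{CB}}(\Theta,\Xi)$.

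The main obstacle is bookkeeping rather than conceptual. One must check that the paper's nonstandard vec convention (bending the wire upwards) produces $\fc(\Theta)$ with $\x$ first, as in \eqref{Eq:ChoiIsoDef}, rather than a transposed or permuted version. One must also check that the surjectivity claim (every purification comes from a Stinespring operator) holds with exactly $\dim\z = r$. I would verify the first on basis elements $K = |y,z\ra\la i|$. The second follows from Watrous' rank condition, which guarantees that both sets are nonempty for this $\z$, together with bijectivity of vec.
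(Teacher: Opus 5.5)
Your proof is correct and follows essentially the same route as the paper. Both arguments show that the vectorized Stinespring operator $\mathrm{vec}(K)$ is a purification of $\mathfrak{C}(\Theta)$, on which the auxiliary unitary $I_{\mathcal Y}\otimes U$ acts as a unitary on the purifying space, and then conclude by Uhlmann's theorem. The only difference is packaging: you build a norm-preserving bijection $\mathrm{SR}_{\mathcal Z}(\Theta)\to\mathrm{Pur}_{\mathcal Z}(\mathfrak C(\Theta))$ and invoke Proposition~\ref{Prop:PurificationStates} as a black box, whereas the paper expands the square and applies Uhlmann directly to the cross term. Your explicit surjectivity check and tensor-factor reordering are bookkeeping that the paper leaves implicit.
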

	\end{tbox}
	
	\begin{proof}
		The equality will be proved by showing that any $K \in \mrm{SR}_\z(\Theta)$ and $L \in  \mrm{SR}_\z(\Xi)$ are $\mrm{vec}$-isomorphic to some purifications of $\fc(\Theta)$ and $\fc(\Xi)$ respectively.
		Begin with the fact that
		\begin{equation}
			\min_{K \in \mathrm{SR}_\mathcal Z(\Theta), L \in \mathrm{SR}_\mathcal Z(\Xi)} \| K - L\|_2^2 = \min_{U \in  \mathbb{U}_\mathcal Z} \| K - (I_\mathcal Y \otimes U) L\|_2^2, 
		\end{equation} 
		where we have arbitrarily fixed the Stinespring representations $K$ and $L$ of $\Theta$ and $\Xi$ respectively. 
		Expanding the RHS, we have
		\begin{equation}
			\min_{U \in  \mathbb{U}_\mathcal Z} \| K - (I_\mathcal Y \otimes U) L\|_2^2  =  \operatorname{Tr}[K^{\dagger}K + L^{\dagger}L] - 2 \max_{U \in \mathbb U_{\mathcal Z}} \Re \langle K, (I_\mathcal Y \otimes U) L \rangle .
		\end{equation}
		Since $\operatorname{Tr}[K^{\dagger}K] = \operatorname{Tr}[\mathfrak{C}(\Theta)]$ and $\operatorname{Tr}[L^{\dagger}L] = \operatorname{Tr}[\mathfrak{C}(\Xi)]$, it suffices show that the maximization of the inner product over the unitaries in the purification space is attained at $\operatorname{F}(\mathfrak{C}(\Theta), \mathfrak{C}(\Xi))$. 
		
		Any Stinespring representation $K \in \mathrm{SR}_\mathcal Z(\Theta)$ can be written as $K = \sum_{i=1}^{r} K_i \otimes |i \rangle$, where $\{K_i\}_{i \in [r]} \subset \mathbb M_{\mathcal X, \mathcal Y }$ form a Kraus representation of $\Theta$. Here $K_i := ({I_\mathcal Y \otimes \langle i|})K$ for each $i \in [r]$~\cite[Proposition 2.20]{Watrous2018Theory}. 
		By the $\mathrm{vec}$ isomorphism, we have 
		\begin{equation}
			\mathrm{vec}(K) = \mathrm{vec}\left(\sum_{i=1}^{n} K_i \otimes |i \rangle \right) =  \sum_{i=1}^{n} \mathrm{vec}\left(K_i \right) \otimes |i \rangle, 
		\end{equation}
		where we used the convention $\mathrm{vec}(|a \rangle \langle b| \otimes \ketbra{i}{j}) \equiv  \mathrm{vec}(|a \rangle \langle b|) \otimes \mathrm{vec}(|i \rangle \langle j|) = |b, a\rangle |j,i \rangle$. 
		Observe that $\mathrm{vec}(K) \in \mathrm{Pur}_\mathcal Z(\mathfrak{C}(\Theta))$, as
		\begin{equation}
			\mathrm{Tr}_\mathcal Z[\mathrm{vec}(K) \mathrm{vec}(K)^{\dagger}] = \sum_{i=1}^{n} \mathrm{vec}(K_i) \mathrm{vec}(K_i)^{\dagger} = \sum_{i=1}^{n} (\mathrm{Id_\mathcal X  } \otimes K_i)|\omega \rangle \langle \omega|(\mathrm{Id}_\mathcal X \otimes K_i^{\dagger}) = \mathfrak{C}(\Theta).
		\end{equation}
		In the second equality, we used the fact that $\mathrm{vec}(K_i) = (I_\mathcal X \otimes K_i)(|\omega \rangle )$. 
		From the relation $\langle A, B \rangle = \langle \mathrm{vec}(A), \mathrm{vec}(B) \rangle $, we have
		\begin{equation}
			\max_{U \in \mathbb U_\mathcal Z} \Re \langle K, (I_\mathcal Y \otimes U) L \rangle = \max_{U \in \mathbb U_\mathcal Z}  \langle \mathrm{vec}(K), (I_\mathcal Y\otimes U) \mathrm{vec}(L) \rangle = \operatorname{F}(\mathfrak{C}(\Theta), \mathfrak{C}(\Xi)).
		\end{equation}
		This equality, along with the equalities $\operatorname{Tr}[K^{\dagger}K] = \operatorname{Tr}[\mathfrak{C}(\Theta)]$ and $\operatorname{Tr}[L^{\dagger}L] = \operatorname{Tr}[\mathfrak{C}(\Xi)]$, allows us to conclude the proof. 
		
	\end{proof}
	Thus, the Bures distance between Choi matrices can be interpreted as the minimal Hilbert--Schmidt distance over Stinespring representations.
	A variant of this equivalence for quantum channels was previously derived in~\cite{Belavkin2005Operational}.
	
	Finally, combining Bures projections with the Choi--Bures distance yields the notion of
	\emph{Choi--Bures projections}.
	\begin{dbox}
		\begin{definition}[Choi--Bures projections] \label{Def:CBProj}
			Let $\Theta \in \mathrm{CP}(\mathcal X,\mathcal Y)$ and $\mathcal S \subseteq \mathrm{CP}(\mathcal X,\mathcal Y)$. 
			The \emph{Choi--Bures} projection of $\Theta$ to $\mathcal{S}$ is defined as
			\begin{equation}\label{Eq:CBProjDef}
				\Pi^{\mrm{CB}}_{\mathcal S}[\Theta]
				:=
				\Pi_{\mathfrak C[\mathcal S]}[\mathfrak C(\Theta)].
			\end{equation}    
		\end{definition}
	\end{dbox}
	These projections will play a central role in the remainder of this article, particularly when $\mathcal S$ is chosen to be a structured subset of CP maps, such as the set of quantum channels.
	
	\subsection{Matrix Geometric Mean}
	The \textit{matrix geometric mean}~\cite{Pusz1975Functional, kubo1980means, BhatiaPD} of $A, B \in \mathbb{P}_d^+$ is defined as
	\begin{equation} \label{Eq:DefMGM}
		A \# B := A ^{\frac12} \sqrt{A ^{-\frac12} B A ^{-\frac12} } A ^{\frac12} > 0. 
	\end{equation}  
	We see that if $AB = BA \Longrightarrow A \# B = A ^{\frac12} B ^{\frac12} = \sqrt{A B}$.
	The matrix geometric mean satisfies a variety of useful properties, some of which we list below.
	See~\cite{BhatiaPD, lawson2024expanding, Afham2025Riemanniangeometric} for proofs.
	\begin{tbox}
		\begin{prop}[Properties of the Matrix Geometric mean] \label{Prop:MatGeoMean} 
			The following statements hold true for any $A, B \in \mathbb{P}_d^+$. 
			\begin{enumerate}
				\item $A \# B = B \# A$ and $A ^{-1} \# B ^{-1} = (A \# B) ^{-1}$.
				\item $X = A ^{-1} \# B$ is the unique positive definite solution to the matrix Riccati equation $B = X A X$. 
				It follows that $A = X ^{-1} B  X ^{-1}$. 
			\end{enumerate}	
		\end{prop}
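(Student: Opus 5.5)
We prove the two items of Proposition~\ref{Prop:MatGeoMean} by direct computation from the definition \eqref{Eq:DefMGM}. The one tool we need is the uniqueness of the positive definite square root: if $Y>0$ and $Y^2 = M$, then $Y = M^{1/2}$. This is the step that does the real work. The rest is routine congruence algebra, using that $X \mapsto S X S^\dagger$ preserves positive definiteness for invertible $S$.

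We start with item 2, since item 1 follows from it. Put $X := A^{-1}\#B = A^{-1/2}(A^{1/2} B A^{1/2})^{1/2} A^{-1/2}$. This is positive definite, being a congruence of a PD matrix. For existence, compute
\[
XAX = A^{-1/2}(A^{1/2}BA^{1/2})^{1/2}(A^{1/2}BA^{1/2})^{1/2}A^{-1/2} = A^{-1/2}A^{1/2}BA^{1/2}A^{-1/2} = B.
\]
For uniqueness, let $X>0$ satisfy $XAX=B$. Conjugating by $A^{1/2}$ gives
\[
A^{1/2}BA^{1/2} = (A^{1/2}XA^{1/2})^2 .
\]
Since $A^{1/2}XA^{1/2}>0$, uniqueness of the PD square root forces $A^{1/2}XA^{1/2} = (A^{1/2}BA^{1/2})^{1/2}$, which is exactly the formula for $A^{-1}\#B$. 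The identity $A = X^{-1}BX^{-1}$ then follows by multiplying $B = XAX$ on both sides by $X^{-1}$.

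For item 1, we use item 2 as a characterization: $A\#B$ is the unique PD solution $Y$ of $Y A^{-1} Y = B$. This holds by applying item 2 with $A^{-1}$ in place of $A$.

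\textbf{Symmetry.} Let $Y := A\#B$, so $YA^{-1}Y = B$. Inverting gives $Y^{-1} A Y^{-1} = B^{-1}$, and hence $A = Y B^{-1} Y$. Thus $Y$ is a PD solution of the Riccati equation that characterizes $B\#A$, and uniqueness gives $A\#B = B\#A$.

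\textbf{Inversion.} With the same $Y$, the identity $Y^{-1}AY^{-1} = B^{-1}$ says that $Y^{-1}$ is the PD solution of $Z (A^{-1})^{-1} Z = B^{-1}$. The characterization above therefore identifies $Y^{-1}$ with $A^{-1}\#B^{-1}$.

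The only delicate point is checking that the Riccati solutions invoked are positive definite, so that uniqueness applies. In each case they are congruences or inverses of PD matrices, so this holds.
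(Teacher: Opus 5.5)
Your proof is correct. The paper gives no proof of its own and defers to the cited literature. The argument there is essentially yours: uniqueness of the positive definite Riccati solution follows from uniqueness of the positive definite square root, and symmetry and $A^{-1}\#B^{-1}=(A\#B)^{-1}$ are then read off from that uniqueness.
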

	\end{tbox}
	The gradient of fidelity (over full-rank states) can be expressed in terms of the geometric mean~\cite{bhatia2019bures}
	\begin{equation}
		\nabla_P \fpq = \frac12 P \iv \# Q.
	\end{equation}

	\subsection{Data Processing Inequality for Fidelity}
	Fidelity satisfies the \textit{data processing inequality}~\cite{Watrous2018Theory, Wilde2013, Uhlmann2011Transition}.
	For $P, Q \in \mathbb P_\mathcal H$ and any $\Lambda \in \mathrm{CPT} (\mathcal H, \mathcal K)$, it holds that $	 \operatorname{F}(P, Q) \leq \operatorname{F}(\Lambda(P), \Lambda(Q))$.
	We say that the triple $(\Lambda, P, Q)$ \textit{saturates} the DPI if equality is achieved in the above inequality. 
	\textcite{leditzky2017data} (also see~\cite{cree2022geometric, Wang2022Revisiting}) derive an operator equality which is equivalent to saturation of DPI for fidelity.
	This condition, for full-rank states, can be written as
	\begin{equation} \label{Eq:LRDCondition}
		\operatorname{F}(P, Q) = \operatorname{F}(\Lambda(P), \Lambda(Q)) \iff P^{-1} \# Q = \Lambda^\dagger (\Lambda(P)^{-1} \# \Lambda(Q)). 
	\end{equation}
	Indeed, by the symmetry of fidelity, one may swap $P$ and $Q$ in the above operator equality.  
	
	\section{Closed-forms for Fidelity/Bures Projections} \label{Sec:ClosedFormProjection}
	We now discuss the central results of this article, namely, closed-forms for Bures (equivalently, fidelity) projections. 
	We begin discussing the uniqueness of the Bures projection on the support of the point being projected. 
	
	\subsection{Restricted Uniqueness of Bures Projection}
	Let $(\Lambda, P, C)$ constitute the projection problem under consideration (see \eqref{Eq:ProblemStatement} for a precise definition), and $\Pi_{\Lambda, C}[P]$ denote the set optimal feasible points---projections.  
	
	We first discuss a case where there is no unique projection. 
	If $\operatorname{F}(\Lambda(P), C) = 0$, then for any $Q \in \Lambda^{-1}[C]$, DPI dictates that $\operatorname{F}(P, Q) = 0$.
	Hence every feasible point is equidistant from $P$, and thus may serve as a projection. 
	We now show that the projection is unique on $\mrm{supp}(P)$, whence it follows that the projection is unique when $P$ is invertible.
	\begin{tbox}
		\begin{theorem}	\label{Thm:ProjectionUniquenessCombined}
			Let $(\Lambda, P, C)$ constitute a projection problem and let $Q_0, Q_1 \in \plc[P]$ be arbitrary projections. 
			It holds that
			\begin{equation}
				P^0 Q_0 P^0 = P^0 Q_1 P^0,
			\end{equation}
			where $H^0$ denotes the orthogonal projector onto the support of $H$ for any Hermitian $H$. 
			Moreover, if $P$ is full-rank, then the projection is unique. 
		\end{theorem}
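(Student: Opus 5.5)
We argue by convexity of the feasible set combined with concavity of fidelity, and then analyze the equality case of concavity via purifications (Proposition~\ref{Prop:PurificationStates}). The feasible set $\lic$ is convex, since it is the intersection of the PSD cone with an affine subspace. Let $Q_0, Q_1 \in \plc[P]$ with common optimal value $f := \mrm{F}(P,Q_0) = \mrm{F}(P,Q_1)$, and set $Q_{1/2} := \tfrac12(Q_0+Q_1) \in \lic$. Since $Q \mapsto \mrm{F}(P,Q)$ is concave, $\mrm{F}(P,Q_{1/2}) \ge f$, and optimality forces equality. The whole proof is therefore about extracting $P^0 Q_0 P^0 = P^0 Q_1 P^0$ from equality in the concavity inequality.

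To do this, write $\mrm{F}(P,Q) = \|\sqrt Q \sqrt P\|_1 = \max_{U} \Re\,\tr{U \sqrt Q \sqrt P}$. Choose a purification $|p\ra$ of $P$ and, via Uhlmann, optimal purifications $|q_0\ra, |q_1\ra$ of $Q_0, Q_1$ in a common (sufficiently large) space $\z$. These satisfy $\la p, q_k \ra = f$ real. Next let $\z' = \z \ot \mathbb{C}^2$ and define $|q\ra := \tfrac{1}{\sqrt2}(|q_0\ra\ot|0\ra + |q_1\ra \ot |1\ra)$, which is a purification of $Q_{1/2}$. With $|p'\ra := |p\ra \ot |+\ra$ we get $\la p', q\ra = \tfrac12(\la p,q_0\ra + \la p,q_1\ra) = f$.

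The key step is the equality analysis via Uhlmann's theorem on $Q_{1/2}$. Uhlmann gives $f = \mrm{F}(P,Q_{1/2}) = \max_U |\la p', (U\ot I)q\ra|$, so this particular $|q\ra$ is an optimal purification. In terms of operators, write $|p\ra \leftrightarrow$ the matrix $\sqrt P\, W$, with $\sqrt{Q_k} V_k$ the matrices corresponding to the $|q_k\ra$. Optimality in Uhlmann is equivalent to $X_k := (\sqrt P W)^\dagger$-type overlap matrices being positive. Concretely, the optimal $|q_k\ra$ can be chosen as $(I\ot$ something$)$ with $\la p|$ partial overlap $\mrm{Tr}_\x[|q_k\ra\la p|] \ge 0$, i.e.\ $\sqrt{Q_k}\sqrt P$ polar-aligned. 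I would use the standard form: optimal purifications of $Q_k$ relative to $\vc(\sqrt P)$ are $\vc(\sqrt{Q_k}U_k)$ with $U_k$ a polar unitary of $\sqrt{P}\sqrt{Q_k}$, so that $\sqrt P \sqrt{Q_k} U_k = |\sqrt{Q_k}\sqrt P| \ge 0$.

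I expect the main obstacle to be showing that the matrices $M_k := \sqrt{Q_k}U_k$ agree after left multiplication by $\sqrt P$, i.e.\ $\sqrt P M_0 = \sqrt P M_1$. Granting this, $P^0 Q_k P^0 = P^{-1/2}(\sqrt P M_k)(\sqrt P M_k)^\dagger P^{-1/2}$ (inverse on the support), so the compressions coincide. To establish the identity, apply the equality case of the triangle inequality for the trace norm, $\|\tfrac12(A_0+A_1)\|_1 \le \tfrac12(\|A_0\|_1+\|A_1\|_1)$ with $A_k = \sqrt P M_k \ge 0$. Equality is automatic for PSD matrices, so instead I would use the strict concavity of $\mrm{F}(P,\cdot)$ restricted to the support of $P$. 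Equivalently, $\mrm{F}(P,Q)^2 = \max$ over $Q\mapsto \|\sqrt P \sqrt Q\|_1$, where $\|\sqrt P\sqrt Q\|_1 = \tr{(\sqrt P Q \sqrt P)^{1/2}}$ depends only on $R := \sqrt P Q \sqrt P$ (itself determined by $P^0 Q P^0$). Since $R \mapsto \tr{R^{1/2}}$ is strictly concave on PSD matrices (the strict operator concavity of the square root, or strict concavity of $t\mapsto\sqrt t$ applied to spectra), equality at the midpoint forces $R_0 = R_1$. Hence $\sqrt P Q_0 \sqrt P = \sqrt P Q_1 \sqrt P$, and conjugating by $P^{-1/2}$ on the support gives $P^0Q_0P^0 = P^0Q_1P^0$. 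This last route is in fact simpler and I would present it as the main argument, with the purification picture only as intuition. The one point that genuinely needs checking is the strict concavity of $\tr{R^{1/2}}$, i.e.\ that equality in $\tr{(\tfrac{R_0+R_1}{2})^{1/2}} \ge \tfrac12(\tr{R_0^{1/2}}+\tr{R_1^{1/2}})$ forces $R_0=R_1$. For full-rank $P$, $P^0 = I$, which yields uniqueness.
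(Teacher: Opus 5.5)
Your proposal is correct, and the argument you settle on differs from the paper's and is tighter. The paper proceeds by cases:
\begin{itemize}
\item for full-rank $P$ it invokes a cited strict-concavity result for fidelity over full-rank states;
\item it splits according to whether some projection is full-rank, and handles rank-deficient projections by ``restricting to the support'' of a convex combination $Q_t$;
\item it treats rank-deficient $P$ via $\mathrm{F}(P,Q)=\mathrm{F}(P,P^0QP^0)$ and a reduction to the full-rank case on $\mathrm{supp}(P)$.
\end{itemize}
You instead factor $Q\mapsto\mathrm{F}(P,Q)=\mathrm{Tr}[(\sqrt{P}Q\sqrt{P})^{1/2}]$ into the linear map $Q\mapsto\sqrt{P}Q\sqrt{P}$ followed by $R\mapsto\mathrm{Tr}[R^{1/2}]$, which is strictly concave on the whole PSD cone. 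Comparing the feasible midpoint $\tfrac12(Q_0+Q_1)$ with $Q_0,Q_1$ then gives $\sqrt{P}Q_0\sqrt{P}=\sqrt{P}Q_1\sqrt{P}$ in one step, with no hypotheses on the ranks of $P$ or of the projections. This also explains the form of the conclusion. The kernel of the linear map is exactly $\{Q : P^0QP^0=0\}$, so uniqueness can only hold after compression to $\mathrm{supp}(P)$, and full-rank uniqueness is the case $P^0=I$. The paper's route rests on an off-the-shelf citation, but its support-restriction steps are informal; yours replaces them with a single clean lemma.

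That lemma is the point you flagged, and it closes in a few lines. Note that ``strict concavity of $\sqrt{t}$ applied to spectra'' is not by itself an argument, since the eigenvalues of $\tfrac12(R_0+R_1)$ are not averages of those of $R_0$ and $R_1$. Instead, put $X:=\tfrac12(\sqrt{R_0}+\sqrt{R_1})\ge 0$. Then $\tfrac12(R_0+R_1)-X^2=\tfrac14(\sqrt{R_0}-\sqrt{R_1})^2\ge 0$, so operator monotonicity of the square root gives $\sqrt{\tfrac12(R_0+R_1)}\ge X$. Equality of the traces forces equality of these operators, since their difference is PSD with zero trace. Hence $(\sqrt{R_0}-\sqrt{R_1})^2=0$, and because this difference is Hermitian, $R_0=R_1$.

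Finally, drop the purification detour from the write-up. It is unnecessary, and it stalls for the reason you noticed: the trace-norm triangle inequality is automatically tight for PSD matrices. Several of its sentences are also garbled, for example the one expressing $\mathrm{F}(P,Q)^2$ as a ``max''.
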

	\end{tbox}
	\begin{proof}
		See Appendix~\ref{App:ProjectionUniquenessCombined}.
	\end{proof}
	
	For the rest of the article, we will assume $P$ to be invertible, and thus the projection is unique.
	The data processing inequality allows us to define a simple upper-bound on the fidelity over the feasible set: for all $Q \in \Lambda^{-1}[C]$, we have 
	\begin{equation}
		\operatorname{F}(P, Q) \leq \operatorname{F}(\Lambda(P), \Lambda(Q)) = \operatorname{F}(\Lambda(P), C). 
	\end{equation}
	If there exists a feasible point $Q \in \Lambda^{-1}[C]$ saturating the above inequality, which would then be necessarily optimal, we say that \textit{the projection problem $(\Lambda, P, C)$ saturates DPI}. 
	Under suitable support conditions, we show that for a DPI-saturating projection problem, the Gamma map provides the analytic solution by leveraging the equality condition for fidelity-DPI.
	This analytic form exists for any projection problem, but whether it yields the projection depends (primarily) on the channel and, interestingly, on the saturation of the DPI of fidelity.
	
	We begin with a useful lemma which states that if a triple $(\Lambda, P, Q)$ saturates DPI for fidelity, then $P$ is a nearest point to $Q$ over all points $P'$ with $\Lambda(P') = \Lambda(P)$ and conversely $Q$ is a nearest point to $P$ over all points $Q'$ with $\Lambda(Q') = \Lambda(Q)$.  
	\begin{tbox}
		\begin{lemma} \label{Lem:DPIimpliesProj}
			Let $P, Q \in \mathbb P_\mathcal H$ and $\Lambda \in \mathrm{CPT}(\mathcal H, \mathcal K)$.
			If the triple $(\Lambda, P, Q) $ saturates DPI for fidelity, then $Q$ is a projection of $P$ to the set $\Lambda^{-1} [\Lambda(Q)]$ and $P$ is a projection of $Q$ to the set $\Lambda ^{-1} [\Lambda(P)]$:
			\begin{equation}	\label{Eq:DPIImpProjWeak}
				Q \in \Pi_{\Lambda, \Lambda(Q)}[P] \quad \text{and} \quad P \in \Pi_{\Lambda, \Lambda(P)}[Q].
			\end{equation}
			If $P$ and $Q$ are invertible, then they are the unique projections to the respective sets:
			\begin{equation}	\label{Eq:DPIImpProj}
				Q = \Pi_{\Lambda, \Lambda(Q)}[P] \quad \text{and} \quad P = \Pi_{\Lambda, \Lambda(P)}[Q].
			\end{equation}
		\end{lemma}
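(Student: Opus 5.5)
The first claim is a direct consequence of the data-processing inequality. Write $C := \Lambda(Q)$ and take any $Q' \in \Lambda^{-1}[C]$. By the DPI for fidelity,
\begin{equation}
	\operatorname{F}(P, Q') \leq \operatorname{F}(\Lambda(P), \Lambda(Q')) = \operatorname{F}(\Lambda(P), C) = \operatorname{F}(\Lambda(P), \Lambda(Q)) = \operatorname{F}(P,Q).
\end{equation}
The last equality is exactly the saturation hypothesis on $(\Lambda, P, Q)$. Since $Q$ itself lies in $\Lambda^{-1}[C]$, it attains the upper bound $\operatorname{F}(\Lambda(P), C)$ over the feasible set, so $Q$ maximizes fidelity with $P$ there.

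Next we turn this into a statement about Bures distance. Every feasible point has trace $\operatorname{Tr}[Q'] = \operatorname{Tr}[\Lambda(Q')] = \operatorname{Tr}[C]$, because $\Lambda$ is trace preserving. Hence on the feasible set $\mathrm{B}(P,Q') = \operatorname{Tr}[P] + \operatorname{Tr}[C] - 2\operatorname{F}(P,Q')$, so maximizing fidelity is the same as minimizing $\mathrm{B}$, and $Q \in \Pi_{\Lambda, \Lambda(Q)}[P]$.

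The statement $P \in \Pi_{\Lambda,\Lambda(P)}[Q]$ follows by the same argument with the roles of $P$ and $Q$ exchanged. Fidelity is symmetric, so the saturation condition $\operatorname{F}(P,Q) = \operatorname{F}(\Lambda(P),\Lambda(Q))$ is symmetric in $P$ and $Q$, and the DPI bound applies verbatim to any $P' \in \Lambda^{-1}[\Lambda(P)]$.

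For the invertible case, apply Theorem~\ref{Thm:ProjectionUniquenessCombined}. If $P$ is full rank, then $P^0 = I$, so any two projections of $P$ onto $\Lambda^{-1}[\Lambda(Q)]$ coincide, and the set of projections is the singleton $\{Q\}$. Symmetrically, invertibility of $Q$ makes $P$ the unique projection of $Q$ onto $\Lambda^{-1}[\Lambda(P)]$.

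One technical point needs checking. The theorem was stated for triples with $C$ positive definite, but $\Lambda(Q)$ is positive definite here anyway: $Q > 0$ and $\Lambda$ is trace preserving, hence $\Lambda(Q) \geq \lambda_{\min}(Q)\Lambda(I) > 0$. Beyond this check, no step is a genuine obstacle.
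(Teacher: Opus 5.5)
Your DPI argument for the first claim is the same as the paper's: you bound $\mathrm{F}(P,Q')$ directly, while the paper phrases the same bound as a contradiction. The swap of $P$ and $Q$ is identical. The problem is in the uniqueness part. You route it through Theorem~\ref{Thm:ProjectionUniquenessCombined}, and to meet its hypothesis that the output matrix is positive definite you claim $\Lambda(Q) \geq \lambda_{\min}(Q)\Lambda(I) > 0$. The first inequality is fine, but $\Lambda(I)>0$ is false for general channels. Trace preservation makes $\Lambda^\dagger$ unital, not $\Lambda$. Take the replacement channel $\Lambda(X) = \mathrm{Tr}[X]\,|0\rangle\langle 0|$ with $\dim\mathcal{K}\geq 2$, or an isometric embedding into a larger space. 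Then $\Lambda(Q)$ is rank-deficient for every $Q>0$, so Theorem~\ref{Thm:ProjectionUniquenessCombined} as stated does not cover $(\Lambda, P, \Lambda(Q))$.

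The conclusion is still true, and the repair is short. The paper argues uniqueness directly from strict concavity of fidelity. For $P>0$ the map $Q'\mapsto \mathrm{F}(P,Q')$ is strictly concave, because $\mathrm{Tr}\sqrt{\,\cdot\,}$ is strictly concave on PSD matrices and $Q'\mapsto P^{1/2}Q'P^{1/2}$ is injective. The set $\Lambda^{-1}[\Lambda(Q)]$ is convex, so the maximizer is unique, and no positivity of the output matrix is needed. Two other fixes also work. You can note that the proof of Theorem~\ref{Thm:ProjectionUniquenessCombined} never actually uses positive definiteness of $C$. Or you can restrict the codomain of $\Lambda$ to $\mathrm{supp}\,\Lambda(I)$. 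Every $\Lambda(X)$ with $X\geq 0$ lies in that subspace, so the restricted map is still CPT with the same feasible sets, and there $\Lambda(Q)\geq\lambda_{\min}(Q)\Lambda(I)$ is genuinely positive definite. The same remark applies to $\Lambda(P)$ in the symmetric claim.
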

	\end{tbox}
	\begin{proof}
		We first show that if $(\Lambda, P, Q)$ saturates DPI, then $Q \in \Pi_{\Lambda, \Lambda(Q)}[P]$. 
		Suppose $(\Lambda, P, Q) $ saturates DPI: $\operatorname{F}(P, Q) = \operatorname{F}(\Lambda(P), \Lambda(Q))$.
		We want to show that for any $Q' \in \Lambda ^{-1}[\Lambda(Q)]$, it holds that $\operatorname{F}(P, Q') \leq \operatorname{F}(P, Q)$. 
		Towards contradiction, assume there exists $Q' \in \Lambda^{-1}[\Lambda(Q)]$ such that $ \operatorname{F}(P, Q') > \operatorname{F}(P, Q)$.
		Then we have
		\begin{equation}
			\operatorname{F}(P, Q') > \operatorname{F}(P, Q) = \operatorname{F}(\Lambda(P), \Lambda(Q)) = \operatorname{F}(\Lambda(P), \Lambda(Q')),  
		\end{equation}
		where the last equality follows from the fact that $\Lambda(Q') = \Lambda(Q)$.
		However, this violates DPI for fidelity for the triple $(\Lambda, P, Q')$. 
		Hence there cannot exist a strictly better feasible point $Q'$. 
		Thus $\operatorname{F}(P, Q') \leq \operatorname{F}(P, Q)$ for all $Q' \in \Lambda^{-1}[\Lambda(Q)]$, which implies $Q \in \Pi_{\Lambda, \Lambda(Q)}[P] $. 
		Swap $P$ and $Q$ in the above argument to obtain $P \in \Pi_{\Lambda, \Lambda(P)}[Q] $, which proves the first part.
		
		For the uniqueness, assume $P, Q > 0$.
		It follows from strict concavity of fidelity that any full-rank maximizer must necessarily be unique. 
		This concludes the proof.
	\end{proof}
	
	Next, we solve (certain) Bures projection problems analytically.
	
	\subsection{Unified closed-form for Bures Projections}
	We now define the \textit{Gamma map} which, for certain Bures projection problems, yields the solution.
	After the definition, we study the conditions under which the Gamma map yields the projection.
	\begin{dbox}	
		\begin{definition}[Gamma map] \label{Def:GammaMap} 
			Let $(\Lambda, C)$ denote a constraint pair for $C \in \mathbb P^+_\mathcal K$ and $\Lambda \in \mathrm{CPT}(\mathcal H, \mathcal K)$. 
			For any $P \in \mathbb P^+_\mathcal H$ such that $\Lambda(P) $ is full-rank, the \emph{Gamma map} $\Gamma_{\Lambda, C}: \mathbb P_\mathcal H \to \mathbb P_\mathcal H$ is defined as
			\begin{equation}	\label{Eq:GammaClosedForm}
				\begin{aligned}
					\Gamma_{\Lambda, C} [P] &:= P \star [\Lambda^\dagger (\Lambda(P) ^{-1} \# C)]^2 = [\Lambda^\dagger (\Lambda(P) ^{-1} \# C)] P [\Lambda^\dagger (\Lambda(P) ^{-1} \# C)].
				\end{aligned}
			\end{equation}
		\end{definition}
	\end{dbox}
	A sufficient condition for $\Lambda^\dagger (\Lambda(P) ^{-1} \# C)$ to be invertible is $\Lambda(P) > 0$ and $C > 0$. 
	Observe that $\Gamma_{\Lambda, C}$ map can be defined for any projection problem $(P, \Lambda, C)$.
	For certain channels---such as partial trace, pinching maps, and projective measurements with orthogonal projectors---the $\Gamma$ map is a closed-form for the projection.
	See Table~\ref{Tab:ClosedForms} for the explicit forms the $\Gamma$ map takes for these channels.
	
	We now derive two sufficient conditions for the Gamma map to yield the projection.
	\begin{tbox}
		\begin{theorem} \label{Thm:ProjectionAndFeasibility}
			Let $(\Lambda, P, C) \in \mrm{CPT}(\h, \mathcal K) \times \bbp_\h^+ \times \bbp_\mathcal{K}^+$ denote a projection problem where $\Lambda(P)$ is strictly positive. 
			Let $\{K_a\}_{a \in [n]}$ be a Kraus representation of $\Lambda$.
			The implications $1 \Longrightarrow 2 \Longleftrightarrow 3$ hold for the following statements.
			\begin{enumerate}
				\item $[K_a K_b^\dagger , \Lambda(P) ^{-1} \# C] = 0$ for all $a, b \in [n]$.
				\item $\Gamma_{\Lambda, C} [P] \in \Lambda^{-1}[C]$, or equivalently $\Lambda(\Gamma_{\Lambda, C} [P]) = C$.
				\item$ \Gamma_{\Lambda, C} [P] = \Pi_{\Lambda, C} [P]$.
			\end{enumerate}  
			Consequently, the Gamma map yields the projection if it yields a feasible point. 
		\end{theorem}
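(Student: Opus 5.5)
Write $X := \Lambda(P)^{-1} \# C$ and $G := \Gamma_{\Lambda,C}[P] = \Lambda^\dagger(X)\, P\, \Lambda^\dagger(X)$. The plan is to prove $1 \Rightarrow 2$ by a direct Kraus computation, $3 \Rightarrow 2$ trivially, and $2 \Rightarrow 3$ via the DPI-saturation criterion \eqref{Eq:LRDCondition} combined with Lemma~\ref{Lem:DPIimpliesProj}.

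\textbf{Step 1 ($1 \Rightarrow 2$).} With $\Lambda(Y) = \sum_a K_a Y K_a^\dagger$ and $\Lambda^\dagger(X) = \sum_b K_b^\dagger X K_b$, we get
\begin{equation*}
\Lambda(G) = \sum_{a,b,c} K_a K_b^\dagger X K_b P K_c^\dagger X K_c K_a^\dagger .
\end{equation*}
Statement 1 says $X$ commutes with every $K_a K_b^\dagger$. Using this, move $X$ leftward past $K_aK_b^\dagger$ and the right-hand $X$ rightward past $K_cK_a^\dagger$, giving $X K_a K_b^\dagger K_b P K_c^\dagger K_c K_a^\dagger X$. Trace preservation gives $\sum_b K_b^\dagger K_b = I$, and likewise for $c$. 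Hence
\begin{equation*}
\Lambda(G) = X \Lambda(P) X = C
\end{equation*}
by the Riccati characterisation in Proposition~\ref{Prop:MatGeoMean}, item 2. The only care needed is in the ordering of the sums when pulling $X$ out.

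\textbf{Step 2 ($3 \Rightarrow 2$).} This is immediate, since the projection lies in $\Lambda^{-1}[C]$ by definition.

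\textbf{Step 3 ($2 \Rightarrow 3$).} Assume $\Lambda(G) = C$. Since $C > 0$, $\Lambda(P) > 0$ and $\Lambda$ is CPT, the operator $Y := \Lambda^\dagger(X)$ is positive definite: for unit $v$, $\langle v, Y v\rangle = \mathrm{Tr}[X \Lambda(\kb{v})] \ge \lambda_{\min}(X)$ because $\Lambda(\kb{v})$ is a density matrix. So $G = YPY > 0$. By Proposition~\ref{Prop:MatGeoMean}, item 2, $Y$ is the unique PD solution of $G = YPY$, hence $P^{-1} \# G = Y = \Lambda^\dagger(\Lambda(P)^{-1} \# C) = \Lambda^\dagger(\Lambda(P)^{-1} \# \Lambda(G))$. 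This is exactly the right-hand side of \eqref{Eq:LRDCondition} for the triple $(\Lambda, P, G)$, both full-rank, so the triple saturates DPI. Lemma~\ref{Lem:DPIimpliesProj} then gives $G = \Pi_{\Lambda, \Lambda(G)}[P] = \Pi_{\Lambda, C}[P]$, with uniqueness from invertibility.

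The main obstacle is Step 3. One must check the hypotheses of \eqref{Eq:LRDCondition}, namely full rank of $G$ and $\Lambda(G)$, which is what the positivity argument for $Y$ supplies. One must also make sure the geometric-mean identity is applied in the right orientation ($P^{-1}\#G$ rather than $G^{-1}\#P$); the symmetry remark after \eqref{Eq:LRDCondition} covers either version. The "consequently" clause is just $2 \Rightarrow 3$.
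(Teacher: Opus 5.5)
Your proposal is correct and follows the paper's proof essentially step for step. It uses the same Kraus computation plus the Riccati equation for $1 \Rightarrow 2$, and for $2 \Rightarrow 3$ the same identification $P^{-1}\#\Gamma_{\Lambda,C}[P] = \Lambda^\dagger(\Lambda(P)^{-1}\#C)$ followed by the DPI-saturation criterion and Lemma~\ref{Lem:DPIimpliesProj}. The only cosmetic difference is that you verify $\Lambda^\dagger(\Lambda(P)^{-1}\#C) > 0$ by a direct quadratic-form bound, whereas the paper appeals to strict positivity of unital maps.
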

	\end{tbox} 
	\begin{proof}
		We first show $1 \Longrightarrow 2$. 
		Recall that $\Gamma_{\Lambda, C} [P] := P \star [\Lambda^{\dagger}(\Lambda(P) ^{-1} \# C)]^2$, which implies that
		\begin{equation}
			\begin{aligned}
				\Lambda(\Gamma_{\Lambda, C} [P]) &= \sum_{i=1}^{n} K_i\left[\left(\sum_{j=1}^{n} K_j^{\dagger} M K_j \right) P \left(\sum_{k=1}^{n} K_k^{\dagger}M K_k\right) \right] K_i^{\dagger} \\
				&= \sum_{i,j,k = 1}^{n} K_i K_j^{\dagger} M K_j P K_k^{\dagger} M K_k K_i^{\dagger},		
			\end{aligned}
		\end{equation}
		where $M \equiv \Lambda(P) ^{-1} \# C$. 
		Now use the assumption $[K_a K_b^{\dagger}, M] = 0$ for all $a, b \in [n]$ to obtain
		\begin{equation}
			\begin{aligned}
				\Lambda(\Gamma_{\Lambda, C} [P]) &= \sum_{i,j,k=1}^{n} M (K_i K_j^{\dagger} K_j P K_k^{\dagger} K_k K_i^{\dagger}) M \\ &=  [\Lambda(P)^{-1} \# C] \Lambda(P) [\Lambda(P)^{-1} \# C]  = C,
			\end{aligned}
		\end{equation} 
		where, in the second and third equalities, we used $\sum_{a=1}^{n} K_a^{\dagger} K_a = I_\mathcal H$ and the matrix Riccati equation respectively.
		We now show that $2 \Longrightarrow 3$. 
		Observe that
		\begin{equation}
			\Lambda(P), C > 0 \Longrightarrow \Lambda(P) ^{-1} \# C > 0\Longrightarrow \Lambda^{\dagger}(\Lambda(P) ^{-1} \# C) > 0,
		\end{equation}
		where the last implication follows from the fact that unital maps are strictly positive maps.
		We thus have $	Q \equiv \Gamma_{\Lambda, C} [P] := P \star  (\Lambda^{\dagger}(\Lambda(P) ^{-1} \# C))^2 > 0$, since we have assumed $P > 0$. 
		This allows us to write
		\begin{equation}	\label{Eq:GeoMeanProjectionProblem}
			P ^{-1} \# Q = \Lambda^{\dagger}(\Lambda(P) ^{-1} \# C). 
		\end{equation}
		Now assume $Q \equiv \Gamma_{\Lambda, C}[P] \in \Lambda^{-1}[C]$, or equivalently $\Lambda(Q) = C$.
		\eqref{Eq:GeoMeanProjectionProblem} then takes the form
		\begin{equation}
			P ^{-1} \# Q = \Lambda^{\dagger} (\Lambda(P) ^{-1} \# \Lambda(Q)),
		\end{equation}
		which, by~\eqref{Eq:LRDCondition}, is a necessary and sufficient condition for $(\Lambda, P, Q)$ to saturate DPI for fidelity.
		This further implies (by Lemma~\ref{Lem:DPIimpliesProj}) $Q \in \Pi_{\Lambda, C}[P]$. 
		Since $Q$ is full-rank, it is also the unique projection.
		
		The implication $3 \Longrightarrow 2$ follows trivially---if the $\Gamma$ map yields the projection, then, by definition, it must yield a feasible point.    
		This concludes the proof.
	\end{proof}
	The fact that the Gamma map yields the projection if it yields a feasible point provides a simple method to check if our proposed closed-form is suitable for a particular projection problem.
	
	We now look at some concrete examples of channels for which the Gamma map yields the projection. 
	These examples include the partial trace, pinching channels, and projective measurements with pairwise orthogonal projectors.

	\subsection{Partial Trace Projection}
	We now show that when the projection problem is defined by the partial trace channel, the Gamma map yields the closed-form solution. 
	We identify the relevant objects. 
	
	Let $\mathcal H = \mathcal X_1 \otimes\cdots\otimes \mathcal X_n$ and $\mathcal K = \mathcal X_k$ for some $k \in [n]$. Let $\Lambda \equiv \operatorname{Tr}_{\mathcal H \backslash \mathcal K}  \in \mathrm{CPT}(\mathcal H, \mathcal K)$ be the partial trace channel that discards every subsystem except $\mathcal K$.
	Let $C \in \mathbb P^+_{\mathcal X_k}$ and denote $\mathcal L \equiv \mathcal X_1 \otimes\cdots\otimes \mathcal X_{k-1}$ and $\mathcal R \equiv \mathcal X_{k+1} \otimes\cdots\otimes \mathcal X_{n}$ such that $\mathcal H = \mathcal L \otimes \mathcal K \otimes \mathcal R$.
	We then have the following result.
	\begin{tbox}	
		\begin{theorem}[Closed-form for marginal projection] \label{Thm:PartialTraceProj}
			Let $\mathcal H, \mathcal K$, and $ \Lambda := \operatorname{Tr}_{\mathcal H\backslash \mathcal K}$ be defined as above and $C \in \mathbb P_{\mathcal K}^+$ be arbitrary.
			For any $P \in \mathbb P_\mathcal H^+$, the Gamma map takes the form
			\begin{equation} \label{Eq:MarginalProjClosedForm}
				\begin{aligned}
					\Gamma_{\operatorname{Tr}_{\mathcal H \backslash \mathcal K}, C} [P] &= P \star (I_{\mathcal L} \otimes P_\mathcal K ^{-1} \# C  \otimes I_{\mathcal R})^2 \\
					&= P \st{K} (P_\mathcal K ^{-1} \# C)^2 \quad \quad 
					\in \operatorname{Tr}_{\mathcal H \backslash \mathcal K}^{-1} [C].
				\end{aligned} 
			\end{equation}
			where $P_\mathcal K = P_{\mathcal X_k} \equiv \mathrm{Tr}_{\mathcal H \backslash \mathcal K}[P]$ is the $\mathcal X_k$-marginal of $P$. 
			Then,
			\begin{equation}
				\Pi_{\mrm{Tr}_{\mathcal H \backslash \mathcal K}, C}[P] = \Gamma_{\mrm{Tr}_{\mathcal H \backslash \mathcal K}, C}[P] = P \st{K} (P_\mathcal K ^{-1} \# C)^2.
			\end{equation}
		\end{theorem}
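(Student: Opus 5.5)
The plan is to reduce the statement to Theorem~\ref{Thm:ProjectionAndFeasibility}. That theorem says it suffices to (i) compute the Gamma map explicitly for $\Lambda = \operatorname{Tr}_{\mathcal H\backslash\mathcal K}$ and (ii) check that the result is feasible, i.e.\ that its $\mathcal K$-marginal equals $C$. The implication $2 \Longrightarrow 3$ then yields $\Gamma = \Pi$ directly, so the DPI-saturation argument need not be repeated.

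For step (i), I first identify the adjoint of the partial trace. For $Y \in \mathbb M_{\mathcal K}$ and $H \in \mathbb M_{\mathcal H}$ we have
\[
\langle Y, \operatorname{Tr}_{\mathcal H\backslash\mathcal K}[H]\rangle = \langle I_{\mathcal L}\otimes Y \otimes I_{\mathcal R}, H\rangle .
\]
Hence $\Lambda^\dagger(Y) = I_{\mathcal L}\otimes Y\otimes I_{\mathcal R}$. Also $\Lambda(P) = P_{\mathcal K}$, which is positive definite because $P>0$ and partial traces of positive definite matrices are positive definite. Therefore $M := P_{\mathcal K}^{-1}\# C$ is well defined and positive definite.

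Substituting into Definition~\ref{Def:GammaMap} gives
\[
\Gamma_{\Lambda, C}[P] = (I_{\mathcal L}\otimes M\otimes I_{\mathcal R})\,P\,(I_{\mathcal L}\otimes M\otimes I_{\mathcal R}).
\]
Since $M>0$, the positive square root of $(I_{\mathcal L}\otimes M\otimes I_{\mathcal R})^2$ is $I_{\mathcal L}\otimes M\otimes I_{\mathcal R}$, and $(I\otimes M\otimes I)^2 = I\otimes M^2\otimes I$. This gives both displayed forms $P\star(I_{\mathcal L}\otimes M\otimes I_{\mathcal R})^2 = P\st{K} M^2$.

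For step (ii), I use the marginal identity \eqref{Eq:StarProdMarginalMulti}, $[A\star_{\mathcal X_k} K]_{\mathcal X_k} = A_{\mathcal X_k}\star K$. It is simply the statement that local operators on $\mathcal K$ pull out of the partial trace over the complementary systems: $\operatorname{Tr}_{\mathcal L\mathcal R}[(I\otimes M\otimes I)P(I\otimes M\otimes I)] = M P_{\mathcal K} M$. The matrix Riccati equation of Proposition~\ref{Prop:MatGeoMean}(2), with $A = P_{\mathcal K}$, $B = C$ and $X = P_{\mathcal K}^{-1}\# C$, then gives $M P_{\mathcal K} M = C$. So $\Gamma_{\Lambda,C}[P]\in \operatorname{Tr}_{\mathcal H\backslash\mathcal K}^{-1}[C]$.

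Alternatively, one could verify condition 1 of Theorem~\ref{Thm:ProjectionAndFeasibility}. The Kraus operators of the partial trace are $\langle l|\otimes I_{\mathcal K}\otimes\langle r|$, so each $K_aK_b^\dagger$ is a scalar multiple of $I_{\mathcal K}$ and commutes with $M$. I would still present the direct feasibility check as the main line because it is more transparent. Finally, invoking $2\Longrightarrow 3$ of Theorem~\ref{Thm:ProjectionAndFeasibility} gives $\Pi_{\operatorname{Tr}_{\mathcal H\backslash\mathcal K},C}[P] = \Gamma_{\operatorname{Tr}_{\mathcal H\backslash\mathcal K},C}[P]$.

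There is no real obstacle here. The only points needing care are the correct identification of $\Lambda^\dagger$ and the ordering of tensor factors in $\mathcal L\otimes\mathcal K\otimes\mathcal R$; everything else is bookkeeping.
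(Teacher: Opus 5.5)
Your proposal is correct and follows the paper's proof: identify $\Lambda^\dagger(Y)=I_{\mathcal L}\otimes Y\otimes I_{\mathcal R}$, substitute into the Gamma map, and verify feasibility via the marginal identity together with the Riccati equation $C=(P_{\mathcal K}^{-1}\# C)\,P_{\mathcal K}\,(P_{\mathcal K}^{-1}\# C)$. You then conclude by the implication $2\Longrightarrow 3$ of Theorem~\ref{Thm:ProjectionAndFeasibility}. Your side remark that each $K_aK_b^\dagger$ is a scalar multiple of $I_{\mathcal K}$, so condition 1 also holds, is a valid alternative check that the paper does not use.
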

	\end{tbox}
	\begin{proof}
		The form of the Gamma map follows directly from substitution (in Definition~\ref{Def:GammaMap}) as $\mrm{Tr}_{\mathcal H \backslash \mathcal K}^\dagger (K) = I_\mathcal{J} \otimes K \otimes I_\mathcal{L}$ for any $K \in \mathbb{M}_\mathcal{K}$.
		To show that the Gamma map yields the projection, we show that it yields a feasible point, which is done by computing the marginal:
		\begin{equation}
			\mrm{Tr}_{\cl{H \backslash K}}[\Gamma_{\operatorname{Tr}_{\mathcal H \backslash \mathcal K}, C} [P]] = [P \st{K} (P_\mathcal K ^{-1} \# C)^2]_{\cl{K}} = P_\cl{K} \star (P_\mathcal K ^{-1} \# C)^2 = C,
		\end{equation}
		where the second equality is by \eqref{Eq:StarProdMarginalMulti} and the last equality follows from Proposition~\ref{Prop:MatGeoMean}.
		By Theorem~\ref{Thm:ProjectionAndFeasibility} implies that it is the required projection.
	\end{proof}
	We thus have a closed-form for the projection for the partial trace channel (marginal projection).
	Set $\mathcal K = \mathbb C$, $\Lambda = \operatorname{Tr}$, and $c \in \mathbb R_+$. We have the Bures projection for the trace map to be:
	\begin{equation} \label{Eq:TrNormal}
		\Pi_{\operatorname{Tr}, c}[P] = c\frac{P}{\operatorname{Tr}[P]}.
	\end{equation}
	On choosing $c = 1$, we get the Bures projection to the set of density matrices, which is given by trace-normalization: $\Pi_{\operatorname{Tr}, 1}[P] = \frac{P}{\operatorname{Tr}[P]}$.
	
	The fact that the nearest density matrix (in fidelity/Bures distance) to a given PSD matrix is obtained by trace-normalization might not surprise many, as trace-normalization is perhaps the most \textit{natural} way to obtain a density matrix from a PSD matrix.
	However, trace-normalization need not yield the projection to the set of density matrices with respect to other distances.
	In particular, the Frobenius (Hilbert--Schmidt) projection of an arbitrary PSD matrix to the set of density matrices is \textit{not} given by trace-normalization.
	A geometric proof of this claim is seen by noting that trace-normalization corresponds to a scaling (along the line connecting the point to the origin) whereas Frobenius projection corresponds to dropping the perpendicular to the constraint-hyperplane. 
	In the case of trace distance, trace-normalization provides a \textit{non-unique} projection.
	
	\begin{tbox}
		\begin{cor} [Marginal projection for bipartite systems] \label{Cor:MarginalProjectionBipartite}
			Let $\mathcal H = \mathcal X \otimes \mathcal Y$ and $C \in \mathbb P^+_\mathcal X$.
			For any $P \in \mathbb P^+_{\mathcal X \otimes \mathcal Y }$, the $\mathcal X$-marginal projections are given by
			\begin{equation}
				\Pi_{\mrm{Tr}_\y, C}[P] = P \st{X} (P_\mathcal X ^{-1} \# C)^2.
			\end{equation}
			In particular, choosing $C = I_\mathcal X$ yields
			\begin{equation}
				\Pi_{\mrm{Tr}_\y, C}[P] = P \st{X} P_\mathcal X ^{-1}.
			\end{equation}
			where we use the fact that $A \# I = \sqrt{A}$ for any $A \geq 0$.
		\end{cor}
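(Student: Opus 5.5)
This corollary is the special case $n=2$, $\mathcal X_1 = \x$, $\mathcal X_2 = \y$, $k=1$ of Theorem~\ref{Thm:PartialTraceProj}. I will instantiate that theorem and then simplify the $C = I_\x$ case using a property of the geometric mean.

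First, set $\mathcal L = \mathbb C$ (trivial), $\mathcal K = \x$ and $\mathcal R = \y$. Then $\mrm{Tr}_{\h\backslash\mathcal K} = \mrm{Tr}_\y$, and the operator $I_{\mathcal L}\ot (P_\x^{-1}\# C)\ot I_{\mathcal R}$ becomes $(P_\x^{-1}\# C)\ot \iy$. Since $P>0$, the marginal $P_\x>0$, so the Gamma map is well defined. Theorem~\ref{Thm:PartialTraceProj} then gives $\Pi_{\mrm{Tr}_\y,C}[P] = P \st{X} (P_\x^{-1}\# C)^2$. If one wants a self-contained check of feasibility, use \eqref{Eq:StarProdIden}: the $\x$-marginal is $P_\x \star (P_\x^{-1}\# C)^2 = XP_\x X$ with $X = P_\x^{-1}\# C$. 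By the Riccati characterization in Proposition~\ref{Prop:MatGeoMean}, $XP_\x X = C$. Theorem~\ref{Thm:ProjectionAndFeasibility} then upgrades feasibility to optimality.

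For $C = I_\x$, I need the identity $A\# I = \sqrt{A}$. This holds because $A$ and $I$ commute, so $A\# I = A\hf I\hf = A\hf$; alternatively, directly from \eqref{Eq:DefMGM}, $A\hf\sqrt{A\ihf A\ihf}A\hf = A\hf A\ihf A\hf = A\hf$. With $A = P_\x^{-1}$ this gives $(P_\x^{-1}\# I_\x)^2 = P_\x^{-1}$. Hence the projection is $P\st{X} P_\x^{-1} = (P_\x\ihf\ot\iy)P(P_\x\ihf\ot\iy)$.

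There is no real obstacle here. The only point needing care is matching the multipartite notation $\star_{\mathcal X_k}$ to the bipartite $\st{X}$ when $\mathcal L$ is trivial, and noting that positive definiteness of $P$ guarantees $P_\x$ is invertible.
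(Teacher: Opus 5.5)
Your proposal is correct and matches the paper's approach: the paper likewise obtains the corollary by direct substitution into Theorem~\ref{Thm:PartialTraceProj}, and your feasibility check via the Riccati equation followed by Theorem~\ref{Thm:ProjectionAndFeasibility} is exactly how that theorem is proved. Your computation $P_\x^{-1}\# I_\x = P_\x^{-1/2}$, which gives $P\st{X}P_\x^{-1}$ for $C=I_\x$, is also right.
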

	\end{tbox}
	\begin{proof}
		Proof follows from relevant substitutions in Theorem~\ref{Thm:PartialTraceProj}.
	\end{proof}
	
	The operation, when $C = \ix$ has appeared previously in the literature as a \textit{partial normalization}~\cite{audenaert2008random, bruzda2009random, kukulski2021generating, Nechita2025Random} to obtain, from an arbitrary bipartite PSD matrix, a PSD matrix with one of the marginals equal to identity.
	{
		Under the Choi isomorphism, this corresponds to \textit{normalizing} a CP map to a CPT map (See Section~\ref{Sec:CPDecomp} for further details). 
		It has also appeared in the context of quantum process tomography~\cite{Xiao2022Twostage, Xiao2023AAPT, Barbera-Rodriguez2025Boosting}, where it served as a non-optimal, but closed-form, solution to optimization problems over Choi matrices of quantum channels.}
	However, to the best of our knowledge, its \textit{optimality}, in terms of being a Bures/fidelity projection, has not been previously identified.
	
	Suppose one is now interested in generalizing the \textit{partial-normalization} to a \textit{partial-scaling} to find, from a bipartite state $P \in \pxy$, a state $Q \in \pxy$ such that $Q_\x = R$.
	One might do this as 
	\begin{equation}
		P \mapsto Q \equiv P \st{X} P_\x ^{-1}  \st{X} R =  \left[\left(R \hf  P_\x \ihf \right) \ot I_\y\right]P\left[\left(R \hf  P_\x \ihf \right) \ot I_\y\right],
	\end{equation}  
	which indeed is a PSD matrix such that $Q_{\x} = R$. 
	However, this is \textit{not} optimal in the same way as the previous case as it is not the Bures projection onto the set of PSD matrices with $\x$-marginal being $\rho$. 
	The optimal operation is given by Corollary~\ref{Cor:MarginalProjectionBipartite}, which involves the matrix geometric mean.

	\subsection{Pinching Channel Projection}
	Next, we discuss the projection closed-form for \textit{pinching channels}~\cite{Watrous2018Theory, tomamichel2015quantum} which also includes the completely dephasing channel.
	Let $\mathcal E := \{E_i\}_{i \in [n]}$ be an orthogonal resolution of identity: $\sum_{i=1}^{n} E_i = I$ and $E_i E_j = \delta_{ij} E_i$.
	The pinching channel with respect to $\mathcal E$ is given by
	\begin{equation}
		\Lambda_\mathcal E(P) := \sum_{i=1}^{n} E_i P E_i.
	\end{equation}
	Observe that $\Lambda_\mathcal E$ is a self-adjoint map: $\Lambda_\mathcal E ^\dagger  = \Lambda_\mathcal E$.
	We now prove that the Gamma map yields a feasible point and thereby the projection.
	\begin{tbox}
		\begin{theorem}	\label{Thm:PinchingProjection}
			Let $\mathcal E = \{E_i\}_{i \in [n]} \subset \mathbb P_\mathcal H$ be an orthogonal resolution of $I_\mathcal H$, $\Lambda_\mathcal E$ be the corresponding pinching channel, and let $C \in \mathbb P_\mathcal H$ be in the image of $ \Lambda_\mathcal E$. 
			We then have
			\begin{equation}
				\begin{aligned}
					\Gamma_{\Lambda_\mathcal E, C}[P] = P \star (\Lambda_\mathcal E(\Lambda_\mathcal E(P) ^{-1} \# C))^2  \qquad \in \Lambda_\mathcal E ^{-1}[C]. 
				\end{aligned}
			\end{equation}
			If we pick a basis where the pinching channel visually has the effect of block-diagonalizing, then the above equation becomes
			\begin{equation}
				\Gamma_{\Lambda_\mathcal E, C}[P] = P \star \left(\bigoplus_{i=1}^n [P]_i ^{-1} \# [
				C]_i\right)^2
			\end{equation}
			where, $[P]_i$ and $[C]_i$ denotes the non-zero blocks of $E_i P E_i$ and $E_i C E_i$.
		\end{theorem}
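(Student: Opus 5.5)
The plan is to use Theorem~\ref{Thm:ProjectionAndFeasibility}. The Kraus operators of $\Lambda_\mathcal E$ are $\{E_i\}_{i\in[n]}$, so the products in condition~1 are $K_aK_b^\dagger = E_aE_b = \delta_{ab}E_a$. Condition~1 then reduces to showing that each $E_a$ commutes with $M := \Lambda_\mathcal E(P)^{-1}\# C$.

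First observe that $\Lambda_\mathcal E(P)$ is block diagonal, i.e.\ it commutes with every $E_i$. Since $P>0$, its compression $E_iPE_i$ is positive definite on $E_i\mathcal H$, so $\Lambda_\mathcal E(P)>0$. The hypothesis that $C$ lies in the image of $\Lambda_\mathcal E$ gives $C=\Lambda_\mathcal E(C)$, so $C$ is also block diagonal; for $C$ to be a valid output matrix we assume, as for any constraint pair, that $C>0$.

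Next I would show that $M$ is block diagonal. Every operation in the definition of the geometric mean preserves the block-diagonal algebra $\bigoplus_i E_i\mathbb M_\mathcal H E_i$: this holds for products, inverses, and square roots, the latter because they are continuous functional calculus of a block-diagonal Hermitian matrix. Hence
\begin{equation}
M = \bigoplus_{i=1}^n [\Lambda_\mathcal E(P)]_i^{-1}\#[C]_i = \bigoplus_{i=1}^n [P]_i^{-1}\#[C]_i ,
\end{equation}
where $[\Lambda_\mathcal E(P)]_i = [P]_i$. So $M$ commutes with each $E_a$, condition~1 holds, and Theorem~\ref{Thm:ProjectionAndFeasibility} yields feasibility, $\Gamma_{\Lambda_\mathcal E,C}[P]\in\Lambda_\mathcal E^{-1}[C]$, and hence that it is the projection.

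For the explicit form, use self-adjointness, $\Lambda_\mathcal E^\dagger = \Lambda_\mathcal E$. Since $M$ is block diagonal, $\Lambda_\mathcal E(M) = M$, and therefore
\begin{equation}
\Gamma_{\Lambda_\mathcal E,C}[P] = P\star M^2 = P\star\Big(\bigoplus_i [P]_i^{-1}\#[C]_i\Big)^2 .
\end{equation}
As a sanity check, feasibility can also be verified directly. Because $M$ commutes with each $E_i$, we have $E_i(MPM)E_i = M_i[P]_iM_i = [C]_i$ by the Riccati equation of Proposition~\ref{Prop:MatGeoMean}.

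The main obstacle is the block-diagonality of the geometric mean, which needs a careful argument that the functional calculus respects the direct-sum decomposition. I would handle it by writing each operand as $\bigoplus_i A_i$ and noting that $(\bigoplus_i A_i)^{1/2} = \bigoplus_i A_i^{1/2}$, which follows from uniqueness of the PSD square root.
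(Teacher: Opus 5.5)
Your proposal is correct and matches the paper's proof in substance. The paper likewise shows $\Lambda_\mathcal E(P)>0$, identifies $\Lambda_\mathcal E(P)^{-1}\# C=\sum_i [P]_i^{-1}\#[C]_i$ as block diagonal, and checks $\Lambda_\mathcal E(\Gamma_{\Lambda_\mathcal E,C}[P])=C$ blockwise via the Riccati equation, which is exactly your ``sanity check''. Routing feasibility through condition~1 of Theorem~\ref{Thm:ProjectionAndFeasibility} (using $K_aK_b^\dagger=\delta_{ab}E_a$) only repackages that same computation, and your argument that the functional calculus respects the direct sum fills in a step the paper merely asserts.
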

	\end{tbox}
	\begin{proof}
		Observe that $P > 0$ implies $\Lambda_\mathcal E(P) > 0$.
		This follows from the facts that pinching channels are unital~\cite{tomamichel2015quantum, Watrous2018Theory}, and thus maps positive definite matrices to positive definite matrices (unital maps are strictly positive maps~\cite[Section 2.2]{BhatiaPD}).
		We thus have
		\begin{equation}
			(\Lambda_\mathcal E(P))^{-1}  = \left(\sum_{i=1}^{n} E_i P E_i\right)^{-1} = \sum_{i=1}^{n} (E_i P E_i)^{-1} = \sum_{i=1}^{n} (P_i)^{-1} ,
		\end{equation}
		where we denote $P_i \equiv E_i P E_i$ and the inverse is taken on the support. 
		To see this is indeed the inverse, observe that
		\begin{equation}
			\left(\sum_{i=1}^{n} E_i P E_i\right) \cdot \left(\sum_{i=j}^{n} (E_j P E_j)^{-1}\right) = \sum_{i=1}^{n} E_i P E_i  \cdot (E_i P E_i) ^{-1}  = \sum_{i=1}^{n} E_i = \mathbb I.
		\end{equation}
		Noting that $C$ is a feasible point, and thus $C = \Lambda_\mathcal E(C)$, compute
		\begin{equation}
			\Lambda_\mathcal E(P)^{-1} \# C =  \Lambda_\mathcal E(P)^{-1} \# \Lambda_\mathcal E(C) = \left(\sum_{i=1}^{n} P_i ^{-1}\right) \# \sum_{j=1}^{n} C_j = \sum_{i=1}^{n} P_i ^{-1} \# C_i. 
		\end{equation}
		where $C_i \equiv E_i C E_i$ and the last equality comes from orthogonality of the projectors.
		We thus have
		\begin{equation}
			\Gamma_{\Lambda, C} [P] =  \left(\sum_{i=1}^{n} P_i ^{-1} \# C_i\right) P \left(\sum_{j=1}^{n} P_j ^{-1} \# C_j\right) = \sum_{i,j=1}^{n} P_i ^{-1}  \# C_i P_{i j} P_j^{-1} \# C_j, 
		\end{equation}
		where we denote $P_{i j } = E_i P E_j$.
		To see this is a feasible point, let us apply the pinching channel:
		\begin{equation}
			\begin{aligned}
				\Lambda_\mathcal E(\Gamma_{\Lambda, C} [P]) &= \sum_{i=k}^{n} E_k\left( \sum_{i,j=1}^{n} P_i ^{-1}  \# C_j P_{i j} P_j^{-1} \# C_j \right) E_k \\ 
				&= \sum_{k=1}^{n} P_k ^{-1}  \# C_k P_k P_k^{-1} \# C_k = \sum_{i=1}^{n} C_k = C,
			\end{aligned} 
		\end{equation}
		as required.
		This concludes the proof.		
	\end{proof}
	
	By choosing $ \mathcal E = \{\kb{i}\}_{i \in [d]}$ to be the rank-one projectors corresponding to the computational basis vectors, one gets a closed-form for projection with respect to the completely dephasing channels.
	\begin{tbox}
		\begin{cor} [Projection with respect to completely dephasing map]
			Let $\mathcal H = \mathbb{C}^d $ and  $\Delta: \mathbb P_\mathcal H \to \mathbb P_\mathcal H$ be the completely dephasing map.
			For a diagonal positive definite matrix $C := \sum_{i=1}^{d} C_i \ketbra{i}{i} \in \mathbb P^+_\mathcal H$, the projection is given by
			\begin{equation}
				\Pi_{\Delta, C} [P] = \Gamma_{\Delta, C} [P] = \sum_{i,j=1}^{d} \sqrt{\frac{C_i C_j}{P_{ii} P_{jj}}} P_{ij} |i \rangle \langle j| = C + \sum_{i \neq j} \sqrt{\frac{C_i C_j}{P_i P_j}} P_{ij} |i \rangle \langle j|  \quad \quad \in \Delta ^{-1}[C],
			\end{equation}
			for any $P \in \mathbb P^+_\mathcal H$.
		\end{cor}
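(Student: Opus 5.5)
This is a direct specialization of Theorem~\ref{Thm:PinchingProjection}, so the plan is to instantiate it rather than re-derive anything. Take $\mathcal E = \{\kb{i}\}_{i\in[d]}$, so that $\Lambda_\mathcal E = \Delta$. Since $C$ is diagonal, it lies in the image of $\Delta$ and satisfies $\Delta(C)=C$. For $P>0$, every diagonal entry $P_{ii} = \la i|P|i\ra$ is strictly positive, so $\Delta(P) = \sum_i P_{ii}\kb{i}$ is invertible. Theorem~\ref{Thm:PinchingProjection} then shows that $\Gamma_{\Delta,C}[P]\in\Delta^{-1}[C]$, and Theorem~\ref{Thm:ProjectionAndFeasibility} (implication $2\Rightarrow 3$) gives $\Pi_{\Delta,C}[P]=\Gamma_{\Delta,C}[P]$.

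It remains to make the Gamma map explicit. Both $\Delta(P)^{-1}$ and $C$ are diagonal, hence commute, so the geometric mean reduces to
\begin{equation}
\Delta(P)^{-1}\# C = \bigl(\Delta(P)^{-1} C\bigr)^{\frac12} = \sum_{i=1}^d \sqrt{\tfrac{C_i}{P_{ii}}}\,\kb{i}.
\end{equation}
This is the blockwise formula $[P]_i^{-1}\#[C]_i$ with $1\times 1$ blocks. The matrix is already diagonal, so applying $\Delta^\dagger=\Delta$ leaves it unchanged; call it $D$. Then $\Gamma_{\Delta,C}[P] = DPD$, whose $(i,j)$ entry is $\sqrt{C_i/P_{ii}}\,P_{ij}\,\sqrt{C_j/P_{jj}}$. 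This is exactly the stated double sum.

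For the second expression, separate the diagonal terms from the off-diagonal ones. When $i=j$ the entry equals $\frac{C_i}{P_{ii}}P_{ii}=C_i$, so the diagonal part is $C$, and what remains is the off-diagonal sum. Here the statement's $P_i$ should be read as $P_{ii}$. The same computation also confirms feasibility directly, since $\Delta(\Gamma_{\Delta,C}[P])=C$.

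None of these steps is a real obstacle. The only points that need care are checking that $P_{ii}>0$, which makes the inverse and the square roots well defined, and using commutativity of diagonal matrices to collapse the geometric mean.
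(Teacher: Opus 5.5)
Your proposal is correct and follows the paper's proof, which simply sets $E_i = \kb{i}$ in Theorem~\ref{Thm:PinchingProjection} and combines it with Theorem~\ref{Thm:ProjectionAndFeasibility}. The rest of your argument fills in details the paper leaves implicit: $P_{ii}>0$, the geometric mean of commuting diagonal matrices collapsing to $\sum_i \sqrt{C_i/P_{ii}}\,\kb{i}$, and reading the statement's $P_i$ as $P_{ii}$.
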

	\end{tbox}
	\begin{proof}
		Choose $E_i = \kb{i}$ for $i \in [d]$ in Theorem~\ref{Thm:PinchingProjection}.
	\end{proof}
	In particular, this corresponds to the Bures projection onto the set of PSD matrices with a specified diagonal.
	This closed-form expression might be of interest in the resource theory of coherence, although there one is interested in projecting to the set of diagonal PSD matrices~\cite{Streltsov2017Colloquium, Baumgratz2014Quantifying}, and not to the set of PSD matrices with a specific diagonal.

	\subsection{Projection With Respect to Projective Measurement}
	\label{Sec:ProjectiveMeasurementClosedForm}
	We now consider projective measurements (viewed as channels~\cite[Section 2.3]{Watrous2018Theory}) with mutually orthogonal projectors.
	Let $\mathcal E = \{E_i\}_{i \in [n]}$ be a collection of mutually orthogonal projectors that sum to identity.
	The associated measurement channel $\mathrm{M}_\mathcal E : \mathbb H_d \to \mathbb R^n$ and its adjoint $\mathrm{M}_\mathcal E^\dagger  :  \mathbb R^n \to \mathbb H_d$ are given by
	\begin{equation}
		\mathrm{M}_\mathcal E(H) := \sum_{i=1}^{n} \langle E_i, H \rangle \kb{i} \quad \text{and} \quad 	\mathrm{M}_\mathcal E^\dagger (v) = \sum_{i=1}^{n} v_i E_i,
	\end{equation}
	for $H \in \mathbb H_d$ and $v \in \mathbb R^n$.
	We now show that for projection problems defined by such measurements the Gamma map yields a feasible state and thus also the closed-form for Bures projection.
	\begin{tbox}
		\begin{theorem} \label{Thm:ProjectiveMeasurement}
			Let $\mathrm{M}_\mathcal E$ be the measurement channel associated with the projectors $\mathcal E := \{E_i\}_{i \in [n]} \subset \mathbb P_\mathcal H$ and $c \in \mathbb R_+^n$ be a positive vector. 
			Then
			\begin{equation}
				\Gamma_{\mathrm{M}_{\mathcal E}, c}[P] = P \star \left(\sum_{i=1}^{n} \sqrt{\frac{c_i}{\langle E_i, P \rangle}} E_i \right)^2 \qquad \in \mathrm{M}_\mathcal E ^{-1} [c]
			\end{equation}
			for any $P \in \mathbb P_\mathcal H^+.$	
		\end{theorem}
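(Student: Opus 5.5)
The plan follows the template of Theorem~\ref{Thm:PartialTraceProj} and Theorem~\ref{Thm:PinchingProjection}. First we compute the Gamma map explicitly by substituting into Definition~\ref{Def:GammaMap}, and then we verify feasibility so that Theorem~\ref{Thm:ProjectionAndFeasibility} ($2 \Longrightarrow 3$) applies. The output space here is $\mathbb{R}^n$, identified with diagonal matrices $\sum_i v_i \kb{i}$, so all the matrix operations on the output side become entrywise.

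\textbf{Computing the Gamma map.} Since $P>0$ and each $E_i \neq 0$, we have $\la E_i, P\ra > 0$. Hence $\mathrm{M}_\mathcal E(P) = \sum_i \la E_i,P\ra \kb{i}$ is positive definite, and its inverse is $\sum_i \la E_i,P\ra^{-1}\kb{i}$. The matrices $\mathrm{M}_\mathcal E(P)^{-1}$ and $c \equiv \sum_i c_i \kb{i}$ are both diagonal, so they commute. The geometric mean therefore reduces to the entrywise square root:
\begin{equation}
\mathrm{M}_\mathcal E(P)^{-1} \# c = \sum_{i=1}^n \sqrt{\frac{c_i}{\la E_i,P\ra}}\,\kb{i}.
\end{equation}
Applying the adjoint $\mathrm{M}_\mathcal E^\dagger(v) = \sum_i v_i E_i$ gives $G := \sum_i \sqrt{c_i/\la E_i,P\ra}\, E_i$. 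Then $\Gamma_{\mathrm{M}_\mathcal E,c}[P] = G P G = P\star G^2$, which is the displayed formula. Note that $G>0$, because the $E_i$ resolve the identity and all coefficients are positive.

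\textbf{Feasibility.} Using $E_iE_j = \delta_{ij}E_i$, we have $E_k G = G E_k = \sqrt{c_k/\la E_k,P\ra}\,E_k$. Therefore
\begin{equation}
\la E_k, GPG\ra = \tr{E_k G P G E_k} = \frac{c_k}{\la E_k,P\ra}\tr{E_k P E_k} = c_k,
\end{equation}
so $\mathrm{M}_\mathcal E(\Gamma_{\mathrm{M}_\mathcal E,c}[P]) = c$. Alternatively, one can check condition 1 of Theorem~\ref{Thm:ProjectionAndFeasibility} directly. The Kraus operators are $K_{i,a} = |i\ra\la a| E_i$, so $K_{i,a}K_{j,b}^\dagger = |i\ra\la a|E_iE_j|b\ra\la j|$, which vanishes unless $i=j$. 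The surviving terms are multiples of $\kb{i}$ and commute with the diagonal matrix $\mathrm{M}_\mathcal E(P)^{-1}\#c$.

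\textbf{Conclusion and main obstacle.} By Theorem~\ref{Thm:ProjectionAndFeasibility}, the Gamma map then equals $\Pi_{\mathrm{M}_\mathcal E,c}[P]$. The main subtlety is formal rather than computational. Theorem~\ref{Thm:ProjectionAndFeasibility} is stated for $\Lambda$ mapping into matrices, so we must make sure the measurement channel is treated as a CPT map into diagonal matrices on $\mathbb{C}^n$, with $c$ embedded as a positive definite diagonal matrix, and that the adjoint is taken consistently. Once that identification is made, every step above is a direct calculation relying on orthogonality of the projectors.
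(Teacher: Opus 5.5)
Your proposal is correct and follows essentially the same route as the paper. Both compute $\mathrm{M}_\mathcal E(P)^{-1}\# c$ entrywise, push it through the adjoint to obtain $\sum_i \sqrt{c_i/\langle E_i,P\rangle}\,E_i$, verify $\langle E_k,\Gamma_{\mathrm{M}_\mathcal E,c}[P]\rangle = c_k$ using $E_iE_j=\delta_{ij}E_i$, and then invoke Theorem~\ref{Thm:ProjectionAndFeasibility}. Your additional check of condition~1 via the Kraus operators $|i\rangle\langle a|E_i$ is a valid alternative route to feasibility that the paper does not spell out.
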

	\end{tbox}
	\begin{proof}
		We first show that $\Gamma_{\mathrm{M}_{\mathcal E}, c}[P] = P \star \left(\sum_{i=1}^{n} \sqrt{\frac{c_i}{\langle E_i, P \rangle} } E_i \right)^2$ by unpacking the definition. 
		To this end, observe that
		\begin{equation}
			\mathrm{M}_\mathcal E (P) ^{-1} = \sum_{i=1}^{n} \langle P, E_i \rangle ^{-1} \ketbra{i} \quad \Longrightarrow 	\quad 
			\mathrm{M}_\mathcal E (P) ^{-1} \# c = \sum_{i=1}^{n} \sqrt{\frac{c_i}{\langle E_i, P \rangle }} \ketbra{i}.
		\end{equation}
		The image under the adjoint is then obtained as $\mathrm{M}_\mathcal E^{\dagger}\left(\mathrm{M}_\mathcal E (P) ^{-1} \# c\right) = \sum_{i=1}^{n} \sqrt{\frac{c_i}{\langle E_i, P \rangle }} E_i,$
		which leads to the form
		\begin{equation}
			\Gamma_{\mathrm{M}_{\mathcal E}, c}[P] = P \star \left(\sum_{i=1}^{n} \sqrt{\frac{c_i}{\langle E_i, P \rangle}} E_i \right)^2
		\end{equation}
		as claimed.
		To show that $\Gamma_{\mathrm{M}_{\mathcal E}, c}[P]$ is a feasible point, compute
		\begin{equation}
			\mathrm{M}_\mathcal E \left( \Gamma_{\mathrm{M}_{\mathcal E}, c}[P]\right) = \sum_{i=1}^{n} \left\langle E_i, \Gamma_{\mathrm{M}_{\mathcal E}, c}[P] \right \rangle \ketbra{i} = \sum_{i=1}^{n} c_i \ketbra{i} = c.
		\end{equation}
		The second equality follows from
		\begin{equation}
			\left\langle E_i, \Gamma_{\mathrm{M}_{\mathcal E}, c}[P] \right \rangle = \sum_{j,k=1}^{n} \sqrt{\frac{c_j c_k}{\langle E_j, P \rangle \langle E_k, P \rangle }} \operatorname{Tr}\left[E_i E_j P E_k \right] = \frac{c_i}{\langle E_i, P \rangle} \langle E_i, P \rangle = c_i,
		\end{equation}
		where we have used $E_i E_j = \delta_{i j}E_i$ as $\mathcal E := \{E_i\}_{i \in [n] }$ form a collection of mutually orthogonal projectors.
		We have thus shown $\Gamma_{\Lambda_\mathcal E, C}[P] \in \mathrm{M}_\mathcal E ^{-1}[c]$ which, by Theorem~\ref{Thm:ProjectionAndFeasibility}, implies it is also the projection $\Pi_{\mathrm{M}_\mathcal E, c}[P]$.	
		This concludes the proof.
	\end{proof}
	
	\subsection{Ensemble Projection} \label{Sec:EnsProj}
	The final type of projection we discuss is the projection of \textit{ensembles} of PSD matrices.
	Though this is a special case of the partial-trace projection, it is useful to study it separately.
	We first define the Bures distance between two ensembles.
	\begin{dbox}
		\begin{definition} \label{Def:BuresDistEnsemble}
			Let $\mathcal{P} := (P_i)_\inn \subset \bbp_d$ and $\mathcal{Q} := (Q_i)_\inn \subset \bbp_d$ be two PSD ensembles of equal length. 
			The squared Bures distance between $\mathcal{P}$ and $\mathcal{Q}$ is defined as
			\begin{equation} \label{Eq:BuresDistEnsemble}
				\mrm{B}(\cl{P}, \cl Q) := \sumin \mrm{B}(P_i, Q_i). 
			\end{equation}
		\end{definition}
	\end{dbox} 
	The equality follows from the fact that the squared Bures distance (\textit{and square-root} fidelity) is additive under direct sum~\cite{Watrous2018Theory, Wilde2013}.
	That is,
	\begin{equation}
		\mrm{B}\left( \sumin \kb{i} \ot P_i, \sumin \kb{i} \ot Q_i\right) = \sumin \mrm{B}(P_i, Q_i),
	\end{equation}
	for any pair of tuples of PSD matrices $(P_i)_{\inn}, (Q_i)_{\inn} \subset \mathbb{P}_d$. 
	The validity of Definition~\ref{Def:BuresDistEnsemble} as a distance follows from the fact that the (PSD) block-matrices of the above form are bijective to the set of $n$-length ensembles of $d \times d$ PSD matrices, and Bures distance is a bona fide distance over PSD matrices.  
	
	\paragraph{Bures projection of ensembles.} The notion of Bures distance between ensembles allows one to study the \textit{Bures projection} of ensembles.
	Let $\mathcal P = (P_1, \ldots, P_n)$ be an ensemble of positive definite matrices in $\mathbb P_d^+$ with $ P := \sum_{i=1}^{n} P_i$. Let $Q \in \mathbb P^+_d$ and define
	\begin{equation}
		\mathrm{Dec}_n (Q) := \left\{\mathcal Q:= (Q_1, \ldots, Q_n) \in {\mathbb P_d}^{\times n} :\sum_{i=1}^{n} Q_i = Q\right\}
	\end{equation}
	to be the set of all $n$-length (PSD) \textit{decompositions} of $Q$ (which is a spectrahedron~\cite{chiribella2024extreme}). 
	The problem of interest is to project $\mathcal P$ to this compact and convex set.
	Indeed we have  
	\begin{equation} \label{Eq:EnsembleProjEquivalence}
		\Pi_{\mrm{Dec}_n(Q)}[\cl P] :=  \argmin_{\mathcal{Q} \in \mrm{Dec}_n(Q)}\mrm{B}(\cl P, \cl Q) = \argmin_{(Q_i)_i \in \mathrm{Dec}_n (Q)}  \sum_{i=1}^{n} \mathrm{B}(P_i, Q_i) = 	\argmax_{(Q_i)_i \in \mathrm{Dec}_n (Q)} \sum_{i=1}^{n} \mathrm{F}(P_i, Q_i).
	\end{equation}
	{The last equality follows from the fact that the trace terms ($\sumin \tr{P_i+Q_i}$) are constant for a given problem.}  
	Note that choosing $Q = I_d$ gives the projection to the set of $n$-outcome POVMs.
	
	The following lemma allows us to transform the ensemble projection problem to an instance of marginal projection and allow us to use Theorem~\ref{Cor:MarginalProjectionBipartite} to derive a closed-form expression for the optimal ensemble elements. 
	
	\begin{tbox}
		\begin{lemma} \label{Lem:DPIBlockDiagonal}
			Let $\cl{P} := (P_i)_\inn  \subset \bbp_\y^+$, $\x \equiv \cn$, and $\mathbf{P} := \sumin \kb{i} \ot P_i\in \mathbb{P}_{\mathcal X \otimes \mathcal Y}$.
			The Bures projection of $\mathbf{P}$ onto the feasible set
			\begin{equation}
				\mathrm{Tr}_\mathcal X^{-1}[Q]
				:= \left\{ \mathbf{Q} \in \mathbb{P}_{\mathcal X \otimes \mathcal Y} \,:\, \mathrm{Tr}_\mathcal X[\mathbf{Q}] = Q \right\}
			\end{equation}
			is a block-diagonal matrix with the same block structure as $\mathbf{P}$.
		\end{lemma}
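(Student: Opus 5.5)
The plan is to read the projection off the closed form in Theorem~\ref{Thm:PartialTraceProj} (in its $\y$-marginal version), and then check that the resulting matrix inherits the block structure of $\mathbf{P}$.

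First, I would verify the hypotheses. Each $P_i$ is positive definite, so $\mathbf{P} = \sumin \kb{i} \ot P_i$ is positive definite. Its $\y$-marginal is $\mathbf{P}_\y = \mrm{Tr}_\x[\mathbf{P}] = \sumin P_i = P$, which is also positive definite. By Theorem~\ref{Thm:ProjectionUniquenessCombined} the projection is therefore unique.

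Next, I would apply Theorem~\ref{Thm:PartialTraceProj} with $\mathcal H = \xoy$, $\mathcal K = \y$, $\mathcal L = \x$ and $\mathcal R = \mathbb C$. This gives
\begin{equation}
\Pi_{\mrm{Tr}_\x, Q}[\mathbf{P}] = \mathbf{P} \st{Y} (P^{-1} \# Q)^2 = (\ix \ot M)\,\mathbf{P}\,(\ix \ot M), \qquad M := P^{-1}\# Q .
\end{equation}

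The remaining step is to show this matrix is block diagonal. The operator $\ix \ot M = \sumin \kb{i} \ot M$ is itself block diagonal with respect to $\{\kb{i} \ot I_\y\}$, so it commutes with each projector $\kb{i}\ot I_\y$. Expanding the product gives
\begin{equation}
(\ix \ot M)\,\mathbf{P}\,(\ix \ot M) = \sumin \kb{i} \ot M P_i M ,
\end{equation}
which has the same block structure as $\mathbf{P}$, with blocks $Q_i = M P_i M$. As a consistency check, $\sumin Q_i = M P M = Q$ by the Riccati equation in Proposition~\ref{Prop:MatGeoMean}, so the point is feasible.

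There is essentially no obstacle; the only care needed is to get the orientation of the partial trace right, i.e.\ to use the $\y$-marginal form of Theorem~\ref{Thm:PartialTraceProj}. As an alternative, one could avoid the closed form: the pinching $\Lambda_{\mathcal E}$ with $\mathcal E = \{\kb{i}\ot I_\y\}_{i\in[n]}$ fixes $\mathbf{P}$, preserves the constraint $\mrm{Tr}_\x[\mathbf{Q}] = Q$, and does not decrease $\mrm{F}(\mathbf{P}, \cdot)$ by data processing. Uniqueness of the projection then forces $\Lambda_{\mathcal E}(\mathbf{Q}^\star) = \mathbf{Q}^\star$, so the optimizer is block diagonal.
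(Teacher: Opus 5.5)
Your proof is correct, but your main route differs from the paper's. The paper's proof is your closing alternative. For any feasible $\mathbf{Q}=\sum_{i,j}|i\rangle\langle j|\otimes Q_{ij}$, data processing under the pinching with Kraus operators $\{|i\rangle\langle i|\otimes I_{\mathcal Y}\}$ gives $\mathrm{F}(\mathbf{P},\mathbf{Q})\le \mathrm{F}\bigl(\mathbf{P},\sum_i |i\rangle\langle i|\otimes Q_{ii}\bigr)$. The pinched matrix is still feasible, so the optimum is block diagonal. Your explicit appeal to uniqueness is a step the paper leaves implicit.

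Your main argument instead reads block-diagonality off the closed form of Theorem~\ref{Thm:PartialTraceProj}. This is not circular, since that theorem is proved independently via Theorem~\ref{Thm:ProjectionAndFeasibility}. It has the advantage of producing the blocks $MP_iM$ at once. In fact it is essentially the computation the paper performs in the proof of Theorem~\ref{Thm:EnsembleProjection}, so on your route the lemma becomes a one-line corollary of the closed form.

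The pinching argument, in turn, is independent of the closed form and its hypotheses. It uses neither $Q>0$ nor the geometric mean. It works for any point fixed by the pinching and any feasible set the pinching maps into itself. This is why it serves as a structural justification that restricting to block-diagonal feasible points, i.e.\ to $\mathrm{Dec}_n(Q)$, loses nothing.
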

	\end{tbox}
	\begin{proof}
		This follows from the data-processing inequality for the fidelity (equivalently Bures distance).
		Let $\mathbf{Q} := \sumin \ketbra{i}{j} \ot Q_{ij}$ be the block-matrix decomposition of arbitrary $\mathbf{Q} \in \mathrm{Tr}_\mathcal X^{-1}[Q]$. 
		The DPI for fidelity, with respect to the pinching channel $\Lambda_\mathcal{E} \in \mrm{CPT}(\xoy, \xoy)$ with Kraus operators $\{\kb{i} \ot \iy\}_\inn$ then implies
		\begin{equation}
			\operatorname{F}\left( {\sum_{i=1}^{n} \ketbra{i} \otimes  P_i}, \sum_{i,j=1}^{n} |i \rangle \langle j| \otimes Q_{i j}\right)  \leq \operatorname{F}\left( \sum_{i=1}^{n} \ketbra{i} \otimes  P_i, \sum_{i=1}^{n} |i \rangle \langle i| \otimes Q_{i i}\right).
		\end{equation}
		Since $ \sumin \ketbra i j \ot Q_{ij} \in \mathrm{Tr}_\mathcal X^{-1}[Q]\Rightarrow \sumin \kb i \ot Q_{ii} \in \mathrm{Tr}_\mathcal X^{-1}[Q]$, we have that the projection of any block-diagonal matrix will also be block-diagonal. 
		This concludes the proof.
	\end{proof}
	
	Using a direct-sum construction and the closed-form for marginal projection, we now derive a closed-form expression for the optimal ensemble.	
	\begin{tbox}
		\begin{theorem} [Ensemble projection] \label{Thm:EnsembleProjection}
			Let $\mathcal P = (P_1, \ldots, P_n) \subset \mathbb P^+_\mathcal Y$ be a PSD ensemble and let $Q \in \mathbb P^+_\mathcal Y$.
			Then the ensemble projection $ \Pi_{\mrm{Dec}_n(Q)}[\cl P] \equiv \mathcal Q = (Q_1, \ldots, Q_n) \subset \mathbb P^+_\mathcal Y$ is given by
			\begin{equation} \label{Eq:EnsembleProjCF}
				Q_i = P_i \star (P^{-1} \# Q)^2 = (P^{-1} \# Q) P_i (P^{-1} \# Q) 
			\end{equation}
			for all $\inn$, where $P := \sum_{i=1}^{n} P_i$.
		\end{theorem}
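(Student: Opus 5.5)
We reduce the ensemble problem to a marginal projection on a bipartite system and then apply the partial-trace closed form.

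\textbf{Step 1: embed the ensemble.} Set $\x \equiv \cn$ and form the block-diagonal matrix $\mathbf{P} := \sumin \kb{i}\ot P_i \in \bbp_{\xoy}^+$. It is positive definite because every $P_i>0$. Its $\y$-marginal is $\mathbf{P}_\y = \mrm{Tr}_\x[\mathbf{P}] = \sumin P_i = P$.

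\textbf{Step 2: identify the two problems.} By Lemma~\ref{Lem:DPIBlockDiagonal}, the projection of $\mathbf{P}$ onto $\mrm{Tr}_\x^{-1}[Q]$ is block-diagonal. Block-diagonal elements $\sumin \kb{i}\ot Q_i$ of $\mrm{Tr}_\x^{-1}[Q]$ are in bijection with $\mrm{Dec}_n(Q)$. By additivity of fidelity under direct sums,
\begin{equation}
\operatorname F\Bigl(\mathbf P, \sumin \kb{i}\ot Q_i\Bigr) = \sumin \operatorname F(P_i,Q_i).
\end{equation}
The maximum of the left side over the full set $\mrm{Tr}_\x^{-1}[Q]$ is attained on the block-diagonal subset. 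It therefore equals the maximum of $\sumin \operatorname F(P_i,Q_i)$ over $\mrm{Dec}_n(Q)$. Consequently, the maximizers correspond under the bijection, and by \eqref{Eq:EnsembleProjEquivalence} the ensemble projection is read off from the marginal projection $\Pi_{\mrm{Tr}_\x,Q}[\mathbf P]$. Uniqueness follows from Theorem~\ref{Thm:ProjectionUniquenessCombined}, since $\mathbf P$ is full rank.

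\textbf{Step 3: apply the closed form.} Use Theorem~\ref{Thm:PartialTraceProj} with the kept system $\mathcal K=\y$. Equivalently, use the $\y$-version of Corollary~\ref{Cor:MarginalProjectionBipartite}, which is listed in Table~\ref{Tab:ClosedForms}. This gives
\begin{equation}
\Pi_{\mrm{Tr}_\x,Q}[\mathbf P] = \mathbf P \st{Y} (P^{-1}\# Q)^2 = \bigl(\ix\ot (P^{-1}\#Q)\bigr)\mathbf P\bigl(\ix\ot (P^{-1}\#Q)\bigr).
\end{equation}
Since the conjugating operator acts trivially on $\x$, block-diagonality is preserved, and the $i$-th block is $(P^{-1}\#Q)P_i(P^{-1}\#Q)=P_i\star(P^{-1}\#Q)^2$. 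This is the claimed formula.

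As a sanity check, summing the blocks gives $(P^{-1}\#Q)P(P^{-1}\#Q)=Q$ by the Riccati property in Proposition~\ref{Prop:MatGeoMean}, so the result lies in $\mrm{Dec}_n(Q)$. Each $Q_i$ is positive definite because $P^{-1}\#Q>0$.

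The only delicate point is Step 2: one must argue that optimizing over the larger set $\mrm{Tr}_\x^{-1}[Q]$ gives the same optimum as optimizing over block-diagonal matrices. Lemma~\ref{Lem:DPIBlockDiagonal} settles this, via the DPI under pinching. The remaining steps are direct substitution.
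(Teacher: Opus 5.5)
Your proposal is correct and follows the paper's proof essentially step for step. Both embed the ensemble as the block-diagonal $\mathbf P$ with $\x \equiv \mathbb{C}^n$ (the paper's proof writes $\mathbb{C}^d$, a typo), invoke Lemma~\ref{Lem:DPIBlockDiagonal} to justify block-diagonality, apply the $\y$-marginal closed form of Theorem~\ref{Thm:PartialTraceProj}, and read off the blocks. One small refinement: the step you call delicate is not really needed. The closed-form maximizer over the larger set $\mathrm{Tr}_\x^{-1}[Q]$ is visibly block-diagonal, so it is automatically the (unique) maximizer over the block-diagonal subset, and the lemma is dispensable.
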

	\end{tbox}
	\begin{proof}
		Let $\x \equiv \mathbb{C}^d$.
		The proof applies the closed-form for marginal projection to the direct-sum matrix $\mathbf{P}:= \sumin \kb{i} \ot P_i \in \mathbb{P}_{\xoy}$ to project it to the set $\mathrm{Tr}_\mathcal X ^{-1} [Q] := \{\mathbf{Q} \in \mathbb P_{\mathcal X \otimes \mathcal Y } : \mathrm{Tr}_\mathcal X[\mathbf{Q}] = Q \}. $
		Using the closed-form for marginal projection (\eqref{Eq:MarginalProjClosedForm}), we get
		\begin{equation}
			\Pi_{\mrm{Tr}_\x, Q}{[\mathbf{P}]} = \mathbf{P} \st{Y} (P \iv \# Q)^2= \sumin \kb{i} \ot P_i \star (P\iv \# Q)^2 \equiv \sumin \kb{i} \ot Q_i. 
		\end{equation}
		The optimum is achieved at a block-diagonal matrix, and we return the corresponding blocks as the optimal ensemble.
	\end{proof}
	
	Thus, we have a remarkably simple closed-form for the nearest ensemble (with a given sum) to a given ensemble of PSD matrices.
	By~\eqref{Eq:EnsembleProjEquivalence}, this ensemble also maximizes the \textit{total (square root) fidelity} with the original ensemble.
	
	The above closed form should find applications in optimization problems involving fidelity.
	For example, if we choose $\mathcal P \equiv (\rho_1, \ldots, \rho_n)$ to be a possibly weighted ensemble of states and $Q = \sigma$ to be a fixed density matrix, then the closed-form gives the closest ensemble to $\mathcal P$ which is a decomposition of $\sigma$.
	By the bijection between PSD and weighted quantum states, this optimal ensemble is equivalent to a weighted ensemble of quantum states.
	
	\textbf{Remark.} We note an interesting observation regarding the form of ensemble projection.
	The ensemble elements of $\Pi_{\mrm{Dec}_n(Q)}[\mathcal{P}]$ is given by $Q_i = P_i \star (P \iv \# Q)^2 \equiv \frac{P_i}{(Q ^{-1} \# P)^2}$ for all $\inn$. 
	Suppose all the matrices involved pairwise commute.
	Then the relation reduces to
	\begin{equation}
		Q_i  = \frac{Q}{P} P_i,
	\end{equation}
	which is reminiscent of the \textit{scaling} one does in the case of vectors, where to obtain a vector that sums to $\bar q \in \mathbb R_+$ from a vector $p \in \mathbb{R}_+^d $, one performs the scaling $p \mapsto \frac{\bar q}{\bar p} p$ for $\bar p := \sum_{i=1}^{d} p_i$.
	Thus, the Bures projection of ensembles can be thought of as a non-commutative scaling operation. 
	
	\paragraph{Ensemble projection to POVMs.} We return to the non-commutative setting to discuss a particularly interesting choice of $Q$---namely $Q = I_\mathcal Y$.
	For this choice, the ensemble is projected to the compact set of $n$-outcome measurements (POVMs). 
	The closed-form for ensemble projection then yields the familiar \textit{pretty good measurement} (PGM)~\cite{belavkin1975optimal, hausladen1994pretty} as the following corollary reveals.
	\begin{tbox}
		\begin{cor}  \label{Cor:PGMasEnsembleProjection}
			Let $\mathcal P = (P_1, \ldots, P_n) \subset \mathbb  P_d^+$ be an ensemble and $P := \sum_{i=1}^{n} P_i$.
			The ensemble projection of $\mathcal P$ onto the set of $n$-outcome POVMs (equivalent to $\mathrm{Dec}_n(I_d)$) is given by
			\begin{equation}
				Q_i = \frac{P_i}{P} = P ^{-\frac12} P_i P ^{-\frac12},
			\end{equation} 
			which is the `pretty good measurement' associated with the ensemble $\mathcal P$.
		\end{cor}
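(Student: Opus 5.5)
This corollary is a direct specialization of Theorem~\ref{Thm:EnsembleProjection} to the output matrix $Q = I_d$. I would first note that $\mathrm{Dec}_n(I_d)$ is exactly the set of $n$-outcome POVMs on $\mathbb{C}^d$, since its elements are $n$-tuples of PSD matrices summing to the identity. The ensemble projection of $\mathcal P$ onto this set is therefore, by Theorem~\ref{Thm:EnsembleProjection},
\begin{equation*}
	Q_i = P_i \star (P^{-1} \# I_d)^2 = (P^{-1} \# I_d)\, P_i\, (P^{-1} \# I_d),
\end{equation*}
where $P = \sum_{i=1}^n P_i > 0$ because each $P_i$ is positive definite.

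The only computation is the geometric mean $P^{-1} \# I_d$. Since $P^{-1}$ commutes with $I_d$, the commuting case of \eqref{Eq:DefMGM} gives $P^{-1} \# I_d = (P^{-1})^{\frac12} = P^{-\frac12}$. As a cross-check, Proposition~\ref{Prop:MatGeoMean} says $X = P^{-1} \# I_d$ is the unique positive definite solution of $I_d = X P X$, and $X = P^{-\frac12}$ solves it. Hence $(P^{-1}\# I_d)^2 = P^{-1}$, and
\begin{equation*}
	Q_i = P_i \star P^{-1} = P^{-\frac12} P_i P^{-\frac12} = \frac{P_i}{P},
\end{equation*}
using the symmetrized-division notation from Section~\ref{Sec:Preliminaries}.

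Finally I would identify $P^{-\frac12} P_i P^{-\frac12}$ as the standard definition of the pretty good measurement for the (weighted) ensemble $\mathcal P$. Feasibility is also immediate: $\sum_i Q_i = P^{-\frac12} P P^{-\frac12} = I_d$.

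There is no real obstacle. The only points needing care are that $P$ is invertible, which holds because the $P_i$ are positive definite and so makes the geometric mean well defined, and that uniqueness of the projection is inherited from Theorem~\ref{Thm:EnsembleProjection}, equivalently from Theorem~\ref{Thm:ProjectionUniquenessCombined} applied to the full-rank block-diagonal matrix $\sum_i \kb{i}\ot P_i$.
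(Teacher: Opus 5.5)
Your proposal is correct and follows the paper's proof: the paper obtains this corollary by setting $Q = I_d$ in Theorem~\ref{Thm:EnsembleProjection}. Your explicit evaluation $P^{-1} \# I_d = P^{-\frac12}$ and the feasibility and uniqueness remarks only fill in details the paper leaves implicit.
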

	\end{tbox}
	\begin{proof}
		The proof follows from Theorem~\ref{Thm:EnsembleProjection} on setting $Q = I_d$.
	\end{proof}
	As we will discuss in Section~\ref{Sec:PGMManifestation}, this result endows the PGM with new operational and geometric interpretations.
	Having discussed the explicit form of the Bures projection for a few important channels, we now look at a related, but slightly different concept, which we call the \textit{prior-channel decomposition} of CP maps.

	\section{Prior-Channel Decomposition of Completely Positive Maps} \label{Sec:CPDecomp}
	We now discuss how every CP map\footnote{with appropriate conditions on the support.} admits a \textit{unique} decomposition into a quantum channel and a \textit{prior} PSD matrix,\footnote{We expect some of these results to be already known, but were unable to find a reference.} with the channel and prior acting as the normalized and unnormalized components, respectively.  
	We will then discuss the geometric aspects of this decomposition and how it naturally relates to the fidelity/Bures projections at the level of Choi matrices, and, equivalently, to the Frobenius projections at the level of Stinespring operators. 
	We will use this decomposition (and its converse operation) extensively in Section~\ref{Sec:ApplicationsAndManifestations} to discuss the applications of our results. 
	
	To motivate the results of this section, we first discuss a natural decomposition of PSD matrices $\mathbb P_\mathcal X$---into density matrices (normalized component) and positive scalars (unnormalized component)---and its geometry.
	\begin{tbox}
		\begin{prop} \label{Prop:PSDDecomp}
			Let $P \in \mathbb{P}_\x$ such that $P \neq 0$.
			The following statements hold. 
			\begin{enumerate}
				\item There exists a unique pair $(\rho, s) \in \dx \times \mathbb{R}^+$ such that $P = s \rho$.
				\item $\rho = \Pi_{\dx}[P] = \frac{P}{\tr{P}}$.
				\item For any purification $|u \ra  \in \mathrm{Pur}_\z (P)$, it holds that $\Pi_{\mathbb{S}_{\z \ot \x}}^{\mathrm{Eu}}[|u \ra ] \in \mathrm{Pur}_{\z}(\rho)$,
			\end{enumerate} 
			where $\Pi^\mrm{Eu}_{\mathbb{S}_{\z \ot \x}}[\cdot]$ denotes the Euclidean projection to the unit sphere in ${\z \ot \x}.$
			Moreover, $s = \tr{P} = \|u\|^2_2.$
		\end{prop}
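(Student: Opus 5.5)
The plan is to handle the three items in order, relying mainly on the trace-map case of the marginal projection already established.

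For item 1, existence is immediate: since $P \neq 0$ and $P \geq 0$, we have $\tr{P} > 0$. Setting $s := \tr{P}$ and $\rho := P/\tr{P} \in \dx$ gives $P = s\rho$. For uniqueness, suppose $P = s\rho = s'\rho'$ with $\rho, \rho' \in \dx$. Taking traces gives $s = s'$, and dividing by this positive scalar gives $\rho = \rho'$.

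For item 2, I would invoke \eqref{Eq:TrNormal}, i.e., Theorem~\ref{Thm:PartialTraceProj} with $\mathcal K = \mathbb C$, $\Lambda = \mathrm{Tr}$ and $c = 1$. This gives $\Pi_{\mathrm{Tr},1}[P] = P/\tr{P}$, and $\mathrm{Tr}^{-1}[1] = \dx$. The main obstacle is that Theorem~\ref{Thm:PartialTraceProj} assumes $P$ is full rank, whereas here $P$ is only nonzero. To cover the general case I would argue directly, which is also short:
\begin{itemize}
\item For any $\sigma \in \dx$, DPI under the trace channel gives $\mathrm{F}(P,\sigma) \leq \mathrm{F}(\tr{P}, 1) = \sqrt{\tr{P}}$.
\item The candidate attains this bound: $\mathrm{F}(P, P/\tr{P}) = \tr{P}/\sqrt{\tr{P}} = \sqrt{\tr{P}}$.
\item By Lemma~\ref{Lem:DPIimpliesProj}, $P/\tr{P}$ is therefore a projection.
\item Uniqueness follows from the equality case of Cauchy--Schwarz. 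Write $\mathrm{F}(P,\sigma) = \|\sqrt{\sigma}\sqrt{P}\|_1 \leq \|\sqrt{\sigma}\|_{\mathrm F}\,\|\sqrt{P}\|_{\mathrm F}$ via the polar decomposition. Equality forces $\sqrt{\sigma} \propto \sqrt{P}$ up to the unitary, and since both factors are positive this gives $\sigma \propto P$.
\item Alternatively, one may simply note that uniqueness on $\mathrm{supp}(P)$ holds by Theorem~\ref{Thm:ProjectionUniquenessCombined}, and state item 2 as membership in the set of projections.
\end{itemize}

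For item 3, let $|u\ra \in \mathrm{Pur}_\z(P)$. Then $\|u\|_2^2 = \tr{\mathrm{Tr}_\z[\kb{u}]} = \tr{P} > 0$, so $|u\ra \neq 0$. The Euclidean projection onto the unit sphere is the radial normalization $|u\ra/\|u\|_2$; this follows since $\||u\ra - |v\ra\|^2 = \|u\|^2 + 1 - 2\Re\la u,v\ra$ is minimized over unit vectors $v$ by Cauchy--Schwarz. Its partial trace is $\mathrm{Tr}_\z[\kb{u}]/\|u\|_2^2 = P/\tr{P} = \rho$. Hence the projected vector is a purification of $\rho$, and this also yields $s = \tr{P} = \|u\|_2^2$.
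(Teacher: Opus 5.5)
Your treatment of items 1 and 3 matches the paper's proof: you take traces to get uniqueness, normalize the purification radially, and compute $\mathrm{Tr}_{\mathcal Z}[|v\rangle\langle v|] = P/\tr{P}$. You also justify, via Cauchy--Schwarz, that the Euclidean projection onto the sphere is radial normalization, which the paper takes for granted.

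The genuine difference is item 2. The paper simply cites \eqref{Eq:TrNormal}, i.e.\ Theorem~\ref{Thm:PartialTraceProj} with $\mathcal K = \mathbb C$. That result rests on the Gamma-map machinery and is stated only for full-rank $P$, so, as you observe, it does not literally cover the proposition's hypothesis $P \neq 0$.

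Your direct argument is more elementary and strictly more general. You combine the DPI bound $\mathrm{F}(P,\sigma) \le \sqrt{\tr{P}}$, the fact that $P/\tr{P}$ attains it, and the Cauchy--Schwarz equality case. This works for every nonzero $P$ and gives \emph{global} uniqueness of the projection. For rank-deficient $P$ that is stronger than Theorem~\ref{Thm:ProjectionUniquenessCombined}, which would only fix the compression $P^0\sigma P^0$. Your fallback ``alternatively'' bullet is therefore unnecessary and weaker. The paper's route, in turn, keeps item 2 uniform with its general projection framework.

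One step deserves tightening, though it is not a gap. Let $W$ be the unitary from the polar decomposition of $\sqrt{\sigma}\sqrt{P}$. Equality in Cauchy--Schwarz gives $\sqrt{\sigma}W = c\sqrt{P}$ for a scalar $c>0$. Then $\sigma = (\sqrt{\sigma}W)(\sqrt{\sigma}W)^\dagger = c^2 P$, and $\tr{\sigma}=1$ fixes $c^2 = 1/\tr{P}$. This is cleaner than appealing to ``both factors being positive''. The same Cauchy--Schwarz estimate, $\mathrm{F}(P,\sigma) \le \|\sqrt{\sigma}W\|_{\mathrm F}\|\sqrt{P}\|_{\mathrm F} = \sqrt{\tr{P}}$, already gives the upper bound, so the DPI step is optional.
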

	\end{tbox}
	\begin{proof}
		The existence is seen by choosing $\rho \equiv \frac{P}{\tr{P}}$ and $s \equiv \tr{P}$. 
		For uniqueness, assume $s \rho = P = t \sigma$ for $s, t \in \mathbb{R}^+$ and $\rho, \sigma \in \dx$. 
		Take trace across to get $s = t$ whence it follows that $\rho = \sigma$, thereby demonstrating the uniqueness.
		The fact that $\rho \equiv \frac{P}{\tr{P}} = \Pi_{\dx}[P]$ follows from \eqref{Eq:TrNormal}. 
		
		For the third part, let $\ket u \in \z \ot \x $ be an arbitrary purification of $P$. 
		The Euclidean projection of an arbitrary vector $u$ onto the unit sphere $\mathbb{S}_{\z \ot \x} := \{u \in \z \ot \x : \la u, u \ra = 1\}$ is given by $\ell_2$-normalization: $\Pi_{\mathbb{S}_{\z \ot \x}}^{\mathrm{Eu}}[\ket u] := \frac{\ket u}{\|u\|_2} \equiv \ket v$.
		It follows that $v$ is a purification of $\rho$:
		\begin{equation}
			\trz{\kb{v} } = \frac{1}{\|u\|_2^2} \trz{\kb{u}} = \frac{1}{\tr{P}} P = \rho.
		\end{equation} 
		This concludes the proof.
	\end{proof}
	Unsurprisingly, every non-zero PSD matrix can be uniquely decomposed into a density matrix and a scalar, such that their product returns the original PSD matrix.
	Moreover, the density matrix component is Bures projection of the original PSD matrix onto the set of density matrices. 
	Observing from the \textit{purification space}, we see that the Euclidean projection, onto the unit sphere, of any purification of the original PSD matrix yields a purification of the Bures projection of the PSD matrix onto the set of density matrices.
	
	Interestingly, the above proposition can be generalized from PSD matrices to CP maps as follows.
	In the following proposition and henceforth, For any PSD matrix $P \in \mathbb{P}_{\x}$, we define the CP map $\kp_P \in \mathrm{CP}(\x, \x)$ as $\kp_P (X) :=  X \star P$, and $\kp_P ^{-1} \equiv \kp_{P^{-1}}$ with negative powers defined on the support. 
	\begin{tbox}	
		\begin{prop} \label{Prop:ChPrDecomp}
			Let $\Theta \in \cpxy$ such that $\tiy > 0$.
			The following statements hold.
			\begin{enumerate}
				\item There exists a unique pair $(R, \Phi) \in \mathbb{P}_\x^+ \times \cpt $ such that $\Theta = \Phi \circ \kp_R$.
				\item $\mfr{C}(\Phi) = \Pi_{\mrm{Tr}_\y, \ix}[\mfr{C}(\te)] = \mfr{C}(\te) \st{X} {{[\cth]}_{\x}^{\ic}}^{-1}$, or equivalently, $\Phi = \Pi_{\cpt}^{\mrm{CB}}[\Theta]$.
				\item For any Stinespring operator $A \in \mathrm{SR}_\z (\Theta)$, it holds that $\Pi_{\mathbb{U}_{\x, \y \ot \z}}^{\mathrm{F}}[A] \in \mathrm{SR}_{\z}(\Phi)$,
			\end{enumerate}
			where $\Pi^\mrm{F}_{{\mathbb U}_{\x, \y \ot \z}}$ denotes Frobenius projection to the set of isometries from $\x$ to $\y \ot \z$. 
			Moreover, $R = \tiy = \mathrm{Tr}_{\y}[\fc(\Theta)]^\ic = A^\dagger A$.
		\end{prop}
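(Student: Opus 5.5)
The plan is to mirror the proof of Proposition~\ref{Prop:PSDDecomp}, replacing the trace by the partial trace and the scalar by the prior $R$. I will work at the level of Choi matrices throughout, using $\fc(\te)=\sum_{i,j}\ketbra{i}{j}\ot\te(\ketbra{i}{j})$.

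First I establish the basic identity. Compute $\fc(\Phi\circ\kp_R)$. Since $\kp_R(X)=R\hf X R\hf$, we have $(\mrm{Id}\ot\Phi\circ\kp_R)(\Omega)=(\mrm{Id}\ot\Phi)\big((I\ot R\hf)\Omega(I\ot R\hf)\big)$. The transpose trick $(I\ot A)|\omega\ra=(A^\ic\ot I)|\omega\ra$ then gives
\begin{equation}
\fc(\Phi\circ\kp_R)=\fc(\Phi)\st{X} R^\ic .
\end{equation}
Taking the $\x$-marginal and using \eqref{Eq:StarProdIden} yields $[\fc(\Phi\circ\kp_R)]_\x=[\fc(\Phi)]_\x\star R^\ic=R^\ic$, because $\Phi$ is CPT and so $[\fc(\Phi)]_\x=\ix$.

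Next I identify the marginal with the adjoint. Using $\te^\dagger(Y)=[\fc(\te)^\ic(\ix\ot Y)]_\x$, we get $\tiy=[\fc(\te)^\ic]_\x=[\fc(\te)]_\x^\ic$.

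Uniqueness follows directly. If $\te=\Phi\circ\kp_R$, then $R^\ic=[\cth]_\x$, so $R=\tiy$, which is positive definite by hypothesis. With $R$ fixed and invertible, $\Phi=\te\circ\kp_{R}^{-1}$ is forced.

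Existence and part 2 come next. Define $R:=\tiy$ and $\Phi:=\te\circ\kp_{R^{-1}}$. This $\Phi$ is CP as a composition of CP maps. By the identity above, $\fc(\Phi)=\cth\st{X}(R^{-1})^\ic=\cth\st{X}[\cth]_\x^{-1}$. Its $\x$-marginal is $[\cth]_\x\star[\cth]_\x^{-1}=\ix$, so $\Phi$ is CPT. Moreover $\fc(\Phi)$ is exactly the expression of Corollary~\ref{Cor:MarginalProjectionBipartite} with $C=\ix$, so $\fc(\Phi)=\Pi_{\mrm{Tr}_\y,\ix}[\cth]$. By Definition~\ref{Def:CBProj} this is the same as $\Phi=\Pi^{\mrm{CB}}_{\cpt}[\te]$.

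One caveat: the corollary is stated for full-rank $P$, whereas $\cth$ may be singular. I expect this to be the main subtlety. I would handle it with the same argument as Theorem~\ref{Thm:ProjectionAndFeasibility}. Set $P=\cth$ and $Q=\fc(\Phi)$, and note that $Q=P\st{X}P_\x^{-1}$ is obtained by the Gamma map. The DPI-saturation identity $\mrm{F}(P,Q)=\mrm{F}(P_\x,\ix)$ can be checked directly: $\sqrt{P}\sqrt{Q}$ has trace norm $\tr{P_\x\hf}$, via the purification $\vc$ picture. Lemma~\ref{Lem:DPIimpliesProj} then gives optimality without needing invertibility, with uniqueness understood up to Theorem~\ref{Thm:ProjectionUniquenessCombined}. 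If needed, I would instead invoke continuity of fidelity.

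Part 3 is the Stinespring statement. A Stinespring operator $A\in\mrm{SR}_\z(\te)$ satisfies $A^\dagger A=\te^\dagger(\iy)=R>0$, which gives the final identity $R=A^\dagger A$. The Frobenius projection onto isometries is the polar factor $A(A^\dagger A)^{-1/2}=AR\ihf$; this is the standard fact that the nearest isometry to a full-column-rank matrix is its polar isometry. Then
\begin{equation}
\trz{AR\ihf X R\ihf A^\dagger}=\te(\kp_{R^{-1}}(X))=\Phi(X),
\end{equation}
so $AR\ihf\in\mrm{SR}_\z(\Phi)$. This completes the plan.
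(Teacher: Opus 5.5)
Your proposal is correct and follows essentially the paper's route: you make the same choice $R=\te^\dagger(\iy)$, $\Phi=\te\circ\kp_R^{-1}$, and you get uniqueness by reading off the prior (via the Choi marginal $[\fc(\Phi\circ\kp_R)]_\x=R^\ic$, where the paper uses unitality of $\Phi^\dagger$, which is the same fact); Part~2 goes through the partial-normalization corollary and Part~3 through the polar factor $AR^{-1/2}$, as in the paper. Your DPI-saturation patch for singular $\cth$ is more careful than the paper, which simply invokes the full-rank corollary, but on its own it only gives membership in the projection set; full uniqueness comes from your Part~3: by Uhlmann (Proposition~\ref{Prop:PurificationChannels}), any optimal $\fc(\Psi)$ yields an isometric Stinespring operator $V$ of $\Psi$ with $\mathrm{Re}\,\mathrm{Tr}[A^\dagger V]=\mathrm{Tr}\,|A|$, and $|A|=R^{1/2}>0$ forces $V=AR^{-1/2}$, i.e.\ $\Psi=\Phi$.
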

	\end{tbox}
	\noindent\textbf{Remark.} If $\Theta \in \mathrm{CP}(\x, \y)$ is such that $\tiy$ is rank-deficient, then Statement 1 must be modified such that $\Phi$ is TP on $\mathrm{supp}(\tiy)$ and the \textit{prior} $R$ is unique on $\mathrm{supp}(\tiy)$.
	Moreover, the Frobenius projection in Statement 3 would yield a partial isometry.
	
	\begin{proof}
		The existence is seen by choosing $R \equiv \tiy$ and setting $\Phi \equiv \te \circ \mathrm{K}_{R}^{-1}$. 
		To see the uniqueness, assume the existence of pairs $(R, \Phi)$ and $(S, \Psi)$ such that $\Theta$:
		\begin{equation}
			\Phi \circ \kp_{R} = \Theta = \Psi \circ \kp_{S} \quad \Longleftrightarrow \quad \kp_R \circ \Phi^\dagger  = \Theta^\dagger  = \kp_S \circ \Psi^\dagger . 
		\end{equation}
		By acting the adjoint maps on the RHS on $\iy$, we get $R = \tiy = S$ as $\Psi^\dagger , \Phi^\dagger $ are unital maps.
		Moreover, the positive-definiteness of $R \equiv \tiy$ implies that $\Phi = \Theta \circ \kp_{R}^{-1} = \Psi$. 
		The second part directly follows from Corollary~\ref{Cor:MarginalProjectionBipartite}.
		
		We now prove the third part. 
		Let $A \in \mm_{\x, \y \ot \z}$ be an arbitrary Stinespring representation of $\te$. 
		The Frobenius projection to the set of isometries of any matrix $K$ has a closed-form solution given by the polar decomposition.
		That is, if $K = U |K|$ is the polar decomposition of $K$ where $|K| := \sqrt{K^\dagger K}$, then $U = K|K| ^{-1} $ is the nearest isometry, in Frobenius distance, to $K$. 
		If $|K|$ is rank-deficient, then we take $|K|^{-1} $ to be its pseudoinverse, which leads to $U$ being a partial isometry.
		
		Applying this to the Stinespring operator $A$, we have $V \equiv A|A| ^{-1}$ to be the Frobenius projection of $A$ to the set of isometries. 
		Observe that $A^\dagger A = \tiy = R$, which implies that projection is $ V = A |A| ^{-1} = A R \ihf$. 
		Indeed, it follows that $V$ is a Stinespring representation of $\Phi = \te \circ \kp_{R} ^{-1} $, concluding the proof. 
	\end{proof}
	
	We now discuss the implications of this proposition.
	We see that CP maps admit a very natural decomposition into a prior state and a CPT map, generalizing how a PSD matrix can be decomposed into a density matrix and a positive scalar.
	In fact, by viewing PSD matrices in $\px$ as linear maps from $\mathbb{C}$ to $\x$ and applying~Proposition~\ref{Prop:ChPrDecomp}, one recovers Proposition~\ref{Prop:PSDDecomp}.   
	We call this decomposition the \textit{prior-channel} decomposition of CP maps.
	\begin{dbox}
		\begin{definition}[Prior-channel decomposition] Let $\Theta \in \cpxy$ be an arbitrary CP map. 
			Let 
			\begin{equation}
				R \equiv \tiy \qaq \Phi \equiv \Theta \circ \kp_R \iv.
			\end{equation}
			The pair $(R, \Phi)$ constitute the (unique) \emph{prior-channel decomposition} of $\Theta$.
		\end{definition}
	\end{dbox}

	This decomposition can be seen as an extension of Choi--Jamiołkowski isomorphism.
	Recall that the Choi--Jamiołkowski isomorphism bijectively associates quantum channels to not all of bipartite PSD matrices, but only a subset of them (those with the identity as the input-space marginal). 
	The above result states that \textit{every} bipartite PSD matrix can be uniquely\footnote{Under mild conditions on the support.} associated with a \textit{pair} of channel and a prior (unnormalized) state.
	This coincides with the standard Choi--Jamiołkowski isomorphism if the prior state is the identity matrix. 
	
	Similar to how the Bures projection of $P \in \px$ onto density matrices gave the normalized component $\rho  \equiv \frac{P}{\tr{P}}$, the \textit{Choi--Bures} projection (see Definition~\ref{Def:CBProj}) of $\Theta$ to the set of CPT maps yields its channel component.
	Moreover, similar to how---for PSD matrices---the Euclidean projection of any purification (onto the unit sphere) yields a purification of Bures projection, we have that the Frobenius projection of any Stinespring operator (onto the set of isometries) yields a Stinespring operator of the Bures projection.   
	
	We make some further observations. 
	In line with the above nomenclature, let us call $\tiy$ the \textit{prior} of any $\Theta \in \cpxy$.
	The set of CP maps with a given prior forms a spectrahedron. 
	Given $R \in \mathbb P_\mathcal X$, we denote $\mathrm{CP}_{R}[\mathcal X, \mathcal Y]$ to be the spectrahedron of CP maps whose prior is $R$:
	\begin{equation} \label{Eq:CPPriorSet}
		\mathrm{CP}_R (\mathcal X, \mathcal Y) = \{\Theta \in \mathrm{CP} (\mathcal X, \mathcal Y) : \Theta^\dagger (I_\mathcal Y) = R\}. 
	\end{equation}
	At the level of Choi representation, these are bipartite PSD matrices of the form $\fc[\mathrm{CP}_R] = \{P \in \pxy : P_\x = R^\ic \}$. 
	Indeed we have that $\mathrm{CP}_{\ix} (\mathcal X, \mathcal Y) = \cpt$.   
	Akin to computing the \textit{Choi--Bures} projection of a CP map to the set of CPT maps, one can Choi--Bures-project an arbitrary CP map to the set of CP maps with an arbitrary prior as follows.
	\begin{tbox}
		\begin{prop} \label{Prop:CBProjPriorS}
			For an arbitrary $\te \in \cpxy$, let $\te \sim (R, \Phi)$. 
			Then the Choi--Bures projection of $\te$ to the set of CP maps with prior $S$ is given by
			\begin{equation}
				\Xi \equiv \Pi^{\mathrm{CB}}_{\mathrm{CP}_S(\x, \y)}[\te] = \Theta \circ \kp_{ {(R ^{-1} \# S)}^2}.
			\end{equation}
		\end{prop}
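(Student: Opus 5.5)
The plan is to reduce the statement to the marginal projection of Corollary~\ref{Cor:MarginalProjectionBipartite} at the level of Choi matrices, and then translate the result back to maps. By Definition~\ref{Def:CBProj}, $\Pi^{\mrm{CB}}_{\mathrm{CP}_S(\x,\y)}[\te] = \Pi_{\fc[\mathrm{CP}_S]}[\fc(\te)]$. As observed after \eqref{Eq:CPPriorSet}, $\fc[\mathrm{CP}_S] = \{P \in \pxy : P_\x = S^\ic\} = \mrm{Tr}_\y^{-1}[S^\ic]$. I will assume, as in Proposition~\ref{Prop:ChPrDecomp}, that $R = \tiy > 0$ and $S > 0$, and that $\fc(\te)$ is full rank so that the projection is unique. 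Then Corollary~\ref{Cor:MarginalProjectionBipartite} applies with $P = \fc(\te)$ and $C = S^\ic$. Since $[\fc(\te)]_\x = R^\ic$ (Proposition~\ref{Prop:ChPrDecomp}), it gives
\begin{equation}
\fc(\Xi) = \fc(\te) \st{X} \bigl((R^\ic)^{-1} \# S^\ic\bigr)^2 .
\end{equation}

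Next I would show that the geometric mean commutes with transposition: $(R^\ic)^{-1} \# S^\ic = (R^{-1} \# S)^\ic$. This holds because transposition reverses products, $(A^{1/2} B A^{1/2})^\ic = (A^\ic)^{1/2} B^\ic (A^\ic)^{1/2}$, and it commutes with continuous functional calculus on Hermitian matrices, so $\sqrt{X^\ic} = (\sqrt{X})^\ic$ and $(X^{-1})^\ic = (X^\ic)^{-1}$. Applying this to \eqref{Eq:DefMGM} term by term gives the identity. Writing $M := (R^{-1}\#S)^2$, we obtain $\fc(\Xi) = \fc(\te) \st{X} M^\ic$.

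The final step, and the one needing most care with conventions, is the Choi-level identity $\fc(\te \circ \kp_M) = \fc(\te) \st{X} M^\ic$ for $M \geq 0$. By \eqref{Eq:ChoiIsoDef},
\begin{equation}
\fc(\te\circ\kp_M) = (\mrm{Id}_\x \ot \te)\bigl((I \ot M^{1/2})\Omega(I \ot M^{1/2})\bigr).
\end{equation}
The transpose trick $(I \ot A)|\omega\ra = (A^\ic \ot I)|\omega\ra$ moves $M^{1/2}$ onto the first factor as $(M^{1/2})^\ic = (M^\ic)^{1/2}$. That operator acts only on $\x$, so it commutes with $\mrm{Id}_\x \ot \te$ and can be pulled outside. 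This yields $(M^{\ic})^{1/2} \ot I$ on both sides of $\fc(\te)$, which is exactly $\fc(\te)\st{X} M^\ic$.

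Since $\fc$ is a bijection, this identity gives $\Xi = \te\circ\kp_{(R^{-1}\#S)^2}$. As a sanity check, the prior of $\Xi$ is $\kp_M(\tiy) = M^{1/2} R M^{1/2}$. Because $M^{1/2} = R^{-1}\#S$, the Riccati equation of Proposition~\ref{Prop:MatGeoMean} shows this equals $S$, confirming feasibility. Taking $S = \ix$ recovers $\Phi = \te\circ\kp_{R^{-1}}$ from Proposition~\ref{Prop:ChPrDecomp}.
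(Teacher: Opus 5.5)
Your proposal is correct and follows essentially the same route as the paper. In both, you identify $\fc[\mathrm{CP}_S]$ with $\mrm{Tr}_\y^{-1}[S^\ic]$, apply the marginal-projection closed form using $[\fc(\te)]_\x = R^\ic$, move the transpose through the geometric mean, recognize $\fc(\te)\st{X}M^\ic = \fc(\te\circ\kp_M)$, and confirm feasibility with the Riccati equation; you simply give more detail on the transpose step and the Choi-level identity, and your full-rank assumption on $\fc(\te)$ matches the hypothesis under which the paper derives its closed form.
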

	\end{tbox}
	\begin{proof}
		One may verify that $\Xi$ as defined above is indeed an element of $\mathrm{CP}_S(\x, \y)$ as
		\begin{equation}
			\Xi^\dagger (\iy) = \kp_{ {(R ^{-1} \# S)}^2} \circ \tiy = \kp_{{(R ^{-1} \# S)}^2} (R) = S,
		\end{equation}
		where the last equality follows from the matrix Riccati equation.
		For a formal proof, we use the closed-form for marginal projection (\eqref{Eq:MarginalProjClosedForm}).
		\begin{equation}
			\begin{aligned}
				\fc(\Xi) \equiv \Pi_{\mrm{Tr}_\y, S^\ic} [\fc(\te)] 
				&= \fc(\te) \st{X} ([\cphi]_\x ^{-1} \# S^ \ic)^2  
				= \fc(\te) \st{X} ({R ^ {-1}}^\ic  \# S^\ic)^2 \\ 
				&= \fc(\Theta) \st{X} [(R^{-1} \# S)^2]^\ic  = \fc \left(\te \circ \kp_{(R^{-1} \# S)^2} \right),
			\end{aligned}
		\end{equation} 
		as claimed.
	\end{proof}
	
	Converse to decomposing a CP map to a CPT map and a prior, one can \textit{join} a CPT map and a prior, a process we call the \textit{CP extension} (of the CPT map by the prior). 
	This generalizes the basic operation of scaling a density matrix with an arbitrary positive scalar. 
	\begin{dbox}
		\begin{definition}[CP extension of a channel (by a prior)] \label{Def:CPExt}
			Let $\Phi \in \cpt$ and $R \in \px$. 
			The \emph{CP extension of $\Phi$ by $R$} is defined as $\Phi \circ \kp_R \in \cpxy$. 
		\end{definition}
	\end{dbox}
	\noindent \textbf{Remark.} Geometrically, this corresponds to the Choi--Bures projection of $\Phi$ to $\mathrm{CP}_R(\x, \y)$: $\Phi \circ \kp_R = \Pi_{\mrm{CP}_R(\x, \y)}^{\mrm{CB}}[\Phi]$, or equivalently,
	\begin{equation} \label{Eq:CPExt}
		\fc (\Phi \circ \kp_R) = \fc(\Phi) \st{X} R^\ic = \Pi_{\mrm{Tr}_\y, R^\ic} [\cphi]. 
	\end{equation}
	It is easy to see that the prior of the CP map $\Phi \circ \kp_R$ is indeed $R$, and the prior-channel decomposition of $\Phi \circ \kp_R$ returns the pair $(R, \Phi)$. 
	We will use these decompositions and extensions extensively in the subsequent sections as it both unburdens us of cumbersome notation and gives the intuition behind many of the results discussed.

	\section{Applications} \label{Sec:ApplicationsAndManifestations}
	We now discuss various applications of Bures projection.
	The \textit{zeroth} application of our results is that we give explicit instances of triples that saturate the DPI for fidelity.
	That is, for all the projection problems $(\Lambda, P, C)$ we discuss above, the triple $(\Lambda, P, \Gamma_{\Lambda, C}[P])$ saturates the DPI for fidelity, and equivalently, Sandwiched R{\'e}nyi divergence of order $\alpha = 1/2$.
	
	The first application is a rather direct implication of the closed-forms---namely, we now have simple closed-forms for projections onto various sets of interest in quantum information. 
	\subsection{Bures Projection onto Sets of Interest}
	As a direct corollary of the results from Section~\ref{Sec:ClosedFormProjection}, we now have simple closed-forms for projections with respect to Bures distance, fidelity, and purified distance\footnote{provided the objects involved are subnormalized states.} to the following sets.
	\begin{enumerate}[itemsep=-0.5ex,partopsep=1ex,parsep=1ex]
		\item A multipartite PSD matrix $P \in \mathbb{P}_{X \otimes \y \ot \z  } $ to the set of PSD matrices with one of the marginals equal to some fixed PSD matrix $Y \in \mathbb P_\y$.
		\item A CP map to the set of CPT, CPU maps, or to the set of CP maps with a given prior. 
		\item An ensemble of PSD matrices to the set of decompositions of a given matrix (including the set of POVMs).
		\item A PSD matrix to the set of PSD matrices with a given (block) diagonal, or more generally, a given output to a pinching map.
		\item A density matrix to the set of density matrices with a given measurement probabilities under a projective measurement (with mutually orthogonal projectors). 
	\end{enumerate}
	
	\subsection{Geometric Interpretations of the Pretty Good Measurement} \label{Sec:PGMManifestation}
	The \textit{Pretty good measurement}~\cite{belavkin1975optimal, hausladen1994pretty, holevo1978asymptotically}  (a.k.a. \textit{square-root measurement}) is a canonical way of associating a POVM to a (possibly weighted) ensemble of states. 
	It has found applications in quantum state discrimination~\cite{barnum2002reversing, Watrous2018Theory, bae2015quantum}, quantum learning and tomography~\cite{arunachalam2018optimal, Haah2016Sampleoptimal}, hidden subgroup problem~\cite{bacon2005optimal, Hayashi2008Quantum}, and quantum communication~\cite{Hausladen1996Classical, beigi2014quantum, cheng2023simple}, among others. 
	Recent works have extended it to the continuous variable setting as well~\cite{Mishra2024Pretty, Mishra2025Nearoptimal}.
	In many of these settings, the PGM is near-optimal and turns out to be optimal in problems with a high degree of symmetry~\cite{Eldar2002Quantum, eldar2004optimal, DallaPozza2015Optimality, iten2017pretty, leditzky2022optimality, Zhou2025Distinguishability}.
	We first look at the definition of the PGM. 
	\begin{dbox}
		\begin{definition}[Pretty good measurement]
			Let $\mathcal{P} := (P_i)_{\inn} \subset \bbp_d$ be PSD ensemble and define $P := \sumin P_i$.
			The \emph{pretty good measurement} associated with $\mathcal{P}$ is defined as
			\begin{equation}
				\mathcal{E} := \left(P \ihf P_i P \ihf \right)_\inn
			\end{equation}
		\end{definition}
	\end{dbox}
	In quantum information, the ensemble $\mathcal{P}$ is typically derived from a weighted ensemble of states $(w_i, \rho_i)_\inn$ where $(w_i)_\inn$ is a probability vector and $(\rho_i)_\inn\subset \mathbb{D}_d$ is a collection of density matrices. 
	The associated ensemble is then $\mathcal{P} = (P_i := w_i \rho_i)_\inn$. 
	
	Various explanations have been put forward for the (near) optimality of the PGM. 
	In the context of state discrimination, \textcite{hausladen1994pretty} interprets that the mutual information (for a fixed ensemble and variable measurement) has a \textit{small} gradient at PGM (at the optimal POVM, the gradient would be zero).
	\textcite{Cheng2025Error} discusses how a variant of the PGM can be seen as a randomized Holevo--Helstrom measurement. 
	
	Our results on ensemble projection (Theorem~\ref{Thm:EnsembleProjection}) yields novel geometric and operational interpretations of the pretty good measurement (PGM), formalized below.
	\begin{tbox}
		\begin{theorem}[Novel interpretations of the PGM]
			\label{Thm:PGMInterpretations}
			Let $\mathcal{P} := (P_i)_\inn \subset \bbp_d^+$ be an ensemble, $P := \sum_{i=1}^n P_i$, and let $\mathcal{E} := (E_i)_{i \in [n]} = (P^{-1/2} P_i P^{-1/2})_\inn$ be the associated PGM. Then,
			\begin{enumerate}
				\item $\mathcal{E}$ is the ensemble projection of $\mathcal{P}$ onto the set of $n$-outcome POVMs,
				\begin{equation}
					\Pi_{n\textrm{-POVM}}[\mathcal{ P}] = (P^{-1/2} P_i P^{-1/2})_\inn .
				\end{equation}
				\item $\mathcal{E}$ maximizes the total square-root fidelity over all POVMs:
				\begin{equation}
					\mathcal{E} = \argmin_{(E'_i)\textrm{ is a POVM}} \sum_{i=1}^n \mathrm{B}(P_i,E'_i)
					= \argmax_{(E'_i)\textrm{ is a POVM}} \sum_{i=1}^n \mathrm{F}(P_i,E'_i).
				\end{equation}
			\end{enumerate}
			Here $n$-POVM $\equiv \mrm{Dec}_n(I_d)$ is the set of $n$-outcome POVMs acting on $\bb C^d$.
		\end{theorem}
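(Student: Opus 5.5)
Both statements follow directly from Theorem~\ref{Thm:EnsembleProjection} (equivalently Corollary~\ref{Cor:PGMasEnsembleProjection}) specialized to $Q = I_d$, together with the equivalence recorded in \eqref{Eq:EnsembleProjEquivalence}. The plan is to verify the specialization, then translate the projection statement into the two optimization statements.

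For the first item, we identify the set of $n$-outcome POVMs with $\mathrm{Dec}_n(I_d)$: a tuple $(E_i')_{i\in[n]}$ of PSD matrices is a POVM exactly when $\sum_{i=1}^n E_i' = I_d$. Theorem~\ref{Thm:EnsembleProjection} with $Q = I_d$ gives $Q_i = (P^{-1}\# I_d) P_i (P^{-1}\# I_d)$. Since $P$ and $I_d$ commute, $P^{-1}\# I_d = (P^{-1})^{\frac12} = P^{-\frac12}$, so $Q_i = P^{-\frac12}P_iP^{-\frac12} = E_i$. Uniqueness of the projection is inherited from the uniqueness argument behind Theorem~\ref{Thm:EnsembleProjection}: the block-diagonal matrix $\mathbf{P} = \sum_i \ketbra{i}{i}\otimes P_i$ is positive definite because each $P_i>0$, so Theorem~\ref{Thm:ProjectionUniquenessCombined} applies. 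Combined with Lemma~\ref{Lem:DPIBlockDiagonal}, this means the unique marginal projection of $\mathbf{P}$ is block-diagonal with blocks $E_i$.

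For the second item, we unpack the definition $\Pi_{\mathrm{Dec}_n(I_d)}[\mathcal P] = \argmin \sum_i \mathrm{B}(P_i,E_i')$, which is Definition~\ref{Def:BuresDistEnsemble} applied over the POVM set. We then expand $\mathrm{B}(P_i,E_i') = \mathrm{Tr}[P_i]+\mathrm{Tr}[E_i'] - 2\mathrm{F}(P_i,E_i')$ and sum over $i$. The trace contribution is $\mathrm{Tr}[P] + \mathrm{Tr}[I_d] = \mathrm{Tr}[P]+d$, which is constant over the feasible set. Minimizing total Bures distance is therefore equivalent to maximizing $\sum_i \mathrm{F}(P_i,E_i')$, exactly as in \eqref{Eq:EnsembleProjEquivalence}, and both optimizers equal $\mathcal E$ by the first item.

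No step is a genuine obstacle. The only points needing care are the identity $P^{-1}\# I = P^{-1/2}$, which holds because the two matrices commute, and making the argmin/argmax singleton claim rigorous via uniqueness for the full-rank block matrix.
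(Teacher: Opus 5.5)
Your proposal is correct and follows essentially the paper's route. The paper derives item 1 as Corollary~\ref{Cor:PGMasEnsembleProjection}, i.e.\ Theorem~\ref{Thm:EnsembleProjection} with $Q=I_d$ and $P^{-1}\# I_d=P^{-1/2}$, and item 2 from the trace-constancy equivalence \eqref{Eq:EnsembleProjEquivalence}. Your uniqueness argument, that the block matrix $\mathbf{P}$ is positive definite and therefore has a unique projection, just spells out what the paper leaves implicit.
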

	\end{tbox}
	
	Thus, the PGM admits both a geometric interpretation as a Bures projection and an operational one as the POVM maximizing total (square root) fidelity with the ensemble elements.
	Theorem~\ref{Thm:PGMInterpretations} also yields geometric interpretations for the \textit{Pretty bad measurement}~\cite{McIrvin2024Quantum} (with applications in quantum state exclusion),  $\alpha$-power PGMs~\cite{tyson2009error,tyson2009two} (with applications in state discrimination for geometrically uniform states~\cite{Zhou2025Distinguishability}).
	
	An equivalent geometric interpretation for the PGM holds in terms of Choi matrices. Define
	\begin{equation}
		\hat P := \sum_{i=1}^n P_i^{\mathsf{T}} \otimes |i\rangle\!\langle i|
		\in \mathbb{P}_{\mathcal X \otimes \mathcal Y},
		\qquad \mathcal Y \cong \mathbb{C}^n .
	\end{equation}
	The associated CP map $\Xi \equiv \fc \iv (\hat P)$ has the action $   \Xi(X) = \sum_{i=1}^n \langle P_i, X \rangle \, |i\rangle\!\langle i|$ (see also~\cite[Eq.~43]{Coutts2021}).
	The Choi--Bures projection of $\Xi$ onto the set of CPT maps yields the PGM $\Psi(X) = \sum_{i=1}^n \langle P^{-1/2} P_i P^{-1/2}, X \rangle \, |i\rangle\!\langle i|.$
	The equivalence between Choi--Bures and Stinespring--Frobenius projections (Proposition~\ref{Prop:PurificationChannels}), implies that least-squares problem
	\begin{equation}
		\argmin_{(E_i)_{i=1}^n \text{ is a POVM}}
		\sum_{i=1}^n \left\| E_i \hf  - P_i \hf  \right\|_2^2 ,
	\end{equation}
	is solved at the PGM $E_i = P \ihf P_i P \ihf$.
	The pure-state, equal-weight case was previously derived in~\cite{Eldar2002Quantum}.
	
	\paragraph{Implications for quantum state discrimination.}
	Given states $(\rho_i)_\inn$ with probability weights $w = (w_1, \ldots w_n) $, the quantum state discrimination problem~\cite{Watrous2018Theory, bae2015quantum} is
	\begin{equation}\label{Eq:StateDisc}
		\begin{aligned}
			\mathrm{maximize:} \quad & \sum_{i=1}^n \langle P_i, E_i \rangle \\
			\mathrm{subject~to:} \quad & (E_1,\dots,E_n) \text{ is a POVM},
		\end{aligned}
	\end{equation}
	where $P_i := w_i \rho_i$ for $\inn$. 
	Except for $n=2$~\cite{HOLEVO1973Statistical,Helstrom1969Quantum,Watrous2018Theory} or highly symmetric ensembles~\cite{Eldar2002Quantum,eldar2004optimal,leditzky2022optimality,Zhou2025Distinguishability}, no closed-form solution is known for the optimal POVM.
	Nevertheless, the PGM is near-optimal: $\eta_{\mathrm{opt}}^2 \le \eta_{\mathrm{PGM}} \le \eta_{\mathrm{opt}}$~\cite{barnum2002reversing,Watrous2018Theory}, where $\eta_{\mrm{opt}}$ and $\eta_{\mrm(PGM)}$ represents the success probabilities of the optimal POVM and the pretty good measurement respectively.  
	Now consider the variant based on (square-root) fidelity:
	\begin{equation}\label{Eq:FidelityStateDisc}
		\begin{aligned}
			\mathrm{maximize:} \quad & \sum_{i=1}^n \mathrm{F}(P_i,E_i) \\
			\mathrm{subject~to:} \quad & (E_1,\dots,E_n) \text{ is a POVM}.
		\end{aligned}
	\end{equation}
	This problem is exactly an ensemble projection problem, and is solved at the PGM.
	
	If all $P_i$ are rank-one, $P_i=\kb{p_i}$, then
	$\langle P_i,E_i\rangle = \la p_i,E_ip_i \ra =\mathrm{F}(P_i,E_i)^2$, and standard discrimination maximizes total (\textit{squared}) fidelity.
	Equivalently, the fidelity-based state discrimination problem~\eqref{Eq:FidelityStateDisc} maximizes the sum of square roots of probabilities, for which the PGM is optimal.
	This interpretation extends to arbitrary-rank states via Uhlmann’s theorem~\cite{uhlmann1976transition, Uhlmann2011Transition, Watrous2018Theory}, which allows Problem~\ref{Eq:FidelityStateDisc} to be written as an optimization over overlaps of purifications:
	\begin{equation}
		\begin{aligned}
			\text{maximize:} \quad & \sum_{i=1}^{n} |\langle u_i, v_i \rangle| \\
			\text{subject to:} \quad 
			& |u_i\rangle \in \mathrm{Pur}_{\mathcal Z}(P_i),
			|v_i\rangle \in \mathrm{Pur}_{\mathcal Z}(E_i), \text{ and } \sum_{i=1}^{n} E_i = I_{\mathcal X},
		\end{aligned}
	\end{equation}
	for some Hilbert space $\mathcal{Z}$ of appropriate dimension. 
	
	In summary, the PGM is the Bures projection of a weighted ensemble onto the POVM set, making it optimal for state discrimination based on square-root fidelity.
	The fact that square-root fidelity is equivalent to the square-root of probabilities explains why the PGM is generally close to optimal for state discrimination, yet truly optimal only in highly symmetric cases.
	
	\subsection{Geometry of the Petz Recovery Map} \label{Sec:PetzMapGeo}
	In this section, using fidelity/Bures projections (equivalently, prior-channel decompositions) we provide novel geometric interpretations to the Petz recovery map.
	The Petz recovery (transpose) map, originally introduced in a celebrated result by Petz studying sufficiency~\cite{petz1988sufficiency, petz1986sufficient} of quantum (Umegaki) relative entropy, is a central tool in quantum information.
	It is a quantum analogue of Bayes' theorem~\cite{Leifer2013towards} (also see~\cite[Eq. 12.34]{Wilde2013}), finds applications in quantum error correction~\cite{barnum2002reversing, ng2010simple,Mandayam2012Unified,Li2025Optimalitya}, as a decoder in quantum communication~\cite{beigi2016decoding, Burri2025Entanglement}, quantum sufficiency~\cite{petz1988sufficiency, fawzi2015quantum, wilde2015recoverability, sutter2016strengthened, jenvcova2017preservation, junge2018universal}, and black-hole physics~\cite{Cotler2019Entanglement, Kibe2022Holographic} among others.
	
	For a \textit{forward} channel $\Phi \in \cpt$ and a (forward) \textit{prior} $\rho \in \dx$, the associated Petz recovery map $\tilde \Phi_\rho \in \cpty$ is defined as
	\begin{equation}
		\tilde \Phi_\rho := \kp_\rho \circ \Phi^\dagger  \circ \kp_{\Phi(\rho)} \iv. 
	\end{equation}
	
	We now give a novel geometric interpretation for the Petz recovery map, in terms of Bures projections, which partly explains the (near-)optimality it achieves in a variety of settings.
	Moreover, it also generalizes the information-geometric derivation of the Bayes rule~(see the discussion in the Introduction of~\cite{bai2025quantum}) to the quantum setting.
	We will make use of results from Section~\ref{Sec:CPDecomp}, namely prior-channel decompositions, CP extensions of channels, and their relation to the fidelity/Bures projections.
	\begin{tbox}	
		\begin{theorem}
			Let $\Phi \in \cpt, \rho \in \dx,$ and $\Theta := \Phi \circ \kpr \in \cpxy$ be the CP extension of $\Phi$ by $\rho$.  
			The Petz map $\tilde \Phi_\rho$ associated with $(\Phi, \rho)$ is the Choi--Bures projection of $\Theta^\dagger$ to the set of (reverse) channels $\mrm{CPT}(\y, \x)$:
			\begin{equation} \label{Eq:108}
				\tilde \Phi_\rho = \Pi^{\mrm{CB}}_{\mrm{CPT}(\y, \x)}[\Theta^\dagger ] = \Pi^{\mrm{CB}}_{\mrm{CPT}(\y, \x)}[\kp_{\rho} \circ \Phi^\dagger ],
			\end{equation}
			where $\mrm{\Pi}^\mrm{CB}_{\cpty}$ is the Choi--Bures projection onto $\cpty$.
			Equivalently,
			\begin{equation}
				\fc(\tilde  \Phi_\rho) = \Pi_{\mrm{Tr}_\x, \iy} [\fc (\Theta)^\ic] = \Pi_{\mrm{Tr}_\x, \iy} [\fc (\Phi)^\ic \st X \rho].
			\end{equation}
		\end{theorem}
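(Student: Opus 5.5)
The plan is to reduce the statement to the prior-channel decomposition of Proposition~\ref{Prop:ChPrDecomp}, applied to the reverse CP map $\Theta^\dagger = \kp_\rho \circ \Phi^\dagger \in \mathrm{CP}(\y,\x)$. Here $\kp_\rho$ is self-adjoint, so $(\Phi\circ\kp_\rho)^\dagger = \kp_\rho\circ\Phi^\dagger$. By Statement~2 of that proposition, with the roles of $\x$ and $\y$ swapped, the Choi--Bures projection of $\Theta^\dagger$ onto $\mathrm{CPT}(\y,\x)$ is its channel component $\Theta^\dagger\circ\kp_{R}^{-1}$. The prior is
\begin{equation}
R = (\Theta^\dagger)^\dagger(\ix) = \Theta(\ix) = \Phi(\kp_\rho(\ix)) = \Phi(\rho).
\end{equation}
Hence the projection equals $\kp_\rho\circ\Phi^\dagger\circ\kp_{\Phi(\rho)}^{-1}$, which is exactly $\tilde\Phi_\rho$ by definition. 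This gives the first displayed equation.

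We need $\Phi(\rho)>0$ so that the proposition applies as stated. If $\Phi(\rho)$ is rank-deficient, we invoke the Remark following Proposition~\ref{Prop:ChPrDecomp}: the inverse is taken on the support, consistent with the convention in the definition of the Petz map. Projection uniqueness also needs $\fc(\Theta^\dagger)$ full rank, and we would note that, as elsewhere in the paper, full-rank assumptions (e.g.\ $\rho>0$, $\Phi(\rho)>0$) are tacitly in force. Otherwise we interpret the equality as membership in the set of projections.

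For the Choi-level statement we use $\fc(\Theta^\dagger)=\fc(\Theta)^\ic$ from the preliminaries. Here $\fc(\Theta)\in\mathbb P_{\xoy}$, so its transpose lives on $\xoy$ as well. We must then identify the Choi matrix of a map in $\mathrm{CP}(\y,\x)$, which naturally lives on $\y\ot\x$, with an operator on $\xoy$. Under this identification, trace preservation of a reverse channel corresponds to the $\y$-marginal being $\iy$, i.e.\ to the constraint pair $(\mrm{Tr}_\x,\iy)$.

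Definition~\ref{Def:CBProj} then gives
\begin{equation}
\fc(\tilde\Phi_\rho)=\Pi_{\mrm{Tr}_\x,\iy}[\fc(\Theta)^\ic].
\end{equation}
Finally, Eq.~\eqref{Eq:CPExt} gives $\fc(\Theta)=\fc(\Phi)\st{X}\rho^\ic$. The transpose property of the star product then yields $\fc(\Theta)^\ic=\fc(\Phi)^\ic\st{X}\rho$.

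As a sanity check, we would verify directly via Corollary~\ref{Cor:MarginalProjectionBipartite} (the $\y$-version in Table~\ref{Tab:ClosedForms}) that
\begin{equation}
\Pi_{\mrm{Tr}_\x,\iy}[M]=M\st{Y}M_\y^{-1}, \qquad M=\fc(\Theta)^\ic.
\end{equation}
The marginal is $M_\y=\Phi(\rho)^\ic$. We then check that $M\st{Y}(\Phi(\rho)^{\ic})^{-1}$ is the Choi matrix of $\kp_\rho\circ\Phi^\dagger\circ\kp_{\Phi(\rho)}^{-1}$.

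The main obstacle is purely bookkeeping. We must keep the transpose and tensor-ordering conventions consistent, so that the $\y$-marginal of $\fc(\Theta)^\ic$ comes out as $\Phi(\rho)^\ic$ rather than $\Phi(\rho)$. We must also confirm that the star product on the $\y$ factor corresponds to precomposition with $\kp_{\Phi(\rho)}^{-1}$. I would settle this by checking on the product form $\fc(\Psi\circ\kp_S)=\fc(\Psi)\star(S^\ic\ot I)$ and transposing.
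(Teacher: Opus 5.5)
Your proposal is correct and follows the paper's proof: you apply the prior-channel decomposition (Proposition~\ref{Prop:ChPrDecomp}) to $\Theta^\dagger=\mathrm{K}_\rho\circ\Phi^\dagger$, read off its prior $\Theta(I_\mathcal{X})=\Phi(\rho)$, and identify the channel component $\Theta^\dagger\circ\mathrm{K}_{\Phi(\rho)}^{-1}$ with the Petz map. Your bookkeeping for the Choi-level identity (the tensor-factor swap implicit in $\mathfrak{C}(\Theta^\dagger)=\mathfrak{C}(\Theta)^\intercal$, and transposing the star product) and your remarks on the tacit full-rank assumptions are more careful than the paper, which simply asserts that proving the first equality suffices.
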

	\end{tbox}
	\begin{proof}
		By the equivalence between Choi--Bures projection and prior-channel decomposition (Proposition~\ref{Prop:ChPrDecomp}), it suffices to prove the first equality.
		Moreover, it also suffices to show that the channel component of the CP map $\Theta^\dagger  \equiv (\Phi \circ \kp_\rho)^\dagger  = \kp_\rho \circ \Phi^\dagger  \in \mrm{CP}(\y, \x)$ is the Petz map.
		To this end, we compute the prior $(\Theta^\dagger )^\dagger (I_\x) = \Phi(\rho)$ and compute the channel component as
		\begin{equation}
			\Theta^\dagger  \circ \kp_{\Phi(\rho)} \iv  = \kp_\rho \circ \Phi^\dagger  \circ \kp_{\Phi(\rho)} \iv  = \tilde \Phi_\rho,
		\end{equation}
		which is exactly the Petz map associated with the pair $(\Phi, \sigma)$. 
		This concludes the proof.
	\end{proof}
	The above result shows that the Petz map is the nearest reverse channel to the adjoint of the CP extension of $\Phi$ by $\rho$. 
	Here, proximity is measured in terms of fidelity / Bures distance between Choi matrices, or equivalently, minimal Frobenius distance between Stinespring operators.   
	
	\subsection{Information Geometry of the Leifer--Spekkens Quantum State Over Time}
	\label{Sec:LSFormalism}
	In this section, we show that prior-channel decompositions, and hence Bures projections, provide an information-geometric underpinning for a particular \emph{quantum state over time} (QSOT) formalism, namely that of Leifer and Spekkens (LS-QSOT)~\cite{leifer2006quantum, Leifer2007Conditional, Leifer2013towards}. 
	Our presentation is intentionally informal; for a precise and comprehensive treatment of the LS-QSOT formalism, we refer to~\cite{Leifer2013towards}.
	
	A QSOT is a quantum analogue of a classical joint probability distribution, whose aim is to extend the rules of classical probability theory to the quantum setting. 
	While probability vectors and conditional distributions generalize naturally to density matrices and quantum channels, respectively, combining these objects into a \emph{quantum joint state} is non-unique and leads to distinct QSOT constructions.
	
	Several QSOT formalisms have been proposed~\cite{Leifer2013towards, Fullwood2022Quantum, Fitzsimons2015Quantum}. 
	Relations between different QSOTs and associated quantum Bayes’ rules are discussed in~\cite{Parzygnat2023TimeReversal}, which implies that rotated Petz recovery maps~\cite{wilde2015recoverability} and their convex combinations~\cite{sutter2016strengthened} also induce QSOTs. 
	For broader perspectives on QSOTs, see~\cite{Parzygnat2023TimeReversal, Lie2024Quantum, Fullwood2025Dynamical}.

	\paragraph{Preliminaries.}
	LS-QSOT makes extensive use of the \textit{Jamiołkowski isomorphism}, which, like the Choi isomorphism, is a bijection between linear maps and bipartite matrices.  
	The Jamiołkowski isomorphism $\mathfrak{J} : \mathrm{LM} (\mathcal H, \mathcal K) \to \mathbb M_{\mathcal H \otimes \mathcal K} $ and its inverse are defined as
	\begin{equation}
		\mathfrak{J}(\Phi) := (\mathrm{Id}_\mathcal H \otimes \Phi)(\Omega) = \sum_{i,j=1}^{\mathrm{dim}(\mathcal H)} \ketbra{i}{j} \otimes \Phi(\ketbra{j}{i}) \quad \text{and} \quad \mathfrak{J}^{-1}(M)(H) = [M (H \otimes I_\mathcal K)]_\mathcal K.
	\end{equation}
	Though lacking some of the appealing properties of the Choi isomorphism (mainly, it does not map CP maps to the PSD cone), the Jamiołkowski isomorphism is basis-independent. 
	
	We now recall some elementary results from \textit{classical} probability theory.
	Let $\msf{A,B}$ denote \textit{elementary regions} with associated (finite) sample spaces $\mathcal{A,B}$ respectively.
	The set of all probability vectors associated with $\msf A$ and joints probability vectors over $\msf A \times \msf B$ are defined as:
	\begin{equation}
		\dsta := \left\{p_{\msf{A}} \in \mathbb{R}^{|\cl{A|}}_+ : \sum_{a \in \cl{A}} p_{\msf{A}}(a) = 1\right\} \qaq     \dstab := \left\{p_{\msf{AB}} \in \mathbb{R}^{\cl{|A||B|}}_+ : \sum_{(a,b) \in \cl{A \times B}} p_{\msf{AB}}(a,b) = 1\right\}.
	\end{equation}
	The standard bijection between $\dstab$ and $\dst_{\msf{BA}}$ is obtained by swapping the order of the elementary regions. 
	That is, $p_{\msf{AB}} \in \dstab \iff p_{\msf{BA}} \in \dst_{\msf{BA}}$ where $\pab (a,b) = p_{\msf{BA}}(b,a)$.   
	We make the bijection implicit and treat $\pab$ and $p_{\msf{BA}}$ as the same. 
	From a joint-distribution, $\pab \in \dstab$, one may construct the marginals $\pa \in \dsta$ and $p_{\msf B} \in \dstb$ as follows:
	\begin{equation}
		\pa(a) = \sum_{b \in \mathcal{B}} \pab (a,b) \qaq p_{\msf{B}}(b) = \sum_{a \in \mathcal{A}} \pab (a,b).
	\end{equation}
	A \textit{conditional distribution} over $\msf{B}$ conditioned on $\msf{A}$ is an $\mathcal A$-indexed collection of distributions over $\msf B$:
	\begin{equation}
		\mrm{CDist}_{\msf {B|A}} := \left\{\{ p^a_{\msf{B|A}}\}_{a \in \mathcal{A}} \subset \dstb \text{ for all }a \in \mathcal{ A}  \right\}.
	\end{equation}
	For each conditional distribution $\{ p^a_{\msf{B|A}}\}_{a \in \mathcal{A}} \in \mrm{CDist}_{\msf{B|A}}$, one may \textit{stack} the vectors to obtain $p_{\msf{B|A}}  := \sum_{a \in \mathcal{A}} e_a \otimes  p_{\msf{B|A}}^a,  $
	where $\{e_a\}_{a \in \mathcal{A}}$ is the standard basis over $\mathbb R^{|\mathcal{A}|}$. 
	Given such a (\textit{stacked}) conditional $p_{\msf{B|A}}$ and a prior distribution $\pa \in \dsta$, one can construct a joint distribution 
	\begin{equation} \label{Eq:ClassicalJoint}
		\pab := p_{\msf{B|A}} \odot (\pa \otimes 1_\msf{B}),
	\end{equation}
	where $1_\msf B := \sum_{b \in \mathcal{B}} e_b$ is the \textit{all-ones} vector of length $|\cl{B}|$ and $\odot$ denotes the Schur product. 
	The above operation can be written element-wise as $\pab (a,b) = p^a_{\msf{B|A}}(b) \pa (a)$ for all $(a,b) \in \mathcal{A \times B}$. 
	From a joint-distribution $\pab$, one can construct the conditional $p_{\msf{B|A}}$ as $\pab \odot (\pa \iv \otimes 1_\msf{B}) = p_{\msf{B|A}}$, provided every component of $\pa$ is strictly positive. 
	Otherwise, one restricts the support of the distributions.
	
	We will condense~\eqref{Eq:ClassicalJoint} as $\pab = p_{\msf{B|A}} \pa$, which, along with the bijection between $\pab$ and $p_{\msf {BA}}$, allow us to write
	\begin{equation}
		p_{\msf{B|A}}p_{\msf{A}} = p_{\msf{AB}} = p_{\msf{BA}} = p_{\msf{A|B}}p_{\msf{B}}.
	\end{equation}
	From this equation one can construct the marginals $p_\mathsf{A}, p_\mathsf{B}$, the conditionals $p_\mathsf{A|B}, p_\mathsf{B|A}$, and the \textit{Bayes' rule} $p_{\msf{B|A}} = p_{\msf{A|B}} p_{\mathsf B}/p_\mathsf{A}$.
	Essentially, it is this expression QSOTs aim to \textit{quantize}.
	
	\paragraph{Leifer--Spekkens QSOT.}
	
	Informally, a QSOT is a mapping of the form $\mathrm{QSOT} : \cpt \times \dx \to \mathbb{H}_{\x \ot \y}$.
	That is, given an \textit{input state} and a CPT map, construct a \textit{joint} Hermitian representation. 
	
	\begin{dbox}
		\begin{definition}[Leifer--Spekkens QSOT]
			The Leifer--Spekkens QSOT~\cite{leifer2006quantum, Leifer2007Conditional, Leifer2013towards} is defined as
			\begin{equation}\label{Eq:LSQSOT}
				\mathrm{QSOT}_{\mathrm{LS}}(\Phi,\rho)
				\equiv \Phi \ols \rho
				:= (\rho \hf \ot \iy)\,\fj(\Phi)\,(\rho \hf \ot \iy)
				= \fj(\Phi)\st{X}\rho,
			\end{equation}
			where $\fj(\Xi)$ denotes the Jamiołkowski representation of a linear map $\Xi$.
			See also~\cite[Def.~3.11, arXiv version]{Fullwood2025Dynamical}.
		\end{definition}
	\end{dbox}
	The (Hermitian) matrix $\Phi \ols \rho$ may fail to be PSD as it involves the Jamiołkowski representation of the CPT map $\Phi$. 
	Moreover,~\eqref{Eq:LSQSOT} resembles the CP extension $\fc (\Phi) \st{X} \rho^\ic$ of $\Phi$ by $\rho$ (see~\eqref{Eq:CPExt}), except for the transpose and the fact that it is the Choi representation that is used there. 
	This is not a coincidence as we now show that they are equivalent as the two are (Jamiołkowski/Choi) representations of the CP map $\Phi \circ \kpr$, which, as per our previous discussions, is the CP extension of $\Phi$ by $\rho$.
	\begin{tbox}
		\begin{prop}
			Let $\Phi \in \cpt, \rho \in \dx$ and $\Theta := \Phi \circ \kp_{\rho}$ be the CP extension. 
			It holds that
			\begin{equation}
				\Phi \ols \rho =  \fj(\Phi) \st{X} \rho = \fj(\Phi \circ \kp_{\rho}) = \mathfrak{J}(\Theta).
			\end{equation}
			Consequently, up to (the Jamiołkowski-)isomorphism, the LS-QSOT is equivalent to CP-extension.  
		\end{prop}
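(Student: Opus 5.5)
The first equality $\Phi \ols \rho = \fj(\Phi)\st{X}\rho$ is just the definition in \eqref{Eq:LSQSOT}, so the content lies in the second equality, $\fj(\Phi)\st{X}\rho = \fj(\Phi\circ\kpr)$. I will verify it directly from the definition of the Jamiołkowski isomorphism. By linearity of both sides in $\Phi$, it is enough to understand how precomposition by a map $H \mapsto A H B$ on the input side acts on $\fj$, and then specialize to $A = B = \rho\hf$.

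The key step is the identity
\[
\fj(\Psi\circ \mathrm{Ad}_{A,B}) = (B\ot\iy)\,\fj(\Psi)\,(A\ot\iy),
\qquad \mathrm{Ad}_{A,B}(H) := AHB,
\]
which holds for any linear map $\Psi$. To prove it, I will write
\[
\fj(\Psi\circ\mathrm{Ad}_{A,B}) = \sum_{i,j}\ketbra{i}{j}\ot\Psi(A\ketbra{j}{i}B).
\]
I then expand $A\ket{j} = \sum_k A_{kj}\ket{k}$ and $\bra{i}B = \sum_l B_{il}\bra{l}$, which gives
\[
\sum_{k,l}\Big(\sum_{i,j} B_{il}A_{kj}\ketbra{i}{j}\Big)\ot\Psi(\ketbra{k}{l}).
\]
Here the inner sum equals $B\ketbra{l}{k}A$, so the whole expression is $(B\ot\iy)\fj(\Psi)(A\ot\iy)$.

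There is an alternative route that avoids indices. Since $\Omega$ is the unnormalized maximally entangled projector in $\fj(\Psi) = (\mathrm{Id}\ot\Psi)(\Omega')$ with the swap convention, one could use the ricochet property $(I\ot X)\ket{\omega} = (X^\ic\ot I)\ket{\omega}$. However, the Jamiołkowski convention involves $\ketbra{j}{i}$, which is the partial transpose of $\Omega$, so the index computation is safer and I will present that.

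Specializing to $A = B = \rho\hf$, the conclusion is $\fj(\Phi\circ\kpr) = (\rho\hf\ot\iy)\fj(\Phi)(\rho\hf\ot\iy) = \fj(\Phi)\st{X}\rho$, as required. I expect the main obstacle to be this bookkeeping of which factor lands on which side. For the Choi matrix the same manipulation yields transposes, namely $\fc(\Phi)\st{X}\rho^\ic$ as in \eqref{Eq:CPExt}. For the Jamiołkowski matrix the extra swap $\ketbra{j}{i}$ cancels the transpose, so no transpose appears. Because $\rho\hf$ is Hermitian, the ordering of $A$ and $B$ is immaterial here, which removes any risk of a sign or order error in the final statement.

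Finally, I will note the consequence. Because $\fj$ is a bijection, the LS-QSOT map $(\Phi,\rho)\mapsto\Phi\ols\rho$ corresponds exactly to the CP-extension map $(\Phi,\rho)\mapsto\Phi\circ\kpr$ of Definition~\ref{Def:CPExt}. Its Choi representation is $\Pi_{\mrm{Tr}_\y,\rho^\ic}[\fc(\Phi)]$ by \eqref{Eq:CPExt}.
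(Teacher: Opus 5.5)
Your proof is correct. The index bookkeeping checks out: with $A\ket{j}=\sum_k A_{kj}\ket{k}$ and $\bra{i}B=\sum_l B_{il}\bra{l}$, the inner sum is indeed $B\ketbra{l}{k}A$, so $\fj(\Psi\circ\mathrm{Ad}_{A,B})=(B\ot\iy)\,\fj(\Psi)\,(A\ot\iy)$. You were also right to work from the explicit sum $\sum_{i,j}\ketbra{i}{j}\ot\Phi(\ketbra{j}{i})$, which matches the stated inverse $[M(H\ot I_{\mathcal K})]_{\mathcal K}$. The literal expression $(\mathrm{Id}\ot\Phi)(\Omega)$ would instead give the Choi matrix and introduce a transpose.

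The paper argues in the opposite direction. It does not compute the matrix $\fj(\Phi\circ\kpr)$ entrywise. Instead it checks that the candidate $\fj(\Phi)\st{X}\rho$ reproduces the action of $\Theta$ under the reconstruction formula $\Xi(X)=[\fj(\Xi)\dt{X}X]_{\y}$. It uses cyclicity of the partial trace over $\x$ to move the factors $\rho\hf\ot\iy$ onto $X$, and then appeals to injectivity of $\fj$. That route is basis-free and takes one line. Yours is coordinate-based, but it proves the more general two-sided identity for arbitrary $A,B$. The Choi analogue $\fc(\Phi\circ\kpr)=\fc(\Phi)\st{X}\rho^{\ic}$ then follows by the identical computation. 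One small point: the order of the outer factors is immaterial simply because $A=B$, not because $\rho\hf$ is Hermitian.
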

	\end{tbox}
	\begin{proof}
		The action of a linear map $\Xi$ can be recovered from its Jamiołkowski matrix as follows: $\Xi(X) = [\fj(\Xi) \dt{X} X]_{\mathcal Y}$. 
		Thus, it suffices to show that $[(\fj(\Phi) \st{X} \rho) \dt{X} X]_\mathcal Y = \Phi\left (\rho \hf X \rho \hf\right ) = \Theta(X),$
		for any $X \in \mx$.
		Indeed, this follows as
		\begin{equation}
			\begin{aligned}
				{[(\fj(\Phi) \st{X} \rho) \dt{X} X]}_\mathcal Y 
				&= [\fj(\Phi) \dt{X} (\rho \hf X \rho \hf )]_\mathcal Y = \Phi(\rho \hf X \rho \hf) = \Theta (X),
			\end{aligned}
		\end{equation}
		where, in the first equality, we leveraged the cyclicity of the partial trace.
		The bijection of the Jamiołkowski isomorphism allows us to conclude the proof. 
	\end{proof} Due to the relation $\Phi \ols \rho = \fj(\Phi \circ \kpr) \equiv \fj(\Theta)$, we will also refer to $\Theta$ as the \textit{joint state} (of $\Phi$ and $\rho$). 
	Essentially, we work at the Choi representation of the CP maps, whereas the LS formalism mostly utilizes the Jamiołkowski representation. 
	We now see how the LS-QSOT (\eqref{Eq:LSQSOT}) `quantizes' the basic operations of classical probability.

	\begin{tbox}
		\begin{theorem}[LS-QSOT as a quantum Bayes rule (\cite{Leifer2013towards, Parzygnat2023TimeReversal})]
			Let $\rho \in \dx$ be a prior state and a forward quantum channel, playing the roles of
			$p_\msf{A}$ and $p_{\msf{B|A}}$, respectively. 
			Let $\Theta := \Phi \circ \kp_R$
			\begin{equation}
				\Phi \ols \rho := \fj(\Phi)\st{X}\rho
				= \fj(\Phi \circ \kpr)
				\equiv \fj(\Theta),
			\end{equation}
			be the LS-QSOT. 
			The following statements hold:
			\begin{enumerate}
				\item (\emph{Joint state.}) 
				$\Phi \ols \rho$ is the quantum analogue of the classical joint distribution
				$p_{\msf{AB}} = p_{\msf{B|A}}p_\msf{A}$.
				\item (\emph{Marginals.}) 
				The prior and posterior are recovered by marginalization:
				\begin{equation}
					[\Phi \ols \rho]_\x = \rho,
					\qquad
					[\Phi \ols \rho]_\y = \Phi(\rho).
				\end{equation}
				\item (\emph{Conditionals.})
				The forward and reverse conditionals are obtained by star-division:
				\begin{equation}
					\fj(\Phi) = (\Phi \ols \rho)\st{X}\rho\iv,
					\qquad
					\fj(\Psi) = (\Phi \ols \rho)\st{Y}\Phi(\rho)\iv,
				\end{equation}
				where $\Psi \in \mrm{CPT}(\y,\x)$ is the reverse channel.
				
				\item (\emph{Quantum Bayes' rule.})
				The reverse channel $\Psi$ coincides with the Petz recovery map $\tilde \Phi_\rho = \mrm{K}_\rho \circ \Phi^\dagger \circ \mrm{K}_{\Phi(\rho)}^{-1},$ and satisfies the quantum analogue of Bayes' rule
				\begin{equation}
					\fj(\tilde \Phi_\rho) =  \fj(\Phi)\st{X}\rho  \st{Y} \Phi(\rho) \iv
					= \fj(\Theta) \st{Y} \Phi(\rho) \iv.        
				\end{equation}
			\end{enumerate}
		\end{theorem}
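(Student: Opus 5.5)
The plan is to work entirely with the Jamiołkowski representation and the cyclicity of the partial trace, after first fixing the obvious typo $\Theta := \Phi \circ \kp_\rho$. Statement 1 is interpretive: it is the definition of $\ols$ together with the preceding proposition, $\Phi \ols \rho = \fj(\Theta)$. The operator $\fj(\Phi)\st{X}\rho$ mirrors $p_{\msf{B|A}} \odot (p_{\msf A}\otimes 1_{\msf B})$, with the Schur product replaced by the star product. I will simply point to that identification.

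For Statement 2 I use the identities $[\fj(\Xi)]_\x = \Xi^\dagger(\iy)^{\ic}$-type relations. More concretely, $\Tr_\y[\fj(\Xi)] = \sum_{ij}\ketbra{i}{j}\,\tr{\Xi(\ketbra{j}{i})}$. For a trace-preserving $\Phi$ this equals $\ix$.

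The $\x$-marginal then follows from \eqref{Eq:StarProdIden}:
\begin{equation}
[\fj(\Phi)\st{X}\rho]_\x = [\fj(\Phi)]_\x \star \rho = \rho .
\end{equation}

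For the $\y$-marginal, I use the reconstruction formula $\Xi(X) = [\fj(\Xi)\dt{X} X]_\y$ from the previous proof with $X = \ix$. This gives
\begin{equation}
[\fj(\Theta)]_\y = \Theta(\ix) = \Phi(\rho^{1/2}\ix\rho^{1/2}) = \Phi(\rho).
\end{equation}

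Statement 3, forward conditional: star-multiplying on $\x$ by $\rho^{1/2}$ and then by $\rho^{-1/2}$ composes to the identity, since the factors commute on the same tensor slot. Hence $(\fj(\Phi)\st{X}\rho)\st{X}\rho\iv = \fj(\Phi)$, assuming $\rho>0$, or on its support otherwise.

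For the reverse conditional, I define $\Psi$ by $\fj(\Psi) := \fj(\Theta)\st{Y}\Phi(\rho)\iv$, reading $\fj(\Psi)$ via the swap so that it lives in $\y\otimes\x$. I then identify $\Psi$ with the Petz map, which simultaneously settles Statement 4.

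The key step is a swap lemma: if $\fj(\Xi)$ for $\Xi \in \mathrm{LM}(\x,\y)$ is viewed with its tensor factors swapped, it equals $\fj(\Xi^\dagger)$ up to the convention that $\fj(\Xi^\dagger)$ is indexed $\y\otimes\x$. I will check this directly on matrix units:
\begin{equation}
\tr{\Xi(\ketbra{j}{i})\,\ketbra{l}{k}} = \tr{\ketbra{j}{i}\,\Xi^\dagger(\ketbra{l}{k})},
\end{equation}
using $\la A,\Xi(B)\ra=\la\Xi^\dagger(A),B\ra$. The Jamiołkowski convention with $\Phi(\ketbra{j}{i})$ is what makes this work without transposes. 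This is the analogue of $\fc(\Theta^\dagger)=\fc(\Theta)^\ic$ in Appendix~\ref{App:ChoiMatrixAdjoint}.

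Granting the lemma, $\fj(\Theta) \cong \fj(\Theta^\dagger) = \fj(\kp_\rho\circ\Phi^\dagger)$. A star product on the $\y$ slot corresponds to precomposition on the input of $\Theta^\dagger$, exactly as in the previous proposition with the roles of $\x,\y$ exchanged. Therefore
\begin{equation}
\fj(\Theta)\st{Y}\Phi(\rho)\iv = \fj(\kp_\rho\circ\Phi^\dagger\circ\kp_{\Phi(\rho)}\iv) = \fj(\tilde\Phi_\rho).
\end{equation}

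To check that $\Psi$ is CPT, I use that the Petz map is manifestly CP and satisfies $\tilde\Phi_\rho^\dagger(\ix) = \kp_{\Phi(\rho)}\iv(\Phi(\rho)) = \Phi(\rho)^0$, so it is trace preserving on the support. This is consistent with Section~\ref{Sec:PetzMapGeo}, where the same map arises as the channel component of $\Theta^\dagger$.

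The main obstacle I expect is bookkeeping of conventions in the swap lemma. In particular, I must confirm that the Jamiołkowski (not Choi) matrix of the adjoint is the plain swap with no transpose, so that the star product on $\y$ translates into precomposition rather than into precomposition by a transpose. I will verify this on matrix units before using it. Everything else is direct substitution into \eqref{Eq:StarProdIden} and the reconstruction formula.
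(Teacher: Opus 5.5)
Your verification is correct. It differs from the paper mainly in that the paper gives no argument of its own here; it simply defers to \cite{Leifer2013towards, Parzygnat2023TimeReversal}.

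Your route is self-contained. Items 1--3 reduce to cyclicity of the partial trace, via \eqref{Eq:StarProdIden}, the computation $\mathrm{Tr}_\y[\fj(\Phi)]=\ix$, and the reconstruction formula $\Xi(X)=[\fj(\Xi)\dt{X}X]_\y$. Item 4 reduces to the swap relation between $\fj(\Theta)$ and $\fj(\Theta^\dagger)$, which is the Jamio\l{}kowski counterpart of the appendix identity $\fc(\Theta^\dagger)=\fc(\Theta)^\ic$.

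Within the paper's own framework, item 4 would more naturally come from Section~\ref{Sec:PetzMapGeo}. The prior of $\Theta^\dagger=\kpr\circ\Phi^\dagger$ is $\Theta(\ix)=\Phi(\rho)$, so its channel component is $\Theta^\dagger\circ\kp_{\Phi(\rho)}\iv=\tilde\Phi_\rho$. Transporting that conclusion to the Jamio\l{}kowski picture needs exactly your swap lemma. Your argument therefore also makes explicit the $\x\otimes\y$ versus $\y\otimes\x$ identification that the statement of item 4 leaves implicit.

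Two small points to tidy up:
\begin{itemize}
\item The swap lemma should be stated for Hermitian-preserving maps, not all of $\mathrm{LM}(\x,\y)$. On matrix units the adjoint identity gives $\tr{\ketbra{l}{k}\,\Xi(\ketbra{j}{i})}=\tr{\Xi^\dagger(\ketbra{k}{l})^\dagger\,\ketbra{j}{i}}$. To reach your displayed form you need $\Xi^\dagger(\ketbra{k}{l})^\dagger=\Xi^\dagger(\ketbra{l}{k})$, which holds here because $\Theta$ is CP.
\item For Jamio\l{}kowski matrices one has $[\fj(\Xi)]_\x=\Xi^\dagger(\iy)$ with no transpose. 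This is immaterial here, since it equals $\ix$ for trace-preserving $\Phi$.
\end{itemize}

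Finally, as you note, the forward-conditional identity needs $\rho>0$, and $\Psi\in\mathrm{CPT}(\y,\x)$ needs $\Phi(\rho)>0$. Both assumptions are implicit in the statement.
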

	\end{tbox}
	\begin{proof}
		See \cite{Leifer2013towards, Parzygnat2023TimeReversal}.
	\end{proof}

	\paragraph{Information geometry of the LS-QSOT.}
	We now arrive at our contributions, namely, we show that central operations of the LS-QSOT formalism, described in the above theorem, all have natural and compelling Bures-geometric interpretations. 
	We will be working with the Choi representations of the (CP and CPT) maps involved.
	\begin{tbox}
		\begin{theorem}
			Let $\Phi \in \cpt$, $\rho \in \dx$, and $\Theta := \Phi \circ \kp_{\rho} \in \cpxy$.
			Observe that $\Phi \ols \rho = \fj (\Theta) = \fj (\Phi) \st{X} \rho$. 
			The operations from LS-QSOT have the following geometric interpretations.
			\begin{enumerate}
				\item (\emph{Joint state.}) Construction of the joint state is equivalent to the Choi--Bures projection of $\Phi$ to the set of CP maps with prior $\rho$:
				\begin{equation}
					\fj \iv(\Phi \ols \rho) = \Theta = \Pi_{\mrm{CP}_\rho}^{\mrm{CB}}[\Phi].			
				\end{equation} 
				\item (\emph{Conditionals.})  The forward and reverse channels (conditionals) $\Phi \in \cpt$ and $\Psi \in \cpty$ coincide with the Choi--Bures projections of $\Theta$ and $\Theta^\dagger$ to the corresponding sets:
				\begin{equation}
					\Phi = \Pi^{\mrm{CB}}_{\cpt}[\Theta] \qaq \Psi = \Pi^{\mrm{CB}}_{\cpty}[\Theta^\dagger].
				\end{equation}
				\item (\emph{Quantum Bayes' rule.}): The reverse conditional $\Psi$ coincides with the Petz map $\tilde \Phi_\rho$: 
				\begin{equation}
					\tilde \Phi_\rho = \Psi = \Pi^{\mrm{CB}}_{\cpty} [\Theta ^\dagger]. 
				\end{equation}
				Thus, the LS-QSOT quantization of the expression $p_{\msf{A|B}}  = p_{\msf{B|A}}p_{\msf{A}}/p_\msf{B}$ has an information-geometric interpretation in terms of Choi--Bures projections. 
			\end{enumerate}
		\end{theorem}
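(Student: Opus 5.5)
Each of the three statements reduces to results already established, chiefly Proposition~\ref{Prop:ChPrDecomp}, Proposition~\ref{Prop:CBProjPriorS}, and the Petz-map theorem of Section~\ref{Sec:PetzMapGeo}. The proof is mostly a matter of identifying the priors of the maps involved.

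\emph{Statement 1 (joint state).} First we check that $\Theta=\Phi\circ\kp_\rho$ lies in $\mathrm{CP}_\rho(\x,\y)$: $\Theta^\dagger(\iy)=\kp_\rho(\Phi^\dagger(\iy))=\kp_\rho(\ix)=\rho$, because $\Phi^\dagger$ is unital. Next we apply Proposition~\ref{Prop:CBProjPriorS} to $\Phi$ itself. Its prior-channel decomposition is $(\ix,\Phi)$, so the projection onto $\mathrm{CP}_\rho$ is $\Phi\circ\kp_{(\ix^{-1}\#\rho)^2}$. Since $\ix\#\rho=\rho^{1/2}$, this equals $\Phi\circ\kp_\rho=\Theta$, which is exactly the CP-extension remark around \eqref{Eq:CPExt}. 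The preceding proposition gives $\fj^{-1}(\Phi\ols\rho)=\Theta$, which closes the chain. One caveat: the closed form requires $\rho>0$. For rank-deficient $\rho$ we invoke the support remarks, or simply assume $\rho$ is full rank as the paper does throughout.

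\emph{Statement 2 (conditionals).} By Proposition~\ref{Prop:ChPrDecomp}(2), the Choi--Bures projection of a CP map onto CPT maps is its channel component. For $\Theta$ the prior is $\rho$, as computed above. Uniqueness of the prior-channel decomposition (Proposition~\ref{Prop:ChPrDecomp}(1)), together with $\Theta=\Phi\circ\kp_\rho$, forces the channel component to be $\Phi$, so $\Pi^{\mrm{CB}}_{\cpt}[\Theta]=\Phi$. For $\Theta^\dagger=\kp_\rho\circ\Phi^\dagger\in\mrm{CP}(\y,\x)$ the prior is $\Theta(\ix)=\Phi(\rho)$. For Proposition~\ref{Prop:ChPrDecomp} to apply we need $\Phi(\rho)>0$, and we assume this or restrict to its support. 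The channel component is then $\Theta^\dagger\circ\kp_{\Phi(\rho)}^{-1}$. To identify it with the reverse conditional $\Psi$ of the preceding LS theorem, we pass to Jamiołkowski form. There $\fj(\Psi)=\fj(\Theta)\st{Y}\Phi(\rho)^{-1}$, and we check that this equals $\fj$ of $\Theta^\dagger\circ\kp_{\Phi(\rho)}^{-1}$, viewed with the tensor factors swapped. This identification is the main obstacle: it requires keeping track of the transpose and swap conventions relating $\fj(\Theta^\dagger)$ to $\fj(\Theta)$, using the analogue for $\fj$ of the identity $\fc(\Theta^\dagger)=\fc(\Theta)^\ic$ (up to swap). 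My first attempt would be to verify the action directly, $\Psi(Y)=[\fj(\Psi)\dt{Y}Y]_\x$, expanding both sides to $\rho^{1/2}\Phi^\dagger(\Phi(\rho)^{-1/2}Y\Phi(\rho)^{-1/2})\rho^{1/2}$. This direct verification avoids the representation bookkeeping altogether.

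\emph{Statement 3 (quantum Bayes' rule).} This follows by combining Statement 2 with the Petz-map theorem of Section~\ref{Sec:PetzMapGeo}, which already gives $\tilde\Phi_\rho=\Pi^{\mrm{CB}}_{\cpty}[\Theta^\dagger]$. Equivalently, the computation in Statement 2 exhibits the channel component explicitly as $\kp_\rho\circ\Phi^\dagger\circ\kp_{\Phi(\rho)}^{-1}$. The concluding sentence is interpretive: the three classical operations $p_{\msf{AB}}=p_{\msf{B|A}}p_\msf{A}$, conditioning, and Bayes inversion are realized respectively by Statement 1, Statement 2, and this identity.
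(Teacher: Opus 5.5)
Your proposal is correct and follows essentially the same route as the paper, which reduces each item directly to the prior-channel decomposition results of Section~\ref{Sec:CPDecomp} (Propositions~\ref{Prop:ChPrDecomp} and~\ref{Prop:CBProjPriorS}) and to the Petz-map theorem of Section~\ref{Sec:PetzMapGeo}. The Jamio{\l}kowski bookkeeping you flag as the main obstacle is not needed: item~4 of the preceding LS theorem already gives $\Psi=\tilde\Phi_\rho$, and the Petz-map theorem identifies this with $\Pi^{\mrm{CB}}_{\cpty}[\Theta^\dagger]$ (your direct check of the action of $\Psi$ is nonetheless valid).
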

	\end{tbox}
	\begin{proof}
		Each statement follows directly from the results relating prior-channel decompositions to Choi--Bures projections
		Section~\ref{Sec:CPDecomp} and the geometric characterization of the Petz map in
		Section~\ref{Sec:PetzMapGeo}.
	\end{proof}

	Our results provide a strong information-geometric underpinning to the Leifer--Spekkens formalism. 
	All the operations they define can thus be seen as (Bures-) geometrically \textit{optimal} quantizations of the basic rules of classical probability. 
	
	Recall that the Choi- and Jamiołkowski-isomorphisms have certain disadvantages (Basis-dependence and non-positivity for CP maps, respectively). 
	However, by taking a representation-independent perspective of these objects, one can circumvent these disadvantages, while still having access to the whole machinery (in bipartite settings).
    Whenever a concrete representation is required, one may equivalently employ the Choi, Jamiołkowski, or even Stinespring representation.

	\subsection{Geometry of the Quantum Minimal Change Principle} \label{Sec:QuantumMinimalChange}
	\textcite{bai2025quantum}, study the problem of \textit{quantum minimal change} with the motivation of generalizing Bayes' rule to the quantum setting using information-geometric principles. 
	We defer to~\cite{bai2025quantum} for further details.
    
	In this section, using prior-channel decompositions (equivalently, fidelity/Bures projections), we derive a shorter proof of their central result and provide the geometric interpretations for the same. 
	The problem statement and the central result are encapsulated in the following theorem.
	\begin{tbox}
		\begin{theorem}[Quantum minimal change principle~\cite{bai2025quantum}] \label{Thm:QMinPrinciple}
			Let $\Phi \in \cpt$, $\rho \in \dx$, and $\sigma \in \mathbb D_\y$. 
			The optimization problem
			\begin{equation}
				\begin{aligned}
					\text{maximize}: & \quad \mathrm{F}(\fc(\Phi) \st{X} \rho^{\intercal}, (\fc(\Psi) \st{Y} \sigma^{\intercal})^{\intercal}) \\
					\text{subject to}: & \quad \Psi \in \mathrm{CPT}(\cal Y, \cal X),  
				\end{aligned}
			\end{equation}
			is solved at 
			\begin{equation} \label{Eq:122}
				\Psi = \kpr \circ \Phi^\dagger \circ \kp_{(\Phi (\rho) \iv \# \sigma)^2} \circ \kp_\sigma \iv.
			\end{equation} 
			Moreover, if $[\Phi(\rho),\sigma] = 0$, then $\Psi = \tilde \Phi_\rho$ is the Petz map associated with the pair $(\Phi, \rho)$. 
		\end{theorem}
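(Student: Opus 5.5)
The plan is to recast the optimization as a marginal (partial-trace) projection problem at the level of Choi matrices, and then read off the optimizer from Theorem~\ref{Thm:PartialTraceProj}. Throughout I assume $\rho$, $\Phi(\rho)$ and $\sigma$ are positive definite, so that the closed forms apply; the rank-deficient case would be handled on supports, as in the remark after Proposition~\ref{Prop:ChPrDecomp}.

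First I rewrite both arguments of the fidelity as Choi matrices of CP maps. By \eqref{Eq:CPExt}, $\fc(\Phi)\st{X}\rho^\ic = \fc(\Theta)$ with $\Theta := \Phi\circ\kpr \in \cpxy$. Similarly, $\fc(\Psi)\st{Y}\sigma^\ic = \fc(\Psi\circ\kp_\sigma)$ for $\Psi\in\cpty$. Using $\fc(\Xi^\dagger)=\fc(\Xi)^\ic$, the second argument becomes
\[
(\fc(\Psi)\st{Y}\sigma^\ic)^\ic = \fc\bigl((\Psi\circ\kp_\sigma)^\dagger\bigr) = \fc(\Xi), \qquad \Xi := \kp_\sigma\circ\Psi^\dagger \in \cpxy .
\]
Next I translate the constraint. $\Psi$ is trace preserving if and only if $\Psi^\dagger$ is unital, which holds if and only if $\Xi(\ix)=\kp_\sigma(\ix)=\sigma$. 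In Choi terms this reads $\mathrm{Tr}_\x[\fc(\Xi)]=\Xi(\ix)=\sigma$, i.e. the $\y$-marginal of $\fc(\Xi)$ equals $\sigma$. Since $\sigma>0$, the map $\Psi\mapsto\Xi$ is a bijection between $\cpty$ and the CP maps whose Choi matrices lie in $\mathrm{Tr}_\x^{-1}[\sigma]$. Its inverse is $\Psi=\Xi^\dagger\circ\kp_\sigma^{-1}$. Consequently the problem is exactly the Bures projection $\Pi_{\mrm{Tr}_\x,\sigma}[\fc(\Theta)]$, a marginal projection onto the $\y$-marginal.

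Now I apply Theorem~\ref{Thm:PartialTraceProj}, in the $\y$-marginal version listed in Table~\ref{Tab:ClosedForms}. Since $[\fc(\Theta)]_\y=\Theta(\ix)=\Phi(\rho)$, the optimizer is
\[
\fc(\Theta)\st{Y} M^2 = \fc(\kp_{M^2}\circ\Theta), \qquad M:=\Phi(\rho)^{-1}\#\sigma .
\]
Hence $\Xi = \kp_{M^2}\circ\Phi\circ\kpr$. Inverting the bijection and using $\kp_A^\dagger=\kp_A$ for PSD $A$, I obtain
\[
\Psi = \Xi^\dagger\circ\kp_\sigma^{-1} = \kpr\circ\Phi^\dagger\circ\kp_{M^2}\circ\kp_\sigma^{-1},
\]
which is \eqref{Eq:122}. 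Uniqueness follows from Theorem~\ref{Thm:ProjectionUniquenessCombined}, since $\fc(\Theta)$ is full rank under the standing assumptions.

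For the commuting case, $[\Phi(\rho),\sigma]=0$ gives $M^2=\Phi(\rho)^{-1}\sigma$. Then $\kp_{M^2}\circ\kp_\sigma^{-1}$ is conjugation by the commuting product $(\Phi(\rho)^{-1}\sigma)^{1/2}\sigma^{-1/2}=\Phi(\rho)^{-1/2}$, so it equals $\kp_{\Phi(\rho)}^{-1}$. Therefore $\Psi=\kpr\circ\Phi^\dagger\circ\kp_{\Phi(\rho)}^{-1}=\tilde\Phi_\rho$.

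The main obstacle I expect is the transpose and Choi bookkeeping in the first step. It requires checking carefully that $(\fc(\Psi)\st{Y}\sigma^\ic)^\ic$ is the Choi matrix of $\kp_\sigma\circ\Psi^\dagger$, using $(A\star B)^\ic=A^\ic\star B^\ic$ and the Choi-adjoint relation. It also requires verifying that star-multiplying on $\y$ by $\sigma^\ic$ corresponds to precomposition by $\kp_\sigma$ on the input of $\Psi$, with the correct order of composition. Once this identification is secured, the rest is a direct invocation of the marginal-projection closed form.
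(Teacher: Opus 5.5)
Your proposal is correct and follows essentially the paper's argument. The paper identifies the problem as the Choi--Bures projection of $\Theta^\dagger$ onto $\mathrm{CP}_\sigma(\y,\x)$, applies Proposition~\ref{Prop:CBProjPriorS} (which is itself the marginal-projection closed form), and then takes the channel component; this is your $\y$-marginal projection of $\fc(\Theta)$ written in the adjoint picture, since your $\Xi$ is the adjoint of the paper's $\Psi\circ\kp_\sigma$.

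The one slip is the uniqueness remark: $\rho,\Phi(\rho),\sigma>0$ does not make $\fc(\Theta)=\fc(\Phi)\st{X}\rho^\ic$ full rank, for example when $\Phi$ has few Kraus operators. This is harmless, because the statement only asks for an optimizer, and optimality of $\fc(\Theta)\st{Y}M^2$ needs only $\Phi(\rho)>0$. Indeed, Uhlmann's theorem with the factors $\fc(\Theta)\hf$ and $(\ix\otimes M)\fc(\Theta)\hf$ gives $\mathrm{F}(\fc(\Theta),\fc(\Theta)\st{Y}M^2)\geq\mathrm{Tr}[M\Phi(\rho)]=\mathrm{F}(\Phi(\rho),\sigma)$, which attains the DPI bound.
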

	\end{tbox}
	Using Definition~\ref{Def:CPExt}, we rewrite the objective function as $\mrm{F}(\fc(\Phi \circ \kp_{\rho}), \fc((\Psi \circ \kp_{\sigma})^\dagger ))$. 
	The extra transpose on the second argument is to compare the adjoint of the reverse process to the forward process.
	
	Denote $\Theta \equiv \Phi \circ \kpr$. 
	By the uniqueness of prior-channel decompositions (Proposition~\ref{Prop:ChPrDecomp}) and the invariance of fidelity under transpose of its arguments, the above optimization problem can be rewritten as 
	\begin{equation}
		\begin{aligned}
			\text{maximize}: & \quad \mathrm{F}(\fc(\Theta ^\dagger), \fc(\Xi)) \\
			\text{subject to}: & \quad \Xi \in \mathrm{CP}_\sigma(\cal Y, \cal X),
		\end{aligned}
	\end{equation}
	where $\mrm{CP}_\sigma (\y, \x) := \{\Xi \in \mrm{CP}(\y,\x): \Xi^*(I_\x) = \sigma\}$ is the subset $\mrm{CP}(\y, \x)$ with prior $\sigma$ (see~\eqref{Eq:CPPriorSet}).
	Observe that $\tr{\fc(\Xi)} = \tr{\sigma} = 1$ for all $\Xi \in \mrm{CP}_{\sigma}(\y, \x)$, which implies that the above optimization is equivalent to the Choi--Bures projection $\Pi_{\mrm{CP}_\sigma(\y, \x)}^{\mrm{CB}}[\Theta^\dagger]$. 
	
	This identification and the equivalence between Choi--Bures projections and prior-channel decompositions allow us to employ the results of this article to derive a simpler proof for Theorem~\ref{Thm:QMinPrinciple}.  
	We will only use the prior-channel decompositions, thereby demonstrating their versatility. 
	In~\cite{bai2025quantum}, the closed-form solution for the optimal channel $\Psi$ was derived using the technique of Lagrange multipliers.
	Our results provide a simpler proof and geometric interpretations for this result. 
	
	\begin{tbox}
		\begin{theorem}
			Let $\Theta := \Phi \circ \kpr \in \cpxy$ be the CP-extension of $\Phi$ by $\rho$. 
			The optimal reverse channel $\Psi \in \cpty$ (\eqref{Eq:122}) is the \emph{channel component} of the Choi--Bures projection of $ \Theta^\dagger$ to $\mrm{CP}_\sigma(\y, \x)$:
			\begin{equation} 
				\Psi =  \left(\Pi_{\cpty}^{\mrm {CB}} \circ \Pi^{\mrm{CB}}_{\mrm{CP}_\sigma(\y, \x)}\right)[\Theta^{\dagger} ]
			\end{equation}
		\end{theorem}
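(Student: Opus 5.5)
The plan is to compute both projections explicitly using the prior-channel machinery of Section~\ref{Sec:CPDecomp}, and then compare the result with \eqref{Eq:122}.

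\textbf{Step 1: the prior of $\Theta^\dagger$.} Since $\Theta^\dagger = \kpr \circ \Phi^\dagger \in \mrm{CP}(\y,\x)$, its prior is
\begin{equation*}
(\Theta^\dagger)^\dagger(I_\x) = \Theta(I_\x) = \Phi(\rho).
\end{equation*}
Hence $\Theta^\dagger \sim (\Phi(\rho), \tilde\Phi_\rho)$. This is exactly the computation already done in the proof of the Petz-map theorem of Section~\ref{Sec:PetzMapGeo}. We assume $\Phi(\rho) > 0$, or else restrict to its support as in the Remark after Proposition~\ref{Prop:ChPrDecomp}.

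\textbf{Step 2: the first projection.} Apply Proposition~\ref{Prop:CBProjPriorS} with $R = \Phi(\rho)$ and $S = \sigma$. This gives
\begin{equation*}
\Xi := \Pi^{\mrm{CB}}_{\mrm{CP}_\sigma(\y,\x)}[\Theta^\dagger] = \Theta^\dagger \circ \kp_{(\Phi(\rho)^{-1}\#\sigma)^2} = \kpr \circ \Phi^\dagger \circ \kp_{(\Phi(\rho)^{-1}\#\sigma)^2}.
\end{equation*}
By construction its prior is $\Xi^\dagger(I_\x) = \sigma$. This follows from the Riccati equation in Proposition~\ref{Prop:MatGeoMean}, and is the check already carried out in that proposition's proof.

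\textbf{Step 3: the second projection.} Apply Proposition~\ref{Prop:ChPrDecomp}, Statement 2, to $\Xi$. Its Choi--Bures projection onto $\cpty$ is its channel component $\Xi \circ \kp_\sigma^{-1}$, which requires $\sigma > 0$. This yields
\begin{equation*}
\Pi^{\mrm{CB}}_{\cpty}[\Xi] = \kpr \circ \Phi^\dagger \circ \kp_{(\Phi(\rho)^{-1}\#\sigma)^2} \circ \kp_\sigma^{-1},
\end{equation*}
which is precisely \eqref{Eq:122}.

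\textbf{Step 4: identifying this with the optimum of Theorem~\ref{Thm:QMinPrinciple}.} By the discussion preceding the statement, the optimization in Theorem~\ref{Thm:QMinPrinciple} is equivalent to $\Pi^{\mrm{CB}}_{\mrm{CP}_\sigma(\y,\x)}[\Theta^\dagger]$ under the bijection $\Psi \mapsto \Psi\circ\kp_\sigma$. Here $\Psi \circ \kp_\sigma$ is the CP extension of Definition~\ref{Def:CPExt}.

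The only point that needs care is that this bijection is inverted by taking the channel component. This holds because the prior-channel decomposition is unique (Proposition~\ref{Prop:ChPrDecomp}, Statement 1): the optimal $\Xi$ determines $\Psi = \Xi\circ\kp_\sigma^{-1}$ uniquely. Thus the optimal $\Psi$ is the channel component of $\Xi$, i.e.\ $\Pi^{\mrm{CB}}_{\cpty}[\Xi]$, as claimed.

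I expect the main obstacle to be the bookkeeping of transposes and adjoints in this reformulation step. One must check that $(\fc(\Psi)\st{Y}\sigma^\ic)^\ic = \fc((\Psi\circ\kp_\sigma)^\dagger)$, up to the ordering conventions for the Choi matrix of an adjoint. This uses $\fc(\Theta^\dagger) = \fc(\Theta)^\ic$ and transpose-invariance of fidelity.

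\textbf{The commuting case.} For the ``moreover'' clause of Theorem~\ref{Thm:QMinPrinciple}, suppose $[\Phi(\rho),\sigma]=0$. Then $(\Phi(\rho)^{-1}\#\sigma)^2 = \Phi(\rho)^{-1}\sigma$, and since everything commutes, $\kp_{\Phi(\rho)^{-1}\sigma}\circ\kp_\sigma^{-1} = \kp_{\Phi(\rho)}^{-1}$. This gives $\Psi = \tilde\Phi_\rho$.
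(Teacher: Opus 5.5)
Your proposal is correct and follows the paper's proof essentially step for step. You identify the prior of $\Theta^\dagger$ as $\Phi(\rho)$ and apply Proposition~\ref{Prop:CBProjPriorS} to get $\Xi = \Theta^\dagger \circ \kp_{(\Phi(\rho)^{-1}\#\sigma)^2}$; taking its channel component $\Xi\circ\kp_\sigma^{-1}$ via $\Xi^\dagger(I_\x)=\sigma$ and substituting $\Theta^\dagger=\kpr\circ\Phi^\dagger$ then matches \eqref{Eq:122}. Your Step 4 and the commuting-case check only make explicit the reformulation the paper gives just before the theorem and the ``moreover'' clause of Theorem~\ref{Thm:QMinPrinciple}.
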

	\end{tbox}
	\begin{proof}
		Using Proposition~\ref{Prop:CBProjPriorS}, we first project $\Theta^\dagger$ to $\mrm{CP}_{\sigma}(\y, \x)$:
		\begin{equation}
			\Xi \equiv \Pi^{\mrm{CB}}_{\mrm{CP}_\sigma(\y, \x)}[\Theta^{\dagger} ] = \Theta^\dagger \circ \mrm{K}_{(\Phi(\rho) \iv \# \sigma)^2}, 
		\end{equation}
		where we used the fact that the prior of $\Theta^\dagger$ is $\Theta(\ix) = \Phi(\rho)$. 
		We project $\Xi$ to $\cpty$, which is simply the channel component of $\Xi$:
		\begin{equation}
			\Pi_{\cpty}^{\mrm{CB}}[\Xi] = \Xi \circ \kp_{\Xi^\dagger(I_\x)} \iv = \Theta^\dagger \circ \mrm{K}_{(\Phi(\rho) \iv \# \sigma)^2} \circ \kp_{\sigma}\iv = \Psi
		\end{equation}
		where, in the second equality, we used the fact that $\Xi ^\dagger(I_\x) = \sigma$. 
		The last equality follows from the relation $\Theta^\dagger = \kpr \circ \Phi^\dagger$ and comparison with \eqref{Eq:122}.
		This concludes the proof.
	\end{proof}
	
	Thus, the results of this article greatly simplify the proof for the central result of~\cite{bai2025quantum}. 
	Moreover, it also elucidates the geometric aspect, which is understood as follows.
	We first project (with respect to the Choi--Bures distance) $\Theta ^\dagger$ to $\mrm{CP}_\sigma (\y, \x)$, the set of CP maps with prior $\sigma$.
	Let $\Xi$ be this projection. 
	Then the optimal reverse channel $\Psi$ is simply the channel component of $\Xi$, which can be obtained via the channel-prior decomposition.
	At the level of Choi representations, one can see $\Psi^\dagger $ as the result of three projections starting from $\Phi$.
	\begin{equation}
		\fc(\Phi) \mapsto \Pi_{\mrm{Tr}_\y, \rho ^\ic}[\fc(\Phi)] = \fc (\Theta) \mapsto \Pi_{\mrm{Tr}_\x, \sigma} [\fc(\Theta)] \equiv \fc(\Xi^\dagger ) \mapsto \Pi_{\mrm{Tr}_\x, \iy} [\fc(\Xi^\dagger )] = \fc(\Psi^\dagger ).
	\end{equation}

	We conclude with the following remark.
	Theorem~\ref{Thm:QMinPrinciple} involves combining a channel ($\Phi$) and a prior $(\rho)$ into a joint state $\fc(\Phi) \st{X} \rho^\ic$, and thus can be seen as a quantum state over time. 
	Moreover, this state over time representation can be seen as the Choi matrix of the CP extension of the channel-state pair involved: $	 \fc(\Phi) \st{X} \rho^\intercal = \fc(\Phi \circ \kpr)$.
	This further means that the QSOT of~\textcite{bai2025quantum} is, by Choi isomorphism, equivalent to the CP extension of the channel by the prior, and also to the Leifer--Spekkens QSOT~(see Section~\ref{Sec:LSFormalism}).

	\subsection{Geometry of Channel Distance Measures}
	The final application we present is a geometric interpretation of various distance measures defined over quantum channels. 
	For two quantum channels $\Phi, \Psi \in \mathrm{CPT} (\mathcal X, \mathcal Y)$, they can be defined as follows. 
	\begin{equation}
		\tilde{\mathrm{D}}(\Phi, \Psi) := \sup_{\rho \in \mathbb{D}_\mathcal X} \tilde {\mathrm{D}}_\rho(\Phi, \Psi) := \sup_{\rho \in \mathbb{D}_\mathcal X} \mathrm{D}((\mathrm{Id}_\mathcal X \otimes \Phi)(\Omega_\rho), (\mathrm{Id}_\mathcal X \otimes \Psi)(\Omega_\rho)),
	\end{equation}
	where $\mathrm{D}$ is a distance measure (a bona fide distance or a divergence) defined over positive semidefinite matrices and $\Omega_\rho  = \ketbra{\omega_\rho}{\omega_\rho}$ is the canonical purification of $\rho \in \dx$.  
	Some instances are diamond norm distance (when $\mathrm{D}$ is chosen to be the trace distance)~\cite{kitaev1997quantum, Watrous2018Theory}, entanglement-fidelity distance (when $\mathrm{D}$ is chosen to be Bures distance)~\cite{gilchrist2005distance, beny2010general, Schumacher1996sending}, and generalized channel divergences (when $\mathrm{D}$ is chosen to be a divergence satisfying DPI)~\cite{leditzky2018approaches}. 
	
	It is easy to see from the definition that all of these channel distances can be seen as maximizing the corresponding matrix distances between the Choi matrices of the CP-extensions of the channels over all possible reference priors:
	\begin{equation}
		\mathrm{\tilde D}_\rho(\Phi, \Psi) = \mathrm{D}(\fc(\Phi \circ \kpr), \fc(\Psi \circ \kpr)).
	\end{equation}
	
	Indeed, the Bures projection aspect also provides a geometric meaning to these distances: 
	\begin{equation}
		\mathrm{\tilde D}_\rho(\Phi, \Psi) = \mathrm{D}(\fc(\Phi) \st{X} \rho^\intercal , \fc(\Psi) \st{X} \rho^\intercal) = \mathrm{D}(\Pi_{\mathcal X, \rho^\intercal}[\fc(\Phi)], \Pi_{\mathcal X, \rho^\intercal}[\fc(\Psi)]). 
	\end{equation}
	Geometrically,  $\mrm{D}_\rho$ is the distance (divergence) between the Bures projections of $\fc(\Phi)$ and $\fc(\Psi)$ onto the spectrahedron with \textit{input marginal} $\rho^\ic$. 
	Thus, the distance measure $\mathrm{\tilde D}$ corresponds to maximally separating the Choi--Bures projections of $\Phi$ and $\Psi$ over all spectrahedra as defined by the priors.

	\section{Conclusion} \label{Sec:Conclusion}
	In this article, we considered the \textit{fidelity/Bures projection} problem, where we project, with respect to fidelity (equivalently, Bures distance), PSD matrices to various spectrahedra of interest in quantum information.
	These spectrahedra are defined as the PSD preimage of a given PSD matrix under a given quantum channel. 
	
	We demonstrated that the Bures projection is unique on the support of the input matrix, and for certain channels, this projection can be computed in a simple closed-form manner, including the important case of the partial trace. 
	We related the saturation of DPI for fidelity to the existence of the closed-form and provided simple sufficient conditions to check whether a projection problem can be solved by our closed-form, and then derived explicit solutions for a variety of channels, including partial trace, pinching maps, orthogonal-projective measurements, and the projection of PSD ensembles.
	
	We discussed the notion of \textit{prior–channel decompositions}, an intuitive and geometrically optimal way to (uniquely) decompose any CP map into a prior and a channel component. 
	This decomposition manifests in several of the applications discussed thereafter.
	
	Various applications of these projections are then discussed, with a particular emphasis on those related to partial trace. 
	These include novel geometric interpretations for the Petz recovery map, Pretty good measurement, Leifer-Spekkens QSOT, various distance measures between channels, and the \textit{minimal change} approach for quantizing the Bayes' theorem. 
	Due to the optimality of projections, our results provide yet another justification for useful tools like PGM, Petz map, and the \textit{partial normalization} operation. 
	
	Our results open up various open problems, some of which we list below. 
	\begin{itemize}
		\item Identifying additional classes of quantum channels for which the $\Gamma$ map provides a closed-form expression for the fidelity projection.
		\item Exploring concrete applications of projection techniques for channels beyond the partial trace, such as the pinching map.
		\item Applying the closed-form Bures projection to quantum process tomography, and extending the Bayesian state tomography framework of~\cite{Afham2022Quantum} to the process tomography setting.
	\end{itemize}
	
	\section{Acknowledgments}
	AA thanks Chris Ferrie, Roberto Rubboli, Francesco Buscemi, Valerio Scarani, and Ion Nechita for helpful discussions.
	AA also thanks Valerio Scarani and Ion Nechita for hosting him during research visits in which parts of this work were carried out.
	A part of this work is present in AA's Ph.D. thesis~\cite{Afham2025Thesis}, and he thanks the reviewers for helpful comments.
	AA and MT are supported by the NRF Investigatorship award (NRF-NRFI10-2024-0006).
		
	\printbibliography	

	\appendix
	\section*{Appendix}
\addcontentsline{toc}{section}{Appendix}

\renewcommand{\thesection}{\Alph{section}}

\section*{A$\quad$Choi matrix of the adjoint map}
\label{App:ChoiMatrixAdjoint}
\addcontentsline{toc}{subsection}{A. Choi matrix of the adjoint map}
We are interested in proving the following statement.
\begin{tbox}
\begin{prop}
    Let $\Phi \in \mathrm{HP}(\x, \y)$ be a Hermitian preserving map. 
    It holds that $\fc(\Phi^{\dagger}) = \fc(\Phi)^\ic$, where the transpose is taken in the basis used to define the Choi matrix.
\end{prop}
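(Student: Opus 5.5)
The plan is to compare the two Choi matrices entrywise in the fixed product basis. Write $\fc(\Phi) = \sum_{i,j} \ketbra{i}{j} \otimes \Phi(\ketbra{i}{j})$, so its matrix elements are
\begin{equation}
\bra{i,k}\fc(\Phi)\ket{j,l} = \bra{k}\Phi(\ketbra{i}{j})\ket{l} = \la \ketbra{k}{l}, \Phi(\ketbra{i}{j}) \ra ,
\end{equation}
with $i,j$ indexing a basis of $\x$ and $k,l$ a basis of $\y$. Symmetrically,
\begin{equation}
\bra{k,i}\fc(\Phi^\dagger)\ket{l,j} = \la \ketbra{i}{j}, \Phi^\dagger(\ketbra{k}{l}) \ra .
\end{equation}
Here the subsystem order is $\y\otimes\x$, since $\Phi^\dagger\in\mrm{LM}(\y,\x)$. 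The statement $\fc(\Phi^\dagger)=\fc(\Phi)^\ic$ should therefore be read with the implicit swap of tensor factors. This is the convention the paper uses when it writes $\Theta^\dagger(Y) = [\fc(\Theta)^\ic(I_\x\otimes Y)]_\x$. I will make that identification explicit at the start.

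Next I use the defining property of the adjoint:
\begin{equation}
\la \ketbra{i}{j}, \Phi^\dagger(\ketbra{k}{l})\ra = \la \Phi(\ketbra{i}{j}), \ketbra{k}{l}\ra = \overline{\la \ketbra{k}{l}, \Phi(\ketbra{i}{j})\ra} = \overline{\bra{k}\Phi(\ketbra{i}{j})\ket{l}} .
\end{equation}
This gives $\bra{k,i}\fc(\Phi^\dagger)\ket{l,j} = \overline{\bra{i,k}\fc(\Phi)\ket{j,l}}$, which says that $\fc(\Phi^\dagger)$ is the entrywise complex conjugate of $\fc(\Phi)$, up to the factor swap.

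The last step uses Hermitian preservation. For an HP map, $\fc(\Phi)$ is Hermitian: indeed
\begin{equation}
\fc(\Phi)^\dagger = \sum_{i,j}\ketbra{j}{i}\otimes\Phi(\ketbra{i}{j})^\dagger = \sum_{i,j}\ketbra{j}{i}\otimes\Phi(\ketbra{j}{i}) = \fc(\Phi),
\end{equation}
because $\Phi(X^\dagger)=\Phi(X)^\dagger$. For a Hermitian matrix, complex conjugation equals transposition, $\overline{M}=(M^\dagger)^\ic = M^\ic$. Hence the entrywise conjugate of $\fc(\Phi)$ is $\fc(\Phi)^\ic$, which yields the claim.

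The only delicate point is bookkeeping rather than mathematics. I need to keep the swap $\x\otimes\y \leftrightarrow \y\otimes\x$ and the index order straight, so that the conjugation step produces exactly the transpose and not the partial transpose. I will check this by verifying the resulting formula $\Theta^\dagger(Y)=[\fc(\Theta)^\ic(I_\x\otimes Y)]_\x$ on basis elements.
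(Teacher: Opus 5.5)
Your proof is correct, and it takes a different route from the paper's. The paper works at the level of maps. It defines $\Psi(Y) := [\fc(\Phi)^\ic(I_\x\otimes Y^\ic)]_\x$, the map whose (factor-swapped) Choi matrix is $\fc(\Phi)^\ic$. It then checks $\langle X,\Psi(Y)\rangle=\langle\Phi(X),Y\rangle$ for all $X,Y$ through a chain of trace identities: the adjoint of the partial trace, $\mathrm{Tr}[A^\ic B]=\mathrm{Tr}[AB^\ic]$, the inverse Choi formula, and $\Phi(X^\dagger)=\Phi(X)^\dagger$.

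You instead compare matrix entries on matrix units. This separates the hypotheses more cleanly. Your first step gives $\fc(\Phi^\dagger)=W\,\overline{\fc(\Phi)}\,W^\dagger$, with $W$ the swap $\x\otimes\y\to\y\otimes\x$, and this holds for every linear map. Hermitian preservation is used only to make $\fc(\Phi)$ Hermitian, which turns entrywise conjugation into transposition. Since $W$ is a real permutation, $W M^\ic W^\dagger=(WMW^\dagger)^\ic$, so the order of swap and transpose does not matter. Your route also makes the tensor-factor swap explicit, which the paper leaves implicit. The paper's route, in exchange, directly yields the reconstruction of $\Phi^\dagger$ from $\fc(\Phi)^\ic$, which is the form in which the identity is used later.

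One caution about your planned sanity check. The formula $\Theta^\dagger(Y)=[\fc(\Theta)^\ic(I_\x\otimes Y)]_\x$ that you quote from the preliminaries is missing a transpose on $Y$. For $\Theta=\mathrm{Id}$ it returns $Y^\ic$ rather than $Y$. The version consistent with $\fc^{-1}(M)(H)=[M(H^\ic\otimes I_{\mathcal K})]_{\mathcal K}$, and with the paper's own definition of $\Psi$ in the appendix, is $\Theta^\dagger(Y)=[\fc(\Theta)^\ic(I_\x\otimes Y^\ic)]_\x$. If your check against the quoted formula fails, that reflects the typo, not an error in your argument.
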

\end{tbox}
\begin{proof}
	Let $\fc(\Phi)^\intercal \in \mathbb H_{\mathcal X \otimes \mathcal Y}$ define the linear map $\Psi$ from $\mathbb M_\mathcal Y$ to $\mathbb M_\mathcal X$ with the action:
	\begin{equation}
		\Psi(Y) := [\fc(\Phi)^\intercal (I_\mathcal X  \otimes Y^\intercal )]_\mathcal X, 
	\end{equation}
	for any $Y \in \mathbb M_\mathcal Y$.
	We aim to show that $\Psi = \Phi^{\dagger}$, which is equivalent to showing that $\langle Y, \Phi(X) \rangle  = \langle \Psi(Y), X \rangle, $
	for all $X \in \mathbb M_\mathcal X$ and $Y \in \mathbb M_\mathcal Y$.
	Observe that
	
	\begin{equation}
		\begin{aligned}
			\left \langle X, \Psi(Y) \right\rangle &= \operatorname{Tr}\left[X^{\dagger} \Psi(Y)\right] \\ 
			&=  \operatorname{Tr}\left[[\fc(\Phi)^\intercal (I_\mathcal X  \otimes Y^\intercal )]_\mathcal X X^{\dagger}\right] & \text{definition of $\Psi(Y)$}\\
			&= \operatorname{Tr}\left[\fc(\Phi)^\intercal (I_\mathcal X \otimes Y^\intercal ) (X^{\dagger} \otimes I_\mathcal Y ) \right] & \text{adjoint of $\mathrm{Tr}_\mathcal Y$}\\
			&= \operatorname{Tr}\left[\fc(\Phi)^\intercal (X^{\dagger} \otimes Y^\intercal )\right] \\
			&= \operatorname{Tr}\left[\fc(\Phi) (\bar{X} \otimes Y)\right]  & \operatorname{Tr}[A^\intercal] = \operatorname{Tr}[A] \text{ for any $A$}\\
			&= \operatorname{Tr}\left[\fc(\Phi) (\bar{X} \otimes I_\mathcal Y) (I_\mathcal X  \otimes Y) \right] \\
			&= \operatorname{Tr}\left[[\fc(\Phi) (\bar{X} \otimes I_\mathcal Y)]_\mathcal Y Y \right] & \text{adjoint of tensoring $I_\mathcal X$} \\
			&= \operatorname{Tr}\left[[\fc(\Phi) ({(X^{\dagger})}^\intercal  \otimes I_\mathcal Y)]_\mathcal Y Y \right] \\
			&= \operatorname{Tr}[\Phi(X^{\dagger}) Y]\\
			&= \operatorname{Tr}[\Phi(X)^{\dagger} Y] & \text{$\Phi$ is Hermitian-preserving} \\
			&= \langle \Phi(X), Y \rangle,
		\end{aligned}
	\end{equation}
	for any $X \in \mathbb M_\mathcal X$ and $Y \in \mathbb M_\mathcal Y$, which implies $\Psi = \Phi^{\dagger}$.
    Here we used $\bar X$ to denote the complex conjugate of $X$.
	Thus, for any HP map $ \Phi$, we have $\fc(\Phi)^\intercal =\fc(\Phi^{\dagger})$ as claimed.
\end{proof}

\section*{B$\quad$Proof: restricted uniqueness of Bures projection} \label{App:ProjectionUniquenessCombined}
\addcontentsline{toc}{subsection}{B. Proof of the restricted uniqueness of fidelity projections}
\begin{tbox}
\begin{theorem}[Theorem \ref{Thm:ProjectionUniquenessCombined} restated]\label{AppThm:ProjectionUniquenessCombined}
		Let $(\Lambda, P, C)$ constitute a projection problem and let $Q_0, Q_1 \in \plc$ be arbitrary projections. 
		It holds that
        \begin{equation}
            P^0 Q_0 P^0 = P^0 Q_1 P^0,
        \end{equation}
        where $H^0$ denotes the orthogonal projector onto the support of $H$ for any Hermitian $H$. 
        Moreover, if $P$ is full-rank, then the projection is unique. 
	\end{theorem}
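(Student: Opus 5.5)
The plan is to exploit concavity of the fidelity on the convex feasible set $\Lambda^{-1}[C]$, together with the equality case of Uhlmann's theorem. Let $Q_0, Q_1 \in \plc[P]$ be two projections attaining the optimal value $f^\star := \mathrm{F}(P,Q_0) = \mathrm{F}(P,Q_1)$. Since $\Lambda^{-1}[C]$ is convex, the midpoint $Q_{1/2} := \tfrac12(Q_0+Q_1)$ is feasible. Concavity of $Q \mapsto \mathrm{F}(P,Q)$ then gives $\mathrm{F}(P,Q_{1/2}) \geq f^\star$, so equality holds and $Q_{1/2}$ is also optimal. The task is to show that equality in concavity forces $P^0 Q_0 P^0 = P^0 Q_1 P^0$.

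To analyse the equality case, I would work with purifications as in Proposition~\ref{Prop:PurificationStates}. Fix the canonical purification $|\omega_P\rangle$ of $P$. By Uhlmann's theorem, $\mathrm{F}(P,Q) = \max_{|q\rangle \in \mathrm{Pur}(Q)} \Re\langle \omega_P, q\rangle$.

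Concretely, I would use the variational form
\begin{equation}
\mathrm{F}(P,Q) = \max_{U} \Re\,\mathrm{Tr}\bigl[U\, Q^{\frac12} P^{\frac12}\bigr].
\end{equation}
Alternatively, there is the smoother route $\mathrm{F}(P,Q) = \mathrm{Tr}\sqrt{P^{\frac12} Q P^{\frac12}}$. The map $Q \mapsto P^{\frac12} Q P^{\frac12} =: T(Q)$ is linear, and $\mathrm{Tr}\sqrt{\cdot}$ is strictly concave on the PSD cone in the sense that $\mathrm{Tr}\sqrt{(A+B)/2} = \tfrac12(\mathrm{Tr}\sqrt A + \mathrm{Tr}\sqrt B)$ forces $A = B$. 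This follows from operator concavity of the square root: $\sqrt{(A+B)/2} \geq (\sqrt A + \sqrt B)/2$, so equality of traces gives equality of operators. Squaring both sides then gives $(A+B)/2 = (A + B + \sqrt A\sqrt B + \sqrt B \sqrt A)/4$, i.e. $(\sqrt A - \sqrt B)^2 = 0$, hence $A = B$.

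Applying this with $A = T(Q_0)$ and $B = T(Q_1)$ yields $P^{\frac12} Q_0 P^{\frac12} = P^{\frac12} Q_1 P^{\frac12}$. Multiplying on both sides by the pseudoinverse $P^{-\frac12}$ (taken on the support) gives $P^0 Q_0 P^0 = P^0 Q_1 P^0$. If $P$ is full rank, then $P^0 = I$ and the projection is unique.

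The main obstacle is the equality case of concavity of the trace of the square root. I would handle it through the operator-concavity step above: the difference $\sqrt{(A+B)/2} - (\sqrt A+\sqrt B)/2$ is PSD and has zero trace, so it must vanish. The remaining steps are the short squaring computation and the observation that the optimal value is attained, which holds because the feasible set is compact and the fidelity is continuous.
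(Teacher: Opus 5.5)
Your proof is correct, and it follows a genuinely different route from the paper's. The paper argues by cases. For full-rank $P$, it uses that convex combinations of projections are again projections, together with strict concavity of fidelity over full-rank arguments (cited from the literature). A separate, somewhat informal argument covers the situation where all projections are rank-deficient, by restricting to the span of their supports. The rank-deficient $P$ case is then reduced to the full-rank one via $\mathrm{F}(P,Q)=\mathrm{F}(P,P^0QP^0)$ and restriction to $\mathrm{supp}(P)$.

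You instead write $\mathrm{F}(P,Q)=\mathrm{Tr}\sqrt{T(Q)}$ with the linear map $T(Q)=P^{\frac12}QP^{\frac12}$, and prove the equality case of midpoint concavity of $\mathrm{Tr}\sqrt{\cdot}$ directly on the whole PSD cone. Operator concavity of the square root makes $\sqrt{(A+B)/2}-(\sqrt A+\sqrt B)/2$ PSD, and zero trace forces it to vanish. Squaring then gives $(\sqrt A-\sqrt B)^2=0$, hence $A=B$ because $\sqrt A-\sqrt B$ is Hermitian.

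This buys you three things:
\begin{itemize}
\item The argument is self-contained: you prove the strict concavity you need rather than citing it.
\item It imposes no rank condition on $Q_0,Q_1$, so the paper's case split on the ranks of the projections disappears.
\item The rank-deficient $P$ case comes for free. The only obstruction to uniqueness is the kernel of $T$, which is exactly $\{H : P^0HP^0=0\}$. This also explains conceptually why uniqueness can only be asserted on $\mathrm{supp}(P)$.
\end{itemize}
The paper's route, in exchange, stays closer to the generic principle ``strict concavity plus convexity gives a unique maximizer,'' which it reuses elsewhere (e.g.\ in the uniqueness part of Lemma~\ref{Lem:DPIimpliesProj}).

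Your opening paragraph on Uhlmann's variational formula and purifications is never used and can be dropped.
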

\end{tbox}
\begin{proof}
    We first prove it for the case where $P$ is full-rank.
    We split the (sub)proof into two cases: the case when there is at least a single full-rank projection, and the case where every projection is rank-deficient. 
    
    Consider the first case where there exists a full-rank projection $Q_0 \in \Pi_{\Lambda, C}[P]$. 
    To see that this implies the uniqueness of the projection, assume there exists another (possibly rank-deficient) projection $Q_1 \in \Pi_{\Lambda, C}[P]$. 
    Let $Q_t \equiv (1-t)Q_0 + t Q_1$ for $t \in [0,1]$ and observe that $Q_t \in \Pi_{\Lambda, C}[P]$ for all $t \in [0,1]$, as square-root fidelity is concave and the feasible set is convex, and thus any convex combination of projections is necessarily a projection.
    For any $t \in [0,1)$, it holds that $Q_t$ is full-rank, and thus there exist multiple (an infinite number of) full-rank projections. 
    This is, however, not possible due to the strict concavity\footnote{A function $f:\mathcal{S} \to \mathbb R$ over a convex set $\mathcal{S}$ is \textit{concave} if $f((1-t)x + ty) \geq (1-t)f(x) + t f(y)$ for all $x,y \in \mathcal{S}$ and $t \in [0,1]$. We say $f$ is \textit{strictly concave} if $f$ is concave and the above inequality is strict for all $t \in (0,1)$ and distinct $x,y \in \mathcal{S}$~\cite{Boyd2004}.} of fidelity over full-rank states~\cite{bhatia2018strong}. 
    Hence, any full-rank projection must necessarily be unique.  
    
    Now consider the case where all projections are rank-deficient. 
    If all the projections have the same support, then we can restrict ourselves to this support and again use the fact that fidelity is strictly concave over full-rank states.
    
    Thus, let us consider the case where there exist projections $Q_0, Q_1 \in \Pi_{\Lambda, C}[P]$ with distinct supports. 
    The support of a non-trivial convex combination $Q_t$ contains the supports of $Q_0$ and $Q_1$. 
    Restricting to this support would make $Q_t$ (for $t \in (0,1)$) effectively full-rank, and thus necessarily unique, 
    which implies that if $P$ is full-rank, then the projection problem has a unique solution.
    
    Now we consider the case where $P$ is rank-deficient. 
    Recall that fidelity can be restricted to the support in the following manner~\cite[Proposition 3.12]{Watrous2018Theory}:
    \begin{equation}
        \mrm{F}(A,B) = \mrm{F}(A,A^0B A^0) = \mrm{F}(B^0 A B^0, B).  
    \end{equation}  
    Let $Q_0, Q_1 \in \plc$ be distinct projections. 
    It holds that
    \begin{equation}
        \mrm{F}(P, P^0Q_0P^0) = \mrm{F}(P, Q_0) = \mrm{F}(P, Q_1) = \mrm{F}(P, P^0Q_1P^0), 
    \end{equation}
    where the second equality follows from the fact that $Q_0$ and $Q_1$ are both projections.
    Restricting ourselves to $\mrm{supp}(P)$, we have $P$ to be effectively full-rank and thus the Bures projection (in $\mrm{supp}(P)$) must be unique by the previous part of this proof, whence the claim follows.   
\end{proof}

\end{document}